\newcommand{\ie}{\mathrm{i.e.}}
\newtheorem{theorem}{Theorem}
\newtheorem{proposition}{Proposition}
\newtheorem{lemma}{Lemma}
\newtheorem{remark}{Remark}
\newtheorem*{remark*}{Remark}
\newtheorem*{remarks*}{Remarks}
\newcommand{\bbR}{{\mathbb R}}
\newcommand{\bbC}{{\mathbb C}}
\newcommand{\bbN}{{\mathbb N}}
\newcommand{\Hone}{H^{1}({\mathbb R}^N; {\mathbb C})}
\newcommand{\HmOne}{H^{-1}({\mathbb R}^N; {\mathbb C})}
\newcommand{\cL}{{\mathcal L}}
\newcommand{\cM}{{\mathcal M}}
\newcommand{\cN}{{\mathcal N}}
\newcommand{\cE}{{\mathcal E}}
\newcommand{\cT}{{\mathcal T}}
\newcommand{\cR}{{\mathcal R}}
\newcommand{\cC}{{\mathcal C}}
\newcommand{\cF}{{\mathcal F}}
\newcommand{\esssup}{\mathrm{ess}-\mathrm{sup}}
\newcommand{\bx}{{\mathbf x}}
\newcommand{\ba}{{\mathbf a}}
\newcommand{\bv}{{\mathbf v}}
\newcommand{\bk}{{\mathbf k}}
\newcommand{\ta}{{\tilde{\mathbf a}}}
\newcommand{\tv}{{\tilde{\mathbf v}}}
\newcommand{\oa}{{\overline{\mathbf a}}}
\newcommand{\ov}{{\overline{\mathbf v}}}
\newcommand{\bl}{{\mathbf l}}
\renewcommand{\Re}{{\mathrm{Re}}}
\renewcommand{\Im}{{\mathrm{Im}}}
\begin{document}

\title[]{Stochastic acceleration of solitons for the nonlinear Schr\"odinger equation}

\author[]{Walid K. Abou Salem$^{1,*}$ and Catherine Sulem$^{1,**}$}

\address{$^1$Department of Mathematics, University of Toronto, Toronto, Ontario, Canada M5S 2E4 \\ E-mail: walid@math.utoronto.ca; sulem@math.utoronto.ca.} 
\address{$^*$ Partially supported by NSERC grant NA 7901.}
\address{$^{**}$ Partially supported by NSERC grant 46179-05.}
\maketitle

\begin{abstract}
The effective dynamics of solitons for the generalized nonlinear Schr\"odinger equation in a  random potential is rigorously studied.  It is shown that when the external potential varies slowly in space compared to the size of the soliton, the dynamics of the center of the soliton is almost surely described by Hamilton's equations for a classical particle in the  random potential, plus error terms due to radiation damping. Furthermore, a limit theorem for the dynamics of the center of mass of the soliton in the weak-coupling  and space-adiabatic limit is proven in two and higher dimensions: Under certain mixing hypotheses for the potential, the momentum of the center of mass of the soliton converges in law to a diffusion process on a sphere of constant momentum. Moreover, in three and higher dimensions, the trajectory of the center of mass of the soliton converges to a spatial Brownian motion. 
\end{abstract}

\section{Introduction}
\subsection{Overview of earlier results and heuristic discussion}

In the last few years, there has been substantial progress in  rigorously understanding solitary wave dynamics for the nonlinear Schr\"odinger equation in slowly varying potentials (or in the presence of small rough perturbations), see \cite{BJ1}-\cite{HZ2}. The basic picture is that in the space-adiabatic limit, or in the presence of small perturbations, the long-time dynamics of the center of the soliton is described by Hamilton's (or Newton's) equations in an effective potential that corresponds to the restriction of the external potential to the soliton manifold, plus error terms due to radiation damping.  

In the above cited work, a soliton behaves like a classical point particle over certain scales. The main 
question that we address is whether this analogy between solitons and point particles over a certain 
temporal and spatial scale still holds when the soliton is moving in the field of a random potential. We 
show that the answer to the above question is affirmative. Furthermore, we have strong enough explicit control over the  
long-time dynamics of the soliton in the  random potential that allows us to look at the limiting 
dynamics over large-distances and long-times. 
In a certain weak-coupling /space-adiabatic limit, and under some mixing hypotheses for the random 
potential, we show that the long-time, large-distance  behavior of the soliton is described by momentum 
diffusion in $N\ge 2$ dimensions: The soliton center of mass undergoes Brownian motion on the energy sphere 
of constant momentum. 
This is analogous to the long-time/large-distance behavior of the a classical particle in a random 
potential, \cite{KP} - \cite{DGL}. Moreover, in dimensions $N\ge 3,$ the long-time limit of a momentum 
diffusion is a spatial Brownian motion, \cite{KR2}.  We also show this spatial diffusive behavior for the soliton center of mass. We note that the weak-coupling and space-adiabatic limit are taken {\it simultaneously}. This is more difficult than taking the semi-classical limit first, and then the weak-coupling limit. 

This diffusive motion of the center of mass has been observed numerically in \cite{B-SDM,DiMen} for the NLS equation with power nonlinearity and white  multiplicative noise; see also \cite{BK1} for a discussion of a   concrete experiment where such a dynamics may be observed. For the derivation of the NLS equation in the mean-field limit of interacting bosons in a random potential, we refer the reader to \cite{A-S3}.

We note that a somewhat related problem arises in the semi-classical limit of the dynamics of a quantum particle in a random field. It is shown by Erd\"os, Salmhofer and Yau in \cite{ESY} that the semi-classical/weak-coupling limit of the dynamics of a quantum particle in a  random potential displays a  diffusive behavior of the energy density of solutions of the {\it linear} Schr\"odinger equation. 
Another related problem is that of localization for the nonlinear Schr\"odinger equation, which has been studied recently by Bourgain and Wang in \cite{BW1}.

\subsection{Description of the problem}\label{sec:Description}

In what follows, we consider the probability triple $(\Omega, {\mathcal F}, {\mathbb P}),$ such that the probability space $\Omega$ has a generic point $\omega$ and is endowed with measure $\mu.$ For a measurable and integrable function $f$ on $\Omega,$ we define the expectation value of $f$ as ${\mathbb E}(f):= \int f(\omega) \mu(d\omega).$

In this paper, we study the long-time dynamics of solitary wave solutions for the generalized nonlinear Schr\"odinger equation in a  random potential
\begin{equation}
\label{eq:NLSE}
i\partial_t \psi  (\bx,t) = (-\Delta + \lambda V_h(\bx,t;\omega))\psi  (\bx,t) -f(\psi  (\bx,t)),
\end{equation}
where $\omega\in \Omega,$ $\bx\in {\mathbb R}^N$ denotes a point in the configuration space, $t\in {\mathbb R}$ is time, $\partial_t = \frac{\partial}{\partial t}, \Delta = \sum_{j=1}^N \frac{\partial^2}{\partial{x_j}^2}$ the $N$-dimensional Laplacian, $V_h$ is the  random potential, which is a measurable real function on ${\mathbb R}^N\times {\mathbb R}\times\Omega\rightarrow {\mathbb R}$ satisfying
\begin{equation*}
V_h(\bx,t;\omega)\equiv V(h\bx,t;\omega),\; h\in(0,1],
\end{equation*}
$\lambda\in [0,1]$ is a coupling constant, 
and the nonlinearity $f$ is a mapping on complex Sobolev spaces, 
\begin{equation*}
f:H^1({\mathbb R}^N; {\mathbb C}) \rightarrow H^{-1}({\mathbb R}^N; {\mathbb C}), 
\end{equation*}
with $f(0)=0,$ and $\overline{f(\psi)}=f(\overline{\psi}),$ where $\overline{\cdot}$ denotes complex conjugation.  

Typical focusing nonlinearities are local ones
\begin{equation}
\label{eq:LocalNL}
f(\psi)= |\psi|^s \psi,   \ \ 0<s<\frac{4}{N}.
\end{equation}

When $\lambda=0,$ the NLS equation (\ref{eq:NLSE}) with nonlinearity as given above admits solitary wave solutions, which are stable  spherically symmetric and positive solutions
\begin{equation}
 \eta_{\sigma}(\bx ,t):= e^{i(\frac{1}{2}\bv\cdot(\bx - \ba)+\gamma)}\eta_\mu(\bx - \ba), \label{eq:Sol}
\end{equation}
where $\sigma:=(\ba,\bv,\gamma,\mu)$, 
$\ba=\bv t+\ba_0$, $\gamma=\mu t +  \frac{\|\bv\|^2}{4}t+ \gamma_{0}$, with $\gamma_0 \in 
[0,2\pi)$, $\ba_0,\bv\in\bbR^N,$ 
$\mu\in\mathbb{R}^{+}$, 
and $\eta_\mu$ is a positive solution of the 
nonlinear eigenvalue problem
\begin{equation}
(-\Delta + \mu )\eta_\mu -f(\eta_\mu)=0.
\end{equation}
The solution (\ref{eq:Sol}) stands for a soliton  traveling  with velocity $\bv,$ center of mass position $\ba,$ and phase $\frac{1}{2}\bv\cdot(\bx-\ba)+\gamma.$ The {\it size} of the soliton is $\propto \mu^{-1/2},$ in the sense that $\eta_\mu \sim e^{-\sqrt{\mu}\|\bx\|}$ as $\|\bx\|\rightarrow\infty,$ see for example \cite{Ca1} and \cite{SS1}. 

We consider in this paper an external potential whose realizations are $\mathbb{P}$-a.s. varying slowly in space compared to the size of the soliton, $i.e.,$ 
\begin{equation*}
\esssup_{\omega \in\Omega, \; t\in \bbR, \; \bx\in\bbR^N} \frac{|\nabla V(\bx,t;\omega)|}{\sqrt{\mu}}\ll 1,
\end{equation*}
which corresponds to the space-adiabatic limit ($h\ll 1$) if we set the size of the soliton to $O(1).$ Furthermore, we assume that the realizations of the external potential satisfy 
\begin{equation}
\label{eq:Pot1}
V_h \in W^{1,\infty}(\bbR, C^2(\bbR^N)\cap W^{2,\infty}(\bbR^N)) \; {\mathbb P} \mathrm{- a.s.},
\end{equation}
i.e., there exists $\overline{\Omega}\subset \Omega$ with $\mu(\overline{\Omega})=1,$ such that $$V_h, \partial_t V_h \in L^\infty (\bbR, C^2(\bbR^N)\cap W^{2,\infty}(\bbR^N)) \; \mathrm{for} \; \omega\in\overline{\Omega}.$$

We are interested in the dynamics of a single soliton in the external potential. We assume that there exists $\sigma_0 =(\ba_0,\bv_0,\gamma_0,\mu_0)$ such that 
\begin{equation}
\label{eq:InitialCond}
\|\psi(t=0) - \eta_{\sigma_0}\|_{H^1} \le C_0 h,
\end{equation}
for some positive constant $C_0,$ where $h$ appears in (\ref{eq:NLSE}).
Theorem \ref{th:Main1} below is a rigorous result in the space-adiabatic limit for the long-time dynamics of the soliton for the special class of nonlinearities and potential discussed above. A more general result, Theorem \ref{th:Main3}, is stated in Section \ref{sec:MainResults} after listing general assumptions, and is proven in Section  \ref{sec:ProofMain3}.

\begin{theorem}\label{th:Main1}
Consider the NLS equation (\ref{eq:NLSE}) with nonlinearity $f$ given by (\ref{eq:LocalNL}) and potential $V_h$ satisfying (\ref{eq:Pot1}) , and suppose the initial condition satisfies (\ref{eq:InitialCond}). Then there exists $h_0>0$ and positive constants $C$ and $\overline{C},$ such that, for all $h\in (0,h_0)$ and for any fixed $\epsilon\in (0,1),$  we have that 
\begin{equation*}
\sup_{t\in [0,\overline{C}\epsilon |\log h| /\lambda h )}{\mathbb E}[\|\psi - \eta_{\sigma(t)}\|_{H^1}] \le C h^{1-\frac{\epsilon}{2}},
\end{equation*}
uniformly in $\lambda\in (h^{1-\epsilon},1],$ and  the parameters $\sigma(t) = (\ba(t),\bv(t),\gamma(t),\mu(t))$ satisfy the system of effective  equations 
\begin{align*}
&\partial_t \ba = \bv + O(h^{2-\epsilon})\\
&\partial_t \bv = -2\lambda\nabla V_h(\ba,t;\omega) + O(h^{2-\epsilon})\\
&\partial_t\gamma = \mu + \frac{1}{4}\|\bv\|^2 - V_h(\ba,t;\omega)  + O(h^{2-\epsilon})\\
&\partial_t\mu = O(h^{2-\epsilon}), 
\end{align*}
with initial condition satisfying
\begin{equation*}
\|\ba(0)-\ba_0\|, \|\bv(0)-\bv_0\|,|\gamma(0)-\gamma_0|, |\mu(0)-\mu_0| = O(h).
\end{equation*}

\end{theorem}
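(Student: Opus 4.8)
The plan is to combine modulation theory with a modulated-energy (Lyapunov) estimate, working pathwise in $\omega$ and passing to the expectation only at the end, since (\ref{eq:Pot1}) furnishes $\mathbb{P}$-almost-sure uniform bounds on $V_h$, $\nabla V_h$ and $\partial_t V_h$. First I would decompose the solution as $\psi(\bx,t) = \eta_{\sigma(t)}(\bx) + w(\bx,t)$, with modulation parameters $\sigma(t)=(\ba(t),\bv(t),\gamma(t),\mu(t))$ and fluctuation $w$. To fix this splitting uniquely I would impose symplectic orthogonality of $w$ to the four-dimensional family of zero-modes tangent to the soliton manifold --- those generated by the phase, translation, Galilean-boost and mass-scaling symmetries of the $\lambda=0$ flow, namely $i\eta_\sigma,\ \partial_\ba\eta_\sigma,\ \partial_\bv\eta_\sigma,\ \partial_\mu\eta_\sigma$. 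The implicit function theorem, together with the closeness (\ref{eq:InitialCond}) of the data to $\eta_{\sigma_0}$, guarantees that the decomposition exists and is unique as long as $\|w\|_{H^1}$ remains small, with $\|w(0)\|_{H^1}=O(h)$.

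Second, projecting (\ref{eq:NLSE}) onto the four zero-modes yields the modulation system for $\dot\sigma$. The essential mechanism is that the realizations of $V_h$ vary slowly: by the chain rule $\nabla_\bx V_h=h(\nabla V)(h\bx,\cdot)$ and $D^2_\bx V_h=h^2(D^2V)(h\bx,\cdot)$, so a Taylor expansion of $V_h(\bx,t;\omega)$ about $\bx=\ba$ isolates the leading contributions once one uses the spherical symmetry and exponential localization of $\eta_\mu$. The constant term $V_h(\ba,t;\omega)$ drives the phase equation; the gradient $\nabla V_h(\ba,t;\omega)$ produces the force in the velocity equation --- the factor $-2$ reflecting the relation $\mathcal{P}(\eta_\sigma)=\tfrac12\bv\,\cN(\eta_\sigma)$ between the momentum and velocity of the soliton --- while the quadratic Taylor remainder is $O(\lambda h^2)$ and the terms quadratic in $w$ are $O(\|w\|_{H^1}^2)$. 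Both of the latter are absorbed into the claimed $O(h^{2-\epsilon})$ error once the bootstrap establishes $\|w\|_{H^1}=O(h^{1-\epsilon/2})$, and this recovers exactly the stated effective system.

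Third, to control $w$ I would introduce the modulated Lyapunov functional $\Lambda(t):=\cE(\psi)+\mu(t)\cN(\psi)-\tfrac12\bv(t)\cdot\mathcal{P}(\psi)$ measured relative to $\eta_{\sigma(t)}$, where $\cE,\cN,\mathcal{P}$ are the energy, mass and momentum of the unperturbed flow. Its second variation is the linearized Hamiltonian about $\eta_{\sigma(t)}$, and the orbital-stability theory for subcritical powers $0<s<4/N$ (Weinstein; Grillakis--Shatah--Strauss) gives coercivity of this quadratic form on the symplectic-orthogonal complement of the zero-modes, so that $\|w\|_{H^1}^2\lesssim\Lambda(t)$. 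Differentiating $\Lambda$ in time, the conservation laws of the $\lambda=0$ flow annihilate the principal terms, leaving a source governed by the explicit time-dependence of $V_h$, the slow drift of $\sigma$, and the potential Taylor remainder; a careful accounting produces the differential inequality $\tfrac{d}{dt}\Lambda\lesssim \lambda h\,\Lambda+\lambda h^2\sqrt{\Lambda}$, i.e. linear growth with rate proportional to $\lambda h$.

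Finally, a continuity/bootstrap argument closes the estimate. From $\Lambda(0)=O(h^2)$ and Gronwall one obtains $\|w(t)\|_{H^1}\lesssim h\,e^{K\lambda h t}$, so on $t\in[0,\overline C\epsilon|\log h|/(\lambda h))$ one has $\|w(t)\|_{H^1}\lesssim h^{\,1-K\overline C\epsilon}$; choosing $\overline C=1/(2K)$ yields the claimed bound $h^{1-\epsilon/2}$ and simultaneously justifies a posteriori the smallness of $w$ used throughout. Since every constant is deterministic and $\omega$-independent on $\overline\Omega$, the pathwise bound survives the expectation $\mathbb{E}[\,\cdot\,]$. The restriction $\lambda>h^{1-\epsilon}$ enters precisely so that the force $\sim\lambda h$ dominates the intrinsic $O(h^{2-\epsilon})$ error, keeping the weak-coupling and space-adiabatic limits genuinely coupled. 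I expect the main obstacle to be controlling the growth of $w$ uniformly over the logarithmically long time and uniformly in $\lambda\in(h^{1-\epsilon},1]$: one must ensure that the secular contributions from the drifting parameters $\bv$ and $\mu$ neither spoil the coercivity constant nor push the growth rate beyond $\lambda h$, and that the Taylor remainders of the slowly varying random potential are genuinely $O(h^2)$ uniformly in $\omega$ --- which is exactly where the $W^{2,\infty}$ regularity in (\ref{eq:Pot1}) is used.
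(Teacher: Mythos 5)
Your proposal follows essentially the same route as the paper's proof: a skew-orthogonal (symplectic) modulation decomposition, projection onto the tangent space of the soliton manifold (note it is $2N+2$-dimensional, not four-dimensional, though this is immaterial) to derive the effective equations with $O(\lambda h^2+\|w\|_{H^1}^2)$ errors, and a modulated-energy Lyapunov functional whose coercivity on the symplectically orthogonal complement controls $\|w\|_{H^1}$, with the restriction $\lambda>h^{1-\epsilon}$ used exactly as you say to keep $\|w\|_{H^1}^2\lesssim h^{2-\epsilon}$ dominated by $\lambda h$. The only cosmetic difference is that you close the estimate by a continuous Gronwall inequality, whereas the paper iterates a bootstrap bound over $O(\epsilon|\log h|)$ time intervals of length $O(1/(\lambda h))$, each multiplying the bound by a fixed constant --- a discretized version of the same geometric-growth mechanism yielding the same $h^{1-\epsilon/2}$ bound on the logarithmic time horizon.
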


We now discuss a limit theorem for the dynamics of the center of mass of the soliton moving in a {\it time-independent} and {\it strongly mixing} random potential.  Suppose that  
\begin{equation}
\label{eq:TimeIndepPot1}
V_h(\bx,t;\omega) \equiv \overline{V}(h\bx; \omega),  \ \ \omega\in\Omega .
\end{equation}

Performing the simple scaling 
\begin{align*}
&\overline{\ba}:= h\ba \\
&\overline{\bv}:= \bv\\
&\overline{t}:= ht,
\end{align*}
it follows from Theorem \ref{th:Main1} that, for $\overline{t}\in [0,\overline{C}\epsilon |\log h|/\lambda ),$ 
\begin{align}
\label{eq:RescaledCMDyn}
&\partial_{\overline{t}}\overline{\ba}=\ov + O(h^{2-\epsilon})\\
&\partial_{\overline{t}}\overline{\bv}=-2\lambda \nabla \overline{V}(\overline{\ba}; \omega) + O(h^{1-\epsilon}),
\end{align}
with initial condition 
\begin{equation}
\label{eq:RescaledInitCond2}
\|\oa(0) - h\ba_0\| = O(h^2), \; \|\ov(0)-\bv_0\|=O(h),
\end{equation}
which are, up to small remainder terms, the equations of motion of a classical particle in a random potential. This motivates introducing an auxiliary stochastic process $(\tilde{\ba}(\overline{t}),\tilde{\bv}(\overline{t}))_{\overline{t}\ge 0}$ that is given by 
\begin{align}
&\partial_{\overline{t}} \tilde{\ba} = \tilde{\bv} \label{eq:Res2a1}\\
&\partial_{\overline{t}} \tilde{\bv} = -2\lambda \nabla \overline{V}(\tilde{\ba};\omega),\label{eq:Res2v1}
\end{align}
with initial condition 
\begin{equation}
\label{eq:Res2InitCond1}
\tilde{\ba}(0)= 0, \ \ \tilde{\bv}(0)=\bv_0.
\end{equation}
This process corresponds to a classical particle with Hamiltonian
$$H_{cl}(\tilde{\ba},\tilde{\bv}) = \frac{\|\tilde{\bv}\|^2}{2} + 2\lambda \overline{V}(\tilde{\ba}).$$ The dynamics of a classical particle in a weak, strongly mixing and homogeneous random potential has been studied extensively in the literature, see \cite{KP}-\cite{KR2}.

We assume that $\overline{V}: \bbR^N\times\Omega \rightarrow \bbR$ is measurable such that it is strictly stationary in space, i.e., the laws of $(\overline{V}(\bx_1+\bx_0)\cdots \overline{V}(\bx_n+\bx_0))$ and $(\overline{V}(\bx_1)\cdots \overline{V}(\bx_n))$ are the same for all $n\in {\mathbb N}$ and $\bx_0\in\bbR^N.$ We also assume that 
\begin{equation}
\label{eq:TimeIndepPot2}
{\mathbb E}(\overline{V})=0,
\end{equation}
and that the realizations of the potential 
\begin{equation}
\label{eq:TimeIndepPot3}
\overline{V} \in C^2(\bbR^N)\cap W^{2,\infty}(\bbR^N), \ \ {\mathbb P}-\mathrm{a.s.}.
\end{equation}


Furthermore, we assume that $\overline{V}$ is strongly mixing in the uniform sense. For any $r>0,$ let ${\mathcal C}_r^i$ and ${\mathcal C}_r^e$ be the $\sigma$-algebras generated by the random variables $\overline{V}(\bx),$ $\bx\in B_r:= \{\bx\in\bbR^N, \ \ \|\bx\| <r\}$ and $\bx\in B_r^c$ respectively, where $B_r^c$ is the complement of the open ball $B_r$ in $\bbR^N.$ We define the uniform mixing coefficient between the $\sigma$-algebras   ${\mathcal C}_r^i$ and ${\mathcal C}_r^e$  as 
\begin{equation}
\varphi(\rho) = \sup\{|{\mathbb P}(A)-{\mathbb P}(B|A)|, \ \ r>0, \; A\in {\mathcal C}_r^i, \; B\in {\mathcal C}_{r+\rho}^e\}, \rho>0.
\end{equation}
We assume that $\varphi (\rho)$ decays faster than any power of $\rho,$ i.e., for all $p>0,$
\begin{equation}
\label{eq:TimeIndepPot5}
\sup_{\rho\ge 0} \rho^p \varphi(\rho) <\infty.
\end{equation}

We introduce the two-point spatial correlation function corresponding to the random potential
\begin{equation}
\label{eq:CorrFunct}
R(\ta):= 4{\mathbb E}(\overline{V}(\ta)\overline{V}(0)),
\end{equation}
with Fourier transform $\widehat{R}(\tv).$ We assume that 
\begin{equation}
R\in C^\infty (\bbR^N),
\end{equation}
such that 
\begin{equation}
\label{eq:SmoothCorr}
\widehat{R} \; \mathrm{does}\; \mathrm{not}\; \mathrm{vanish}\; \mathrm{identically} \; \mathrm{on}\; \mathrm{any}\; H_{{\mathbf p}}:=\{\tv\in\bbR^N, \; \tv\cdot {\mathbf p} = 0\}, {\mathbf p}\in\bbR^N.
\end{equation}
 
In order to observe diffusive behavior of the soliton dynamics for all ``macroscopic'' times 
$1/\lambda^2 < \overline{C}\epsilon |\log h|/\lambda,$ as the coupling constant $\lambda\rightarrow 0$ and $h\rightarrow 0,$ we need that $|\log h| \lambda\rightarrow \infty$ as $h\rightarrow 0.$ This is satisfied, for example, if $\lambda=\frac{1}{|\log h|^{1-\alpha}},$ for some $\alpha\in (0,1).$ 

We now discuss the limit process as $\lambda\rightarrow 0$ and $h\rightarrow 0.$ Let 
$(\underline{\bv}(\overline{t}))_{\overline{t}\ge 0}$ be a diffusion process starting at $\underline{\bv}(\overline{t}=0)=\bv_0,$ where $\bv_0$ appears in (\ref{eq:InitialCond}), with generator ${\mathcal L}$ whose action on $u\in C_0^\infty (\bbR^d\backslash \{0\})$ is given by 
\begin{equation}
\label{eq:Diff1}
{\mathcal L} u(\underline{\bv}) = \sum_{i,j=1}^N \partial_{\underline{v}_i}(D_{ij}(\underline{\bv})\partial_{\underline{v}_j} u(\underline{\bv})).
\end{equation}
Here $D_{ij}$ is the diffusion matrix given by 
\begin{equation}
\label{eq:DiffMatrix}
D_{ij}(\bk):= -\frac{1}{2\|\bk\|} \int_{-\infty}^\infty \partial_{x_i}\partial_{x_j} R(s \frac{\bk}{\|\bk\|}) ds, \ \ \bk\in\bbR^N.
\end{equation}
We have the following theorem on the convergence of the momentum of the center of mass of the soliton to a Brownian motion on a sphere of constant momentum. A stronger result stated for more general nonlinearities, Theorem \ref{th:Main4} in Section  \ref{sec:MainResults}, will be proven in Section  \ref{sec:ProofMain4}. 

\begin{theorem}\label{th:Main2}
Consider the NLS equation (\ref{eq:NLSE}) in $N\ge 2$ dimensions with nonlinearity $f$ given by (\ref{eq:LocalNL}) and potential $V$ satisfying (\ref{eq:TimeIndepPot1})-(\ref{eq:SmoothCorr}). Suppose the initial condition satisfies (\ref{eq:InitialCond}) with $\|\bv_0\|\ne 0,$ and that there exists $\tilde{\alpha}>0$ such that the coupling constant $\lambda\rightarrow 0$ as $h\rightarrow 0$ with $|\log h| \lambda^{3/2+\tilde{\alpha}}\rightarrow \infty.$   Then, for any fixed and finite $T >0,$ the stochastic process $(\lambda^2 \overline{\ba}(\overline{t}/\lambda^2), \overline{\bv}(\overline{t}/\lambda^2))_{\overline{t}\in(0,T )},$ satisfying (\ref{eq:RescaledCMDyn}) - (\ref{eq:RescaledInitCond2}), converges weakly (in law), as $\lambda, h\rightarrow 0,$ to $(\int_0^{\overline{t}} \underline{\bv}(s)ds, \underline{\bv}(\overline{t}))_{\overline{t}\in(0,T )},$ where $\underline{v}$ satisfies (\ref{eq:Diff1}).
\end{theorem}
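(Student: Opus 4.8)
The plan is to transfer the weak limit from the auxiliary Hamiltonian process $(\ta,\tv)$ of (\ref{eq:Res2a1})--(\ref{eq:Res2v1}) to the genuine soliton trajectory $(\oa,\ov)$, exploiting that both are driven by the \emph{same} realization of $\overline{V}$ and hence live on a common probability space. First I would invoke the known weak-coupling diffusion theorem for a classical particle in a homogeneous, mean-zero, uniformly mixing random potential (\cite{KP}--\cite{KR2}): under stationarity, the fast decay (\ref{eq:TimeIndepPot5}), the regularity (\ref{eq:TimeIndepPot3}) and the nondegeneracy (\ref{eq:SmoothCorr}) of $\widehat R$, this gives that $(\lambda^2\ta(\overline t/\lambda^2),\tv(\overline t/\lambda^2))_{\overline t\in(0,T)}$ converges in law in $C([0,T];\bbR^{2N})$, as $\lambda\to0$, to $(\int_0^{\overline t}\underline{\bv}(s)\,ds,\underline{\bv}(\overline t))$ with $\underline{\bv}$ generated by $\cL$ of (\ref{eq:Diff1})--(\ref{eq:DiffMatrix}). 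Conservation of $H_{cl}$ together with $\lambda\to0$ freezes $\|\tv\|$ in the limit, and since $D_{ij}(\bk)\bk_j=0$ (the $s$-integral in (\ref{eq:DiffMatrix}) is an exact derivative that integrates to zero) the limiting diffusion is tangent to spheres, i.e.\ a Brownian motion on $\{\|\underline{\bv}\|=\|\bv_0\|\}$; here $\|\bv_0\|\neq0$ keeps the limit away from the singularity of $D_{ij}$ at the origin, consistent with $\cL$ acting on $C_0^\infty(\bbR^N\setminus\{0\})$.

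The heart of the matter is then to show that the genuine trajectory stays close to the auxiliary one over the whole diffusive window. By Theorem \ref{th:Main1} the rescaled equations (\ref{eq:RescaledCMDyn}) hold $\mathbb P$-a.s.\ for $\overline t\in[0,\overline C\epsilon|\log h|/\lambda)$, and this window covers $[0,T/\lambda^2]$ precisely when $|\log h|\lambda\to\infty$. Subtracting (\ref{eq:RescaledCMDyn}) from (\ref{eq:Res2a1})--(\ref{eq:Res2v1}) and using that $\nabla\overline{V}$ is Lipschitz with constant $L=\|\overline{V}\|_{W^{2,\infty}}$ ($\mathbb P$-a.s.\ finite by (\ref{eq:TimeIndepPot3})), the differences $A:=\|\oa-\ta\|$, $B:=\|\ov-\tv\|$ obey $A'\le B+O(h^{2-\epsilon})$ and $B'\le 2\lambda L\,A+O(h^{1-\epsilon})$, with $A(0),B(0)=O(h)$ coming from (\ref{eq:RescaledInitCond2}) and (\ref{eq:Res2InitCond1}). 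A Gronwall estimate for this linear system yields
\begin{equation*}
\sup_{0\le s\le T/\lambda^2}\big(A(s)+B(s)\big)\ \lesssim\ h^{1-\epsilon}\lambda^{-1/2}\exp\!\big(c\,\lambda^{-3/2}\big),
\end{equation*}
the exponent $\sqrt{2L}\,T\lambda^{-3/2}$ being exactly the Lyapunov amplification of the unstable Hamiltonian flow over a time $T/\lambda^2$.

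Next I would check that this bound tends to $0$ under the hypothesis $|\log h|\lambda^{3/2+\tilde{\alpha}}\to\infty$. Writing it as $\lambda^{-1/2}\exp\!\big(c\lambda^{-3/2}-(1-\epsilon)|\log h|\big)$, the assumption gives $\lambda^{-3/2}\ll|\log h|^{1-\delta}$ with $\delta=\tilde{\alpha}/(3/2+\tilde{\alpha})>0$, so the exponent diverges to $-\infty$ and dominates the subexponential prefactor; hence $\sup_{[0,T/\lambda^2]}(A+B)\to0$ in probability, localizing on the event $\{\|\overline{V}\|_{W^{2,\infty}}\le M\}$ (whose probability is made arbitrarily close to $1$) to absorb the randomness of $L$. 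In the final scaling this gives $\sup_{\overline t\in(0,T)}\big(\lambda^2\|\oa-\ta\|+\|\ov-\tv\|\big)(\overline t/\lambda^2)\to0$ in probability; the $O(h)$ mismatch in the initial data is harmless, being amplified by the same exponentially small factor (and, in the spatial slot, further multiplied by $\lambda^2$). Since by the first step the auxiliary process converges in law in $C([0,T])$ and differs from the true one by a quantity tending to $0$ in probability in the sup metric on a common space, the converging-together lemma yields the same weak limit for $(\lambda^2\oa(\overline t/\lambda^2),\ov(\overline t/\lambda^2))_{\overline t\in(0,T)}$, which is the assertion.

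The principal obstacle is exactly the exponential instability above: a naive pathwise comparison over the diffusive time $1/\lambda^2$ amplifies errors by $\exp(c\lambda^{-3/2})$, which no fixed power of $h$ alone could control. What rescues the argument is that Theorem \ref{th:Main1} supplies an error of size $O(h^{1-\epsilon})$ \emph{over a window of length $|\log h|/\lambda$} --- that is, exponentially small in the logarithmic lifetime --- and the precise exponent $3/2+\tilde{\alpha}$ is calibrated so that $c\lambda^{-3/2}$ is beaten by $(1-\epsilon)|\log h|$. I expect the two remaining delicate points to be (i) verifying that the effective equations of Theorem \ref{th:Main1} hold with $\mathbb P$-a.s.\ (not merely in-expectation) error bounds uniform on the entire window, and (ii) controlling the $\omega$-dependence of $L$, both of which I would handle through the localization on $\{\|\overline{V}\|_{W^{2,\infty}}\le M\}$ together with the a.s.\ regularity (\ref{eq:TimeIndepPot3}).
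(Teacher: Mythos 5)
Your proposal is correct and follows essentially the same route as the paper: invoke the Kesten--Papanicolaou/Komorowski--Ryzhik diffusion limit for the auxiliary classical process, prove pathwise closeness of the soliton's center-of-mass process to it via a Gronwall estimate whose amplification factor $e^{c\lambda^{-3/2}}$ over the diffusive window $[0,T/\lambda^2]$ is beaten by the $O(h^{1-\epsilon})$ error exactly when $|\log h|\,\lambda^{3/2+\tilde\alpha}\to\infty$, and conclude by a converging-together argument (this is the paper's Lemma \ref{lm:StrongConv} and the proof of Theorem \ref{th:Main4}). The only cosmetic differences are that you run Gronwall directly on the $2\times 2$ system with eigenvalue $\sqrt{2\lambda L}$ where the paper uses a time-rescaling of auxiliary comparison functions, and you localize on $\{\|\overline V\|_{W^{2,\infty}}\le M\}$ (yielding convergence in probability) where the paper takes a uniform supremum over the full-measure set $\overline\Omega$ (yielding a.s. convergence); both suffice for the final weak-convergence transfer.
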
 

Actually, in dimensions $N\ge 3,$ we have a stronger result about the convergence of the trajectory of the soliton's center of mass to a spatial Brownian motion. 

Let $\phi^\lambda(\bx,t,\bk)$ satisfy the Liouville equation
\begin{equation}
\label{eq:Liouville1}
\partial_t \phi^\lambda = \partial_{\overline{t}} \oa|_{\oa=\bx,\ov=\bk }\cdot \nabla_{\bx} \phi^\lambda + \partial_{\overline{t}} {\ov}|_{\oa=\bx,\ov=\bk}\cdot \nabla_{\bk} \phi^\lambda .
\end{equation}
with initial condition $\phi^\lambda(\bx,0,\bk)= \phi_0(\lambda^{2+\beta}\bx,\bk), \beta >0,$ where $\phi_0$ is compactly supported, twice differentiable in $\bx\in\bbR^N,$ and four times differentiable in $\bk\in\bbR^N,$ such that its support is contained in the shell
\begin{equation*}
{\mathcal A}(M):= \{(\bx,\bk)\in \bbR^{2N}, \ \ 1/M < \|\bk\|< M\},
\end{equation*}
for some $M>1.$
Let $u(\bx,t,\bk)$ be the solution of the spatial diffusion 
\begin{equation}
\label{eq:SpaceDiffusion}
\partial_t u = \sum_{i,j} d_{ij}(\|\bk\|) \partial_{x_i}\partial_{x_j}u,
\end{equation}
with initial condition 
\begin{equation*}
u(\bx,0,\bk) = \frac{1}{\Gamma_{N-1}} \int_{S^{N-1}} \phi_0(\bx,\|\bk\| \bl) d\Sigma(\bl),
\end{equation*}
where $\Gamma_{N-1}$ is the area of a unit sphere $S^{N-1}$ in $N$-dimensions, and $d\Sigma(\bl)$ is the measure on the unit sphere $S^{N-1}.$ Here, the coefficients $d_{ij}$ are given by 
\begin{equation*}
d_{ij}(\|\bk\|) = \frac{1}{\Gamma_{N-1}} \int_{S^{N-1}} \|\bk\|  l_i \chi_j (\|\bk\| \bl )d\Sigma(\bl),
\end{equation*}
where $\chi_j$ are the mean-zero solutions of 
\begin{equation*}
\sum_{i,j=1}^N \partial_{k_i}(D_{ij}(\bk)\partial_{k_j}\chi_j) = -\|\bk\| \hat{k}_j.
\end{equation*}
We have the following theorem.

\begin{theorem}\label{th:Main2Extended}
Consider the NLS equation (\ref{eq:NLSE}) in $N\ge 3$ dimensions with nonlinearity $f$ given by (\ref{eq:LocalNL}) and potential $V$ satisfying (\ref{eq:TimeIndepPot1})-(\ref{eq:SmoothCorr}). Suppose the initial condition satisfies (\ref{eq:InitialCond}) with $\|\bv_0\|\ne 0,$ and that there exists $\tilde{\alpha}>0$ such that $\lambda\rightarrow 0$ as $h\rightarrow 0$ with $|\log h| \lambda^{1+\tilde{\alpha}}\rightarrow \infty.$ Then there exists $\tilde{\beta} \in (0,\tilde{\alpha}/2)$  such that, for all $0< \beta<\tilde{\beta},$  any fixed $0<t_0< T  < \infty ,$ and all compact sets $K\subset{\mathcal A}(M),$ we have 
\begin{equation*}
\lim_{\lambda, h\rightarrow  0}\sup_{(t,\bx,\bk)\in [t_0,T ]\times K} |{\mathbb E}[\phi^\lambda(\bx/\lambda^{2+\beta},t/\lambda^{2+2\beta},\bk)]- u(\bx,t,\bk) | =0,
\end{equation*}
where $\phi^\lambda$ satisfies (\ref{eq:Liouville1}) and $u$ satisfies (\ref{eq:SpaceDiffusion}).
\end{theorem}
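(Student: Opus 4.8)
The plan is to combine the single-soliton estimate of Theorem~\ref{th:Main1} with the momentum-diffusion limit of Theorem~\ref{th:Main2} and a homogenization (perturbed test function) argument that upgrades diffusion of the momentum direction to genuine spatial diffusion; this is the soliton counterpart of the classical passage from a momentum diffusion to a spatial Brownian motion in $N\ge 3$, cf. \cite{KR2}. I would organize the argument in three stages.

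First I would replace the transport of $\phi^\lambda$ along the true center-of-mass characteristics by transport along the exact Hamiltonian flow of $H_{cl}$, i.e. the process $(\ta,\tv)$ of \eqref{eq:Res2a1}--\eqref{eq:Res2v1}. By Theorem~\ref{th:Main1} in its rescaled form \eqref{eq:RescaledCMDyn}, the characteristics of the Liouville equation \eqref{eq:Liouville1} differ from those of the $H_{cl}$-flow by $O(h^{2-\epsilon})$ in position and $O(h^{1-\epsilon})$ in velocity per unit of rescaled time. The relevant final time $t/\lambda^{2+2\beta}$ stays inside the validity window $[0,\overline C\epsilon|\log h|/\lambda)$ precisely when $|\log h|\lambda^{1+2\beta}\to\infty$, which holds for $\beta<\tilde\alpha/2$ since then $\lambda^{1+2\beta}\ge\lambda^{1+\tilde\alpha}$ and $|\log h|\lambda^{1+\tilde\alpha}\to\infty$ by hypothesis; along the way the accumulated error is $o(1)$ because $h^{1-\epsilon}$ beats any power of $|\log h|$. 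A Gronwall estimate along characteristics, followed by taking $\mathbb E[\,\cdot\,]$, then shows that $\mathbb E[\phi^\lambda]$ agrees, up to $o(1)$ uniformly on compact $K\subset{\mathcal A}(M)$, with the expectation of $\phi_0$ transported by the exact classical dynamics \eqref{eq:Res2a1}--\eqref{eq:Res2v1}.

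Next I would invoke Theorem~\ref{th:Main2} (more precisely its general version Theorem~\ref{th:Main4}): on the scale $1/\lambda^2$ the velocity converges in law to the diffusion $\underline{\bv}$ on the sphere $\{\|\tv\|=\|\bv_0\|\}$ with generator $\cL$ and diffusion matrix $D_{ij}$ of \eqref{eq:DiffMatrix}. The nondegeneracy hypothesis \eqref{eq:SmoothCorr} makes $\cL$ elliptic on this momentum sphere, hence ergodic with unique invariant law the normalized surface measure $d\Sigma/\Gamma_{N-1}$; this is the source of the sphere-averages in the initial datum for $u$ and in the formula for $d_{ij}$. The heart of the proof is then the longer scale $t/\lambda^{2+2\beta}$, equivalently running the momentum diffusion for a time of order $\lambda^{-2\beta}\to\infty$ while observing the position $\oa=\int\ov\,ds$ on the spatial scale $\lambda^{2+\beta}$. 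Here I would use the perturbed test function method: for smooth $\varphi(\bx)$ form $\varphi+\lambda^{\beta}\sum_j\chi_j(\bk)\,\partial_{x_j}\varphi+O(\lambda^{2\beta})$, where $\chi_j$ is the mean-zero corrector solving the cell problem $\sum_i\partial_{k_i}(D_{ij}(\bk)\partial_{k_j}\chi_j)=-\|\bk\|\hat k_j$. Inserting this into the generator of the rescaled process and averaging the fast angular variable against $d\Sigma/\Gamma_{N-1}$ cancels the first-order drift by the corrector equation and leaves, at second order, exactly the spatial generator $\sum_{i,j}d_{ij}(\|\bk\|)\partial_{x_i}\partial_{x_j}$ with $d_{ij}$ the stated Green--Kubo average. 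Solvability and $C^2$ regularity of $\chi_j$ follow from the ellipticity of $\cL$ on $S^{N-1}$ and the compatibility condition $\int_{S^{N-1}}\|\bk\|\hat k_j\,d\Sigma=0$; the restriction $N\ge 3$ is what renders $d_{ij}$ a nondegenerate spatial diffusion rather than the degenerate $S^1$ case. A standard martingale and tightness argument finally identifies the limit of $\mathbb E[\phi^\lambda]$ with the solution $u$ of \eqref{eq:SpaceDiffusion}, uniformly on $[t_0,T]\times K$, the cutoff $t_0>0$ absorbing the short transient before the angular variable equilibrates.

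I expect the principal difficulty to lie in carrying out this homogenization \emph{simultaneously} with the $h$-dependent errors of the first stage, rather than first sending $h\to0$ and then $\lambda\to0$. One must show that the accumulated Liouville-characteristic error and the homogenization error are both $o(1)$ under the single coupled constraint $|\log h|\lambda^{1+\tilde\alpha}\to\infty$; this forces the delicate window $\beta<\tilde\beta<\tilde\alpha/2$ and demands uniform control of the corrector $\chi_j$ and of the approximating semigroup over the shell ${\mathcal A}(M)$.
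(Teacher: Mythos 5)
Your first stage is, in substance, the paper's own argument: the paper carries out exactly this comparison in Lemmas \ref{lm:SolLiouville} and \ref{lm:ConvergenceLiouvilleSol}, showing that the difference $\overline{\phi}^\lambda=\phi^\lambda-\tilde{\phi}^\lambda$ of the two Liouville solutions satisfies a transport equation with source $O(h^{1-\epsilon}\|\phi^\lambda\|_{W^{1,\infty}})$, hence is $O(T\lambda^{-4-3\beta}h^{1-\epsilon})=o(1)$ under the coupled constraint $|\log h|\lambda^{1+\tilde{\alpha}}\rightarrow\infty$, $\beta<\tilde{\alpha}/2$. The genuine gap is in your second and third stages. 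Theorem \ref{th:Main2} (equivalently \ref{th:Main4}) gives only \emph{qualitative} convergence in law of $(\lambda^2\oa(\overline{t}/\lambda^2),\ov(\overline{t}/\lambda^2))$ on a \emph{fixed} compact time interval, whereas the statement to be proved lives on the longer scale $t/\lambda^{2+2\beta}$, i.e.\ at times of order $\lambda^{-2\beta}\rightarrow\infty$ in the momentum-diffusion variable; weak convergence on compact horizons transfers no information to growing horizons. Your corrector/perturbed-test-function computation is a correct derivation of the diffusive limit of the \emph{limiting} momentum diffusion — this is precisely the ``standard asymptotic expansion'' step, estimate (\ref{eq:EstLimit2}), which the paper quotes from \cite{KR2} — but invoking Theorem \ref{th:Main2} first and then homogenizing amounts to the iterated limit ($\lambda\rightarrow 0$, then $\gamma=\lambda^\beta\rightarrow 0$), while the theorem requires the simultaneous one. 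Closing this requires a \emph{quantitative} rate of convergence of the prelimit classical flow to the momentum diffusion; this is estimate (\ref{eq:EstLimit1}) of \cite{KR2}, and it is exactly this rate exponent that produces the $\tilde{\beta}\in(0,\tilde{\alpha}/2)$ appearing in the statement — a quantity your qualitative argument cannot generate. The paper itself does no homogenization: it combines Lemma \ref{lm:ConvergenceLiouvilleSol} with the two quantitative estimates of \cite{KR2} by a triangle inequality, which is why the whole second half of your proposal is where the real work (all of it outsourced to \cite{KR2}) lies.

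Your fallback, ``a standard martingale and tightness argument'' for the prelimit process, also fails as stated: averaged over the potential, $(\oa,\ov)$ is a quenched, non-Markovian process (the randomness in $\overline{V}$ is frozen), so there is no generator into which perturbed test functions can be inserted; producing Markovian behavior in the limit is the central difficulty, handled in \cite{KP} and \cite{KR2} by stopping-time constructions exploiting the uniform mixing of $\overline{V}$, with quantitative error control — machinery that is far from standard and that your sketch would have to reproduce. A smaller inaccuracy: your explanation of the restriction $N\ge 3$ (nondegeneracy of $d_{ij}$ versus a ``degenerate $S^1$ case'') is not the actual reason; the momentum diffusion is ergodic on $S^1$ as well, and $N\ge 3$ enters because the quantitative extension of the Kesten--Papanicolaou analysis in \cite{KR2} is only available there (in $N=2$ the trajectories self-intersect, cf.\ Subsection \ref{sec:RevisionStochastic}).
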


The same result holds for more general nonlinearities, see Theorem \ref{th:Main5}, Section  \ref{sec:MainResults}. We note that the conditions on the convergence of $\lambda\rightarrow 0$ as $h\rightarrow 0$ in Theorem \ref{th:Main2Extended} are optimal, in the sense that they are the ones needed for the time scale over which the soliton behaves like a point particle, $O(|\log h|/\lambda),$ is larger than the macroscopic time scale over which spatial diffusion is observed, $O(\lambda^{-2-2\beta}), \; \beta>0.$

Our analysis of the effective dynamics of the soliton in the external potential is based on an extension 
of \cite{A-S1} to the case where the potential has a random character and on the derivation of
explicit estimates involving both the spatial variation of the external potential $h$ and the coupling constant of the potential $\lambda$; see also \cite{FJGS1}-\cite{HZ2}. This analysis relies on three main ingredients. First, using a skew-orthogonal (or Lyapunov-Schmidt) decomposition property, we decompose, almost surely, the solution of the NLS equation with initial condition close to a soliton configuration into a path belonging to the soliton manifold and a part describing a fluctuation skew-orthogonal to the manifold. The dynamics on the soliton manifold is obtained by the skew-orthogonal projection of the Hamiltonian flow generated by the NLS equation in a small tubular neighbourhood of the soliton manifold onto the latter. As for the  fluctuation, we control its $H^1$-norm almost surely using an approximate Lyapunov functional. To study the limiting dynamics in the weak-coupling/space-adiabatic limit, we rely on explicit control of the difference between the effective dynamics of the soliton and a classical particle in the random potential and  on the results of \cite{KP},\cite{KR1} and \cite{KR2} on the stochastic acceleration of a classical particle in a  random potential. 

The organization of this paper is as follows. In Section \ref{sec:MainResults}, we list our general assumptions and state our main 
results for general nonlinearities: Theorem \ref{th:Main3} on the long-time dynamics of the soliton in the random potential, and 
Theorems \ref{th:Main4} and \ref{th:Main5} on the limiting diffusive dynamics in the 
weak-coupling/space-adiabatic limit. In Section \ref{sec:NLSeq}, we recall basic properties of the 
NLS equation, which we will use in the analysis. We then prove Theorem 
\ref{th:Main3} in Section  \ref{sec:ProofMain3}. In Section  \ref{sec:ProofMain4} we prove Theorems \ref{th:Main4} and \ref{th:Main5} by  showing first that the limiting effective dynamics of the soliton converges almost surely to that of a classical particle in a random potential and  then applying the results of \cite{KP}, \cite{KR1} and \cite{KR2} on the motion of a classical particle governed by a weakly random Hamiltonian flow. For  sake of completeness, we discuss the well-posedness of the generalized NLS with random interactions and potential in Appendix A. We also prove some technical statements in Appendix B.

\subsection{Notation}
\begin{itemize}

\item Given $\bx\in \bbR^N,$ we denote $x \equiv \|\bx\|:=\sqrt{\sum_{i=1}^N x_i^2}$ and $\hat{\bx}:= \frac{\bx}{\|\bx\|}.$ For $\bx,{\mathbf y}\in\bbR^N,$ we denote their inner product by $\bx\cdot {\mathbf y}:=\sum_{i=1}^N x_iy_i. $

\item $L^p(I)$  denotes the standard Lebesgue space, $1\le p\le \infty,$ with norm
\begin{align*}
\|f\|_{L^p} &= (\int_I dx~ |f(x)|^p)^{\frac{1}{p}}, \; f\in L^p(I),  p<\infty , 
\\ \|f\|_{L^\infty} &= \esssup(|f|) , \; f\in L^\infty (I).
\end{align*}
We also define
\begin{equation*}
\|f\|_{L^p(I,L^q(J))}:= \|\, \|f\|_{L^q(J)}\, \|_{L^p(I)}.
\end{equation*}
For $1\le p\le \infty,$ $p'$ is the conjugate of $p, \ie , 1/p+1/p' = 1.$
We denote by $\langle \cdot ,\cdot \rangle $ the scalar product in $L^2(\bbR^N),$ 
\begin{equation*}
\langle u, v\rangle = \Re\int_{\bbR^N} u\overline{\bv}, \; u,v\in L^2(\bbR^N),
\end{equation*}
as well as its extension by duality to ${\mathsf Y}\times {\mathsf Y}',$ where ${\mathsf Y}$ and ${\mathsf Y}'$ are complete metric spaces such that ${\mathsf Y}\hookrightarrow L^2 \hookrightarrow {\mathsf Y}',$ with dense embedding

\item Given the multi-index $\overline{\alpha}= (\alpha_1,\cdots ,\alpha_N) \in \bbN^N,$ we denote $|\overline{\alpha}| = \sum_{i=1}^N \alpha_i.$ Furthermore, $\partial_x^{\overline{\alpha}} := \partial_{x_1}^{\alpha_1}\cdots \partial_{x_N}^{\alpha_N}.$

\item
For $1\le p\le\infty$ and $s\in\bbN,$ the (complex) Sobolev space is given by
\begin{equation*}
W^{s,p}(\bbR^N) := \{u\in {\mathcal S}'(\bbR^N): \partial_x^{\overline{\alpha}} u\in L^p(\bbR^N), |\overline{\alpha}|\le s\}, 
\end{equation*}
where ${\mathcal S}'(\bbR^N)$ is the space of tempered distributions. We equip $W^{s,p}$ with the norm 
\begin{equation*}
\|u\|_{W^{s,p}} = \sum_{\overline{\alpha}, |\overline{\alpha}|\le s} \|\partial_x^{\overline{\alpha}} u \|_{L^p},
\end{equation*}
which makes it a Banach space. Moreover, $W^{-s,p'}$ is the dual of $W^{s,p}.$  We use the shorthand $W^{s,2}=H^s.$

\item Given $f$ and $g$ real functions on $\bbR^N,$ we denote their convolution by $\star,$
\begin{equation*}
f\star g(\bx):=\int d {\mathbf y}~ f(\bx-{\mathbf y}) g({\mathbf y})  .
\end{equation*}

\item We say that a real function $f\in C_b^{1,1,2}(\bbR^N\times [0,+\infty)\times \bbR^N\backslash \{0\})$ if it satisfies
\begin{equation*}
\sup_{|\overline{\alpha}|\le 1,\beta \le 1, |\overline{\gamma}|\le 2}\; \sup_{(\bx,t,\bk)\in \bbR^N\times [0,+\infty)\times \bbR^N \backslash \{0\}} |\partial_\bx^{\overline{\alpha}} \partial_t^\beta \partial_\bk^{\overline{\gamma}}f(\bx,t,\bk)|<\infty.
\end{equation*}

\end{itemize}

\section{General nonlinearities}\label{sec:MainResults}

\subsection{The model}

We now list our general assumptions and discuss models where they are satisfied.

\begin{itemize}
\item[(A1)] {\it Nonlinearity.} 
The nonlinearity $f=f_1+\cdots+f_k$ such that each 
\begin{equation*}
f_j\in C^2(\Hone,\HmOne), \; j=1,\cdots ,k.
\end{equation*} 
We assume that there exists 
\begin{equation*}
r_j\in [2,\frac{2N}{N-2}), \; ([2,\infty] \ \ \mathrm{if}\ \ N=1),
\end{equation*} 
such that $\forall M>0, \exists$ a finite constant $C_j(M)$ such that 
\begin{equation*}
\| f_j(u)-f_j(v)\|_{L^{r_j'}}\le C_j(M)\|u-v\|_{L^{r_j}},
\end{equation*}
$\forall u,v\in \Hone, \|u\|_{H^1}+\|v\|_{H^1}\le M.$ Furthermore, 
\begin{equation*}
\Im f_j(u)\overline{u}=0
\end{equation*}
almost everywhere on $\bbR^N, \forall u\in \Hone.$

Let $F:= \sum_{j=1}^k F_j,$ where $F_j$ satisfies $f_j=F'_j,$ and the prime stands for the Fr\'echet derivative. We have that 
\begin{equation*}
\exists F_j \in C^3(\Hone,\bbR).
\end{equation*}   
For every $M>0,$ there exists a positive constant $C(M)$ and $\tilde{\epsilon} \in (0,1)$ such that
\begin{equation*}
F(u)\le \frac{1-\tilde{\epsilon}}{2}\|u\|_{H^1} + C(M), \forall u\in H^1(\bbR^N)
\end{equation*}
such that $\|u\|_{L^2}\le M.$ Furthermore, 
\begin{align*}
&\sup_{\|u\|_{H^1}\le M} \|F''(u)\|_{{\mathcal B}(H^1,H^{-1})} < \infty\\
&\sup_{\|u\|_{H^1}\le M} \|F'''(u)\|_{H^1\rightarrow {\mathcal B}(H^1,H^{-1})} < \infty,
\end{align*}
where ${\mathcal B}$ denotes the space of bounded operators.

\item[(A2)] {\it Symmetries.} The nonlinearity $F$ satisfies $F(T\cdot)=F(\cdot),$ where $T$ is a translation 
\begin{equation*}
T_\ba^{tr}: u(\bx)\rightarrow u(\bx-\ba), \; \ba\in \bbR^N,
\end{equation*}
or a rotation 
\begin{equation*}
T_R^r: u(\bx)\rightarrow u(R^{-1}\bx), \; R\in SO(N),
\end{equation*}
or a gauge transformation
\begin{equation*}
T_\gamma^g : u(\bx)\rightarrow e^{i\gamma}u(\bx), \; \gamma \in [0,2\pi),
\end{equation*}
or a boost 
\begin{equation*}
T_v^b: u(\bx)\rightarrow e^{\frac{i}{2}\bv\cdot \bx} u(\bx), \; \bv\in \bbR^N
\end{equation*}
or a complex conjugation
\begin{equation*}
T^c: u(\bx)\rightarrow \overline{u}(\bx).
\end{equation*}

\item[(A3)]{\it Solitary Wave.} $\exists I\subset \bbR$ such that $\forall \mu\in I,$ the nonlinear eigenvalue problem 
\begin{equation}
\label{eq:NLEV}
(-\Delta +\mu)\eta_\mu -f(\eta_\mu)=0
\end{equation}
has a positive, spherically symmetric solution $\eta_\mu\in L^2(\bbR^N)\cap C^2(\bbR^N),$ such that 
\begin{equation*}
\|\|\bx\|^3\eta_\mu\|_{L^2}+\|\|\bx\|^2\|\nabla \eta_\mu\|\|_{L^2} + \|\|\bx\|^2\partial_\mu \eta_\mu\|_{L^2}<\infty, \forall \mu \in I.
\end{equation*}

\item[(A4)]{\it Orbital Stability.} The solution $\eta_\mu$ appearing in assumption (A4) satisfies
\begin{equation*}
\partial_\mu m(\mu) >0, \forall \mu\in I,
\end{equation*} 
where $m(\mu): = \frac{1}{2} \int d\bx ~ \eta_\mu^2$ is the ``mass'' of the soliton.

\item[(A5)] {\it Null Space Condition.} We define
\begin{equation*}
\cL_\mu := -\Delta +\mu -f'(\eta_\mu),
\end{equation*}
which is the Fr\'{e}chet derivative of the map $\psi\rightarrow (-\Delta+\mu)\psi -f(\psi)$ evaluated at $\eta_\mu.$ For all $\mu\in I,$ the null space
\begin{equation*}
\cN(\cL_\mu) = span\{i\eta_\mu, \partial_{x_j}\eta_\mu, j=1,\cdots, N\}.
\end{equation*}

\item[(B1)] {\it External Random Potential.}
The external potential is a real measurable function on $\bbR^N\times\bbR\times\Omega\rightarrow \bbR$ such that
\begin{equation*}
V_h(\bx,t;\omega) \equiv V(h\bx, t; \omega), \ \ h\in (0,1],
\end{equation*}
and it realizations 
\begin{equation*}
V(\bx,t;\omega) \in W^{1,\infty}(\bbR, C^2(\bbR^N)\cap W^{2,\infty}(\bbR^N)), \ \ {\mathbb P}-\mathrm{a.s.}.
\end{equation*}

\item[(B2)] {\it Time-Independent, Homogeneous and Strongly Mixing Random Potential.} The external potential is time-independent, 
\begin{equation*}
V(\ba,t ;\omega) \equiv \overline{V}(\ba ; \omega),
\end{equation*}
and $\overline{V}$ is a homogeneous random potential that is strongly mixing in the uniform sense, satisfying conditions (\ref{eq:TimeIndepPot1})-(\ref{eq:SmoothCorr}), Subsection  \ref{sec:Description}.

\end{itemize}

\begin{remark} Assumptions (A1) and (B1) are sufficient to establish almost surely the global well-posedness of the NLS equation (\ref{eq:NLSE}) in $H^1$ (see Appendix A). Moreover, (A1) implies 
\begin{align*}
&|F(u+v)-F(u)-\langle F'(u), v\rangle | \le C(M) \|v\|_{H^1}^2 \\
&|F(u+v)-F(u)-\langle F'(u), v\rangle- \frac{1}{2}\langle F''(u)v,v\rangle| \le C(M)\|v\|_{H^1}^3 \\
&\|F'(u+v)-F'(u)-F''(u)v\|_{H^{-1}}\le C(M)\|v\|_{H^1}^2 ,
\end{align*}
for any $u\in H^2(\bbR^N)$ and $v\in H^1(\bbR^N)$ such that $\|u\|_{H^1}+\|v\|_{H^1}\le M.$ 
\end{remark}

\begin{remark}
We now discuss typical nonlinearities where the various assumptions are satisfied. Assumptions (A1) and (A2) are satisfied, for example, if 
\begin{equation*}
F(u)= \frac{1}{2}\int d\bx ~ G(|u|^2) + W\star|u|^2,
\end{equation*}
where $G(r)=\int_0^r ds g(s),$ such that $g\in C^2(\bbR^+)$ with $g(s)\le C(1+s^\alpha),\alpha\in [0,\frac{2}{N}),$ $|\partial_s^k g(s)|\le C(1+s^{q-k}), k=0,1,2, q\in [0,\frac{2}{N-2}), \; {if} \; N\ge 3,$ $q\in [0,\infty), \; {if} \;  N=1,2.$ Furthermore, $W\in L^p+L^\infty, p>\frac{N}{2}, p\ge 1,$ such that $\max(0,W)\in L^r(\bbR^N)+L^\infty(\bbR^N), r> \frac{N}{2},\ge 1, \; {if} \; N\ge 2,$ and $W$ is spherically symmetric, see \cite{Ca1}. Furthermore, (A3) is satisfied for local nonlinearities if 
\begin{align*}
   -\infty &< \lim_{s\rightarrow 0} g(s) < \mu \\
   -\infty &\le\lim_{s\rightarrow\infty}s^{-\alpha}g(s) \le C ,  
\end{align*}   
where $0<\alpha < 2/(N-2)$, when $N>2$ and
$\alpha \in(0,\infty)$ if  $N=1,2,$ such that  
\begin{equation*}
\exists \zeta>0, \ \text{with}\ 
        \int^\zeta_0ds  g(s) >\mu\zeta,
\end{equation*}
see \cite{BL1,BLP1,Ca1}. Assumption (A3) is also satisfied for nonlocal nonlinearities if, in addition to the above, 
\begin{equation*}
    W\in L^{q}_{\mathrm{loc}},\ q\ge\frac{N}{2}, \; {and} \; 
W\rightarrow 0\ \text{as}\ \|\bx\|\rightarrow \infty;
\end{equation*}
see \cite{Ca1,GV2,FTY1,FTY2}. Assumption (A4) implies orbital stability of the solitary wave solution, \cite{GSS1}. It is satisfied, in particular, for local nonlinearities $$f(\psi)=|\psi|^{s}\psi, s<\frac{4}{N}.$$ 
Assumption (A5) is satisfied for local nonlinearities if 
\begin{equation*}
g'(s)+g''(s)s^2 >0,
\end{equation*}
or if $N=1,$ \cite{FJGS1, We1}. 
\end{remark}

\subsection{Statement of the main theorems for general nonlinearities}

We now state the main results of this paper.

\begin{theorem}\label{th:Main3}
Consider the NLS equation (\ref{eq:NLSE}) with initial condition satisfying (\ref{eq:InitialCond}). Suppose that (A1)-(A5) and (B1) hold. Then the result of Theorem \ref{th:Main1} hold.

\end{theorem}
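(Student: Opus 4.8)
The plan is to follow the skew-orthogonal (Lyapunov--Schmidt / modulation) scheme indicated in the introduction, making every estimate explicit in both $h$ and $\lambda$ and uniform over the full-measure set $\overline{\Omega}$ on which (B1) provides deterministic bounds. First I would set up the tubular-neighbourhood decomposition. Writing the soliton manifold as $\{\eta_\sigma : \sigma = (\ba,\bv,\gamma,\mu)\}$, I decompose the solution as $\psi(t) = \eta_{\sigma(t)} + w(t)$ and impose symplectic (skew-)orthogonality of $w$ to the tangent space $T_{\eta_\sigma}$, spanned by the generators $i\eta_\mu,\partial_{x_j}\eta_\mu,\dots$ that span $\cN(\cL_\mu)$ by (A5). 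Assumptions (A3)--(A5) guarantee that $\cL_\mu$ is nondegenerate on the skew-orthogonal complement and that, by the implicit function theorem, for $\|\psi - \eta_{\sigma_0}\|_{H^1}$ small the modulation parameters $\sigma(t)$ and the fluctuation $w(t)$ are determined uniquely and depend smoothly on $\psi$.

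Second, I derive the effective equations. Differentiating the orthogonality constraints in $t$ and projecting the NLS flow (\ref{eq:NLSE}) skew-orthogonally onto $T_{\eta_\sigma}$ produces a closed system of modulation ODEs for $\dot\sigma$. At leading order these reduce to Hamilton's equations for a classical particle; to exhibit the $O(h^{2-\epsilon})$ remainders I would Taylor expand the slowly varying potential $V_h(\bx,t) = V(h\bx,t)$ about the soliton centre $\ba$, so that $\nabla V_h(\ba) = h(\nabla V)(h\ba) = O(h)$ drives the force while the quadratic Taylor remainder contributes at order $\lambda h^2$, and the terms coupling to $w$ are controlled by $\|w\|_{H^1}$. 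Since $\lambda\le 1$ and $\|w\|_{H^1}\lesssim h^{1-\epsilon/2}$ (established below), all corrections are indeed $O(h^{2-\epsilon})$, and the stated initial data $\sigma(0)$ lie $O(h)$ from $\sigma_0$ by the closeness assumed in (\ref{eq:InitialCond}).

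Third --- and this is the crux --- I control $\|w(t)\|_{H^1}$ over the long time scale using an approximate Lyapunov functional $\cE(t)$ built from the conserved energy, mass $m(\mu)$ and momentum, adapted to the comoving frame. Coercivity of $\cE$ on the skew-orthogonal complement, which follows from orbital stability (A4) together with the null-space condition (A5), gives $\|w\|_{H^1}^2 \lesssim \cE + \text{(higher order)}$. Because the potential breaks the translation, gauge and boost symmetries, $\cE$ is only almost conserved; a careful computation of $\tfrac{d}{dt}\cE$, with the modulation equations inserted so that the linear-in-$w$ forcing is cancelled by the choice of $\sigma$, yields a differential inequality of the schematic form
\begin{equation*}
\frac{d}{dt}\|w\|_{H^1} \lesssim \lambda h\,\|w\|_{H^1} + \lambda h^2,
\end{equation*}
uniformly over $\omega\in\overline\Omega$ thanks to the bounds on $V,\nabla V,\nabla^2 V,\partial_t V$ in (B1). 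Integrating gives $\|w(t)\|_{H^1} \lesssim e^{C\lambda h t}(\|w(0)\|_{H^1} + h)$; evaluating at $t = \overline{C}\epsilon|\log h|/(\lambda h)$ turns the exponential into $h^{-C\overline{C}\epsilon}$, and choosing $\overline{C}$ with $C\overline{C}=1/2$ together with $\|w(0)\|_{H^1}=O(h)$ produces the bound $C h^{1-\epsilon/2}$. A standard continuity/bootstrap argument closes the loop, since this bound keeps $\psi$ inside the tubular neighbourhood where the decomposition is valid. The restriction $\lambda\in(h^{1-\epsilon},1]$ is exactly what makes the driving force $\lambda h$ dominate the $h^{2-\epsilon}$ error budget.

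Finally, the random character enters only through uniformity: every constant above is deterministic because (B1) bounds the realizations of $V$ and its derivatives uniformly in $\omega\in\overline\Omega$ with $\mu(\overline\Omega)=1$. Hence $\|\psi - \eta_{\sigma(t)}\|_{H^1} \le C h^{1-\epsilon/2}$ holds $\mathbb{P}$-a.s.\ with the same $C$, and taking expectation yields the claimed bound on $\mathbb{E}[\|\psi - \eta_{\sigma(t)}\|_{H^1}]$. I expect the main obstacle to be the sharp long-time fluctuation estimate: establishing the coercivity of $\cE$ uniformly in $\mu$ on $I$, controlling the secular and non-self-adjoint contributions to $\tfrac{d}{dt}\cE$ coming from the moving frame and the random potential, and tracking the constants precisely enough that the exponential growth rate $C$ can be absorbed into the choice of $\overline{C}$ so as to give exactly the exponent $1-\epsilon/2$ over the time window $|\log h|/(\lambda h)$.
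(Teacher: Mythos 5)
Your proposal is correct and follows essentially the same route as the paper: skew-orthogonal decomposition into $\eta_{\sigma(t)}+w$, modulation equations obtained by projecting the flow onto the soliton manifold with Taylor expansion of $V_h$ about $\ba$, control of $\|w\|_{H^1}$ via an almost-conserved Lyapunov functional whose coercivity comes from (A4)--(A5), all with constants uniform over $\omega\in\overline{\Omega}$, and finally passage to the expectation bound. The only difference is bookkeeping: where you integrate a continuous Gronwall inequality to get the factor $e^{C\lambda h t}\le h^{-\epsilon/2}$, the paper iterates a per-interval estimate (Lemma \ref{lm:UpperBdFluctuation}, growth by a fixed factor $C_1$ on each window of length $C_2/(\lambda h)$) over $n\sim\epsilon|\log h|/\log C_1$ windows, which is the discretized form of the same exponential bound; the paper also spells out the measurable extension of $\sigma(t)$ from $\overline{\Omega}$ to all of $\Omega$ before taking expectations, a technical point you pass over but which is needed for $\mathbb{E}[\|\psi-\eta_{\sigma(t)}\|_{H^1}]$ to be well defined.
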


We now discuss convergence of the dynamics of the center of mass of the soliton to Brownian motion in the weak-coupling/space-adiabatic limit.

\begin{theorem}\label{th:Main4}
Consider the NLS equation (\ref{eq:NLSE}) in dimensions $N\ge 2$ and with initial condition satisfying (\ref{eq:InitialCond}) and $\|\bv_0\|\ne 0.$ Assume hypotheses (A1)-(A5), (B1) and (B2) hold. Suppose in addition that there exists $\tilde{\alpha}>0$ such that  $\lambda\rightarrow 0$ as $h\rightarrow 0$ with $|\log h| \lambda^{3/2+\tilde{\alpha}}\rightarrow \infty.$  Then the result of Theorem \ref{th:Main2} hold.
\end{theorem}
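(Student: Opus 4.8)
The plan is to reduce the limiting behaviour of the soliton to that of a genuine classical particle in the same random potential, and then to import the stochastic--acceleration theory for that classical flow. Concretely I would proceed in two stages. First, an almost sure (pathwise) stability estimate showing that, on the macroscopic time scale, the rescaled soliton parameters $(\oa,\ov)$ of (\ref{eq:RescaledCMDyn})--(\ref{eq:RescaledInitCond2}) stay uniformly close to the auxiliary process $(\ta,\tv)$ defined by the exact Hamiltonian system (\ref{eq:Res2a1})--(\ref{eq:Res2InitCond1}). Second, the weak convergence of the classical process $(\lambda^2\ta(\overline t/\lambda^2),\tv(\overline t/\lambda^2))$ to $(\int_0^{\overline t}\underline{\bv}(s)\,ds,\underline{\bv})$ supplied by \cite{KP,KR1}. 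Since weak convergence is stable under perturbations that are uniformly small in probability, the two stages combine to give exactly the assertion of Theorem \ref{th:Main2}, hence of Theorem \ref{th:Main4}.

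For the first stage I would invoke Theorem \ref{th:Main3}, whose conclusion is precisely that of Theorem \ref{th:Main1} under (A1)--(A5) and (B1); after the scaling $\oa=h\ba,\ \ov=\bv,\ \overline t=ht$ this yields, $\mathbb{P}$-a.s. and for $\overline t\in[0,\overline{C}\epsilon|\log h|/\lambda)$, the system (\ref{eq:RescaledCMDyn}) with remainders $O(h^{2-\epsilon})$ and $O(h^{1-\epsilon})$ in the position and momentum equations. Writing $p:=\oa-\ta,\ q:=\ov-\tv$ and using that (B2) together with (\ref{eq:TimeIndepPot3}) forces $\overline{V}\in W^{2,\infty}(\bbR^N)$ $\mathbb{P}$-a.s., the difference obeys
\begin{align*}
&\partial_{\overline t}p = q + O(h^{2-\epsilon}),\\
&\partial_{\overline t}q = -2\lambda\,H(\overline t)\,p + O(h^{1-\epsilon}),
\end{align*}
where $H(\overline t):=\int_0^1\nabla^2\overline{V}(\ta+sp)\,ds$ satisfies $\|H\|\le\|\nabla^2\overline{V}\|_{L^\infty}=:M_2$.

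The naive Gronwall bound here is catastrophic: the coupling $\partial_{\overline t}p=q$ together with the possibly indefinite Hessian $H$ produces, over the macroscopic time $\overline t\sim\lambda^{-2}$, an amplification $e^{C\overline t}=e^{C/\lambda^{2}}$. Extracting the sharp rate from the Hamiltonian structure is the main obstacle, and I expect it to be settled by a weighted energy. Using $\cE:=\tfrac12\|q\|^2+\lambda M_2\|p\|^2$ together with $\|H\|\le M_2$ one finds $|\partial_{\overline t}\cE|\le C\sqrt{\lambda}\,\cE + C h^{1-\epsilon}\sqrt{\cE}$, so the homogeneous growth is only $e^{C\sqrt{\lambda}\,\overline t}$. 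Evaluated at $\overline t=\overline t_0/\lambda^2$ with $\overline t_0\in(0,T)$ this gives the amplification factor $e^{CT/\lambda^{3/2}}$; combined with initial data of size $O(h)$ (from (\ref{eq:RescaledInitCond2}) versus (\ref{eq:Res2InitCond1})) and the integrated $O(h^{1-\epsilon})$ forcing, a Duhamel estimate yields
\begin{equation*}
\sup_{\overline t_0\in(0,T)}\Big(\|q(\overline t_0/\lambda^2)\|+\lambda^2\|p(\overline t_0/\lambda^2)\|\Big)\le C\big(h+h^{1-\epsilon}\lambda^{-2}\big)\,e^{CT/\lambda^{3/2}}.
\end{equation*}
Taking logarithms, the right-hand side tends to $0$ precisely when $(1-\epsilon)|\log h|$ dominates $CT\lambda^{-3/2}$, which is guaranteed by the hypothesis $|\log h|\lambda^{3/2+\tilde\alpha}\to\infty$ once $\epsilon$ is fixed small (the admissible range $\lambda\in(h^{1-\epsilon},1]$ of Theorem \ref{th:Main1} being respected). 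Thus the soliton process converges to the classical process uniformly on $(0,T)$, in probability.

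For the second stage I would quote the stochastic--acceleration theory for the Hamiltonian $H_{cl}$: under the mean-zero condition (\ref{eq:TimeIndepPot2}), the uniform-mixing decay (\ref{eq:TimeIndepPot5}), the smoothness and nondegeneracy (\ref{eq:SmoothCorr}) of the correlation $R$, and $\|\bv_0\|\ne 0$, the results of \cite{KP,KR1} give that $(\lambda^2\ta(\cdot/\lambda^2),\tv(\cdot/\lambda^2))$ converges weakly on $(0,T)$ to $(\int_0^{\,\cdot}\underline{\bv}(s)\,ds,\underline{\bv})$, with $\underline{\bv}$ the momentum diffusion with generator (\ref{eq:Diff1}) and diffusion matrix (\ref{eq:DiffMatrix}); approximate conservation of $H_{cl}$ keeps $\|\tv\|$ near $\|\bv_0\|>0$, so the limiting motion is Brownian motion on the constant-momentum sphere and the singular coefficient $\sim\|\bk\|^{-1}$ causes no trouble. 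A Slutsky-type argument then transfers this weak limit from the classical process to the soliton process, completing the proof.
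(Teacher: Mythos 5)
Your proposal follows essentially the same route as the paper: its Lemma~\ref{lm:StrongConv} is exactly your first stage (almost-sure closeness of $(\lambda^2\oa(\overline t/\lambda^2),\ov(\overline t/\lambda^2))_{\overline t\in[0,T]}$ to the classical auxiliary process, with the same amplification factor $e^{CT/\lambda^{3/2}}$ and the same final bound of order $e^{CT/\lambda^{3/2}}h^{1-\epsilon}\lambda^{-2}$, which vanishes under $|\log h|\,\lambda^{3/2+\tilde\alpha}\to\infty$), and the paper then concludes, as you do, by quoting \cite{KP} for $N\ge 3$ and \cite{KR1} for $N=2$ and transferring the weak limit to the soliton process via almost-sure convergence. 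The only difference is technical rather than conceptual: where you extract the $\lambda^{-3/2}$ rate from the Hamiltonian structure via the weighted energy $\tfrac12\|q\|^2+\lambda M_2\|p\|^2$, the paper symmetrizes its differential inequalities for the position and velocity differences by the anisotropic rescaling $\overline f_{\lambda,h}(\overline t)=\lambda^{-3/2}\tilde f_{\lambda,h}(\lambda^{3/2}\overline t)$, $\overline g_{\lambda,h}(\overline t)=\tilde g_{\lambda,h}(\lambda^{3/2}\overline t)$ before applying Gronwall's lemma --- both devices yield the identical exponent and the identical smallness condition.
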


We also have the following stronger result for the convergence of the trajectory of the soliton to spatial Brownian motion in  the weak-coupling/space-adiabatic limit in dimensions $N\ge 3.$

\begin{theorem}\label{th:Main5}
Consider the NLS equation (\ref{eq:NLSE}) in dimensions $N\ge 3$ and with initial condition satisfying (\ref{eq:InitialCond}) and $\|\bv_0\|\ne 0.$ Assume hypotheses (A1)-(A5), (B1) and (B2) hold. Suppose in addition that  there exists $\tilde{\alpha}>0$ such that $\lambda\rightarrow 0$ as $h\rightarrow 0$ with $|\log h| \lambda^{1+\tilde{\alpha}}\rightarrow \infty.$ Then the result of Theorem \ref{th:Main2Extended} hold.
\end{theorem}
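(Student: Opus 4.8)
The plan is to reduce Theorem \ref{th:Main5} to the corresponding statement for a genuine classical particle in the random potential $\overline V$, for which the spatial homogenization in $N\ge 3$ is supplied by \cite{KR2}, and then to transfer that limit back to the soliton by controlling the error terms in the effective dynamics over the full macroscopic time scale. First I would invoke Theorem \ref{th:Main3}, which under (A1)--(A5) and (B1) yields, in the rescaled variables of (\ref{eq:RescaledCMDyn}), the system
\begin{align*}
&\partial_{\overline t}\oa = \ov + O(h^{2-\epsilon}),\\
&\partial_{\overline t}\ov = -2\lambda\nabla\overline V(\oa;\omega) + O(h^{1-\epsilon}),
\end{align*}
valid $\mathbb P$--a.s.\ for $\overline t\in[0,\overline C\epsilon|\log h|/\lambda)$, together with the control on $\mathbb E[\|\psi-\eta_{\sigma(t)}\|_{H^1}]$. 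The density $\phi^\lambda$ is transported along the (random) characteristics of this system, so $\mathbb E[\phi^\lambda]$ is the object in the statement. Alongside it I introduce the density $\phi^\lambda_{\mathrm{cl}}$ transported by the exact classical flow (\ref{eq:Res2a1})--(\ref{eq:Res2v1}) with the same initial datum $\phi_0(\lambda^{2+\beta}\cdot,\cdot)$, i.e.\ the Liouville density for the Hamiltonian $H_{cl}$.

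Second, I would quote \cite{KR2}: under (B2), that is homogeneity, the uniform-mixing decay (\ref{eq:TimeIndepPot5}), the regularity (\ref{eq:TimeIndepPot3}), and the nondegeneracy (\ref{eq:SmoothCorr}) of $\widehat R$, the classical particle first exhibits momentum diffusion on the sphere $\|\ov\|=\|\bv_0\|$ on the scale $\lambda^{-2}$ (the content of \cite{KR1} used in Theorem \ref{th:Main4}), and then, in $N\ge3$, its position homogenizes on the longer scale $\lambda^{-2-2\beta}$ to the spatial diffusion (\ref{eq:SpaceDiffusion}) with coefficients $d_{ij}$ built from the correctors $\chi_j$. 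The scale separation $\beta>0$ is what decouples the two homogenization stages: the velocity-direction correlation time is $O(\lambda^{-2})$, so over $\overline t\sim\lambda^{-2-2\beta}$ the displacement $\int\ov$ obeys a spatial central limit theorem. This gives $\sup_{[t_0,T]\times K}|\mathbb E[\phi^\lambda_{\mathrm{cl}}(\bx/\lambda^{2+\beta},t/\lambda^{2+2\beta},\bk)]-u|\to0$.

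Third, I would estimate $\mathbb E[\phi^\lambda]-\mathbb E[\phi^\lambda_{\mathrm{cl}}]$. Since both densities are transports of the same smooth $\phi_0$ along their respective flows, their difference is controlled by the difference of the flow maps over $[0,\overline t]$ tested against the $\bx$- and $\bk$-derivatives of $\phi_0$; here the assumed two $\bx$- and four $\bk$-derivatives of $\phi_0$ enter. A Duhamel/Gronwall argument using the regularity of $\overline V$ from (\ref{eq:TimeIndepPot3}) propagates the a.s.\ errors $O(h^{2-\epsilon})$, $O(h^{1-\epsilon})$: the dominant velocity error accumulates to $O(h^{1-\epsilon}\,\overline t)=O(h^{1-\epsilon}\lambda^{-2-2\beta})$ at the relevant horizon. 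The constraints are that this error and the horizon itself remain inside the point-particle window, i.e.\ $\lambda^{-2-2\beta}<\overline C\epsilon|\log h|/\lambda$ and $h^{1-\epsilon}\lambda^{-2-2\beta}\to0$. Choosing $\beta<\tilde\beta<\tilde\alpha/2$ makes $1+2\beta<1+\tilde\alpha$, so $\lambda^{1+2\beta}>\lambda^{1+\tilde\alpha}$, and the hypothesis $|\log h|\lambda^{1+\tilde\alpha}\to\infty$ delivers $|\log h|\lambda^{1+2\beta}\to\infty$ (the window) while $h^{1-\epsilon}$ beats the ensuing power of $|\log h|$ (the error). Combining with the second step and taking $\lambda,h\to0$ simultaneously yields the theorem.

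I expect the main obstacle to be this third step. Because the $\lambda\to0$ and $h\to0$ limits are coupled, I cannot first homogenize and then send $h\to0$; I must keep the accumulated flow error uniformly negligible across the entire macroscopic horizon $\lambda^{-2-2\beta}$ while the two-scale limit of \cite{KR2} is being taken, and in particular show that the random $O(h^{1-\epsilon})$ momentum error does not corrupt the momentum-diffusion mechanism on the sphere on the intermediate scale $\lambda^{-2}$. Matching the regularity of $\phi_0$ to the number of derivatives that the corrector estimates of \cite{KR2} require, so that the flow-comparison bound closes at the order of smoothness available, is the remaining technical point.
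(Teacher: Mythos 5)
Your proposal follows essentially the same route as the paper: the paper also introduces the auxiliary Liouville density transported by the exact classical flow (its equation (\ref{eq:Liouville2})), compares it to $\phi^\lambda$ via a Duhamel/Gronwall transport estimate over the horizon $\lambda^{-2-2\beta}$ (Lemma \ref{lm:ConvergenceLiouvilleSol}, using exactly your constraints $\beta<\tilde{\alpha}/2$ and $|\log h|\lambda^{1+\tilde{\alpha}}\rightarrow\infty$ so that the window and error conditions close), and then concludes by the triangle inequality with the quantitative spatial-diffusion estimate of \cite{KR2} for the classical density. The only points you gloss over are the paper's separate regularity lemma (Lemma \ref{lm:SolLiouville}, proving the Liouville solutions are a.s.\ $C^1\cap W^{1,\infty}$ via characteristics, which justifies the comparison) and the precise accumulated error, which is $O(T\lambda^{-4-3\beta}h^{1-\epsilon})$ rather than $O(h^{1-\epsilon}\lambda^{-2-2\beta})$ due to derivative growth along the flow, but this does not affect the conclusion since $h^{1-\epsilon}$ dominates any power of $\lambda^{-1}$.
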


We prove Theorem \ref{th:Main3} in Section  \ref{sec:ProofMain3}, and Theorems \ref{th:Main4} and \ref{th:Main5} in Section  \ref{sec:ProofMain4}. Before doing so, we recall some useful properties of the NLS equation.

\section{Mathematical preliminary}\label{sec:NLSeq}

In this section, we recall some basic properties of the nonlinear Schr\"odinger equation (\ref{eq:NLSE}), see for example \cite{FJGS1}. 

\subsection{Hamiltonian structure}
 
The phase space for the NLS equation (\ref{eq:NLSE}) is chosen to be $\Hone.$
The space $\Hone$ has a real inner product (Riemannian metric) 
\begin{equation}
\langle u,v \rangle :=\Re \int d\bx ~ u\overline{v}  
\end{equation}
for $u,v\in H^1(\bbR^N;\bbC).$ The tangent space at an element $\psi\in H^1$ is $\cT_\psi H^1=H^1.$ On $\Hone,$ one can define a symplectic 2-form 
\begin{equation}
\Xi(u,v):= \Im \int  d\bx ~ u\overline{\bv} = \langle u,iv\rangle.
\end{equation}
The Hamiltonian functional corresponding to the NLS equation (\ref{eq:NLSE}) is
\begin{equation}
\label{eq:NLHamiltonian}
H_{\lambda}(\psi):=\frac{1}{2}\int d\bx  ~ |\nabla\psi|^2  +\frac{\lambda}{2}\int  V_h |\psi|^2 - F(\psi).
\end{equation}

Using the correspondence 
\begin{align*}
\Hone &\longleftrightarrow H^1(\bbR^N;\bbR) \oplus H^1(\bbR^N;\bbR)\\
\psi &\longleftrightarrow (\Re\psi \; \Im\psi)\\
i^{-1}&\longleftrightarrow J,
\end{align*}
where $J:=\begin{pmatrix} 0 & 1 \\ -1 & 0 \end{pmatrix}$ is the complex structure on $H^1(\bbR^N;\bbR^{2}),$ the NLS equation can be written as 
\begin{equation*}
\partial_t \psi   = J H_{\lambda}' (\psi  ).
\end{equation*}

We note that since the Hamiltonian functional $H_{\lambda}$ defined in (\ref{eq:NLHamiltonian}) is nonautonomous, energy is {\it not } conserved. For $\psi\in H^1$ satisfying (\ref{eq:NLSE}), we have that 
$$\partial_t H_{\lambda}(\psi)= \frac{\lambda}{2}\int d\bx ~ (\partial_t V_h) |\psi|^2,\ \ {\mathbb P}-\mathrm{a.s.},$$
see Appendix B for a proof of this statement. Yet, $H_{\lambda}$ is almost surely invariant under global gauge transformations, 
\begin{equation*}
H_{\lambda}(e^{i\gamma}\psi)=H_{\lambda}(\psi), \; {\mathbb P}-\mathrm{a.s.}, 
\end{equation*}
and the associated conserved quantity is the ``charge'' 
\begin{equation*}
N(\psi):= \frac{1}{2}\int d\bx~ |\psi|^2.
\end{equation*}

The assumption $\partial_\mu m(\mu)>0,$ where $m(\mu)$ is defined in (A4), implies that $\eta_\mu$ appearing in assumption (A3) is a local minimizer of $H_{\lambda=0}(\psi)$ restricted to the balls ${\mathcal B}_m:= \{\psi\in H^1 : N(\psi)=m\},$ for $m>0;$ see \cite{GSS1}. They are critical points of the action functional 
\begin{equation}
\label{eq:EnergyFunctional}
\cE_\mu (\psi) := \frac{1}{2} \int d\bx~ (|\nabla\psi|^2 +\mu|\psi|^2)-F(\psi),
\end{equation}
where $\mu=\mu(m)$ is a Lagrange multiplier.

\subsection{Soliton Manifold}

Let $\eta_\mu$ be an (orbitally stable) soliton solution of (\ref{eq:NLEV}), and define  
the soliton manifold as 
\begin{equation*}
\cM_s:= \{\eta_\sigma:= T_{\ba\bv\gamma}\eta_\mu , \; \sigma= (\ba,\bv,\gamma,\mu) \in \bbR^N\times\bbR^N\times [0,2\pi)\times I \},
\end{equation*}
where $I$ appears in assumption (A3), and the combined transformation $T_{\ba\bv\gamma}$ is given by
\begin{equation*}
\psi_{\ba\bv\gamma}:= T_{\ba\bv\gamma}\psi = e^{i(\frac{1}{2}\bv\cdot (\bx-\ba)+\gamma)}\psi(\bx-\ba),
\end{equation*}
where $\bv,\ba\in\bbR^N$ and $\gamma\in [0,2\pi).$ Note that if the nonlinearity $f$ satisfies $f'(0)=0,$ then $I\subset \bbR^+.$ 

The tangent space to $\cM_s$ at $\eta_\mu\in \cM_s$ is given by 
\begin{equation*}
\cT_{\eta_\mu}\cM_s = span\{ E_t,E_g,E_b,E_s\},
\end{equation*}
where
\begin{align*}
E_t &:= \nabla_\ba T_\ba^{tr}\eta_\mu|_{\ba=0}=-\nabla\eta_\mu\\
E_g &:= \partial_\gamma T_\gamma^g \eta_\mu|_{\gamma=0} = i\eta_\mu\\
E_b &:= 2\nabla_\bv T_\bv^{b}\eta_\mu|_{\bv=0}=i\bx\eta_\mu\\
E_s &:= \partial_\mu\eta_\mu.
\end{align*}
In the following, we denote by 
\begin{align}
&e_j   := -\partial_{ x_j  }, \;  j=1,\cdots ,N,\nonumber \\ 
&e_{j+N} := i  x_j  , \; j=1,\cdots ,N, \nonumber\\
&e_{2N+1} := i, \nonumber\\
&e_{2N+2} := \partial_\mu, \label{eq:TangentBases}
\end{align}
which, when acting on $\eta_\sigma\in\cM_s,$ generate the basis vectors $\{e_  \alpha   \eta_\sigma\}_{  \alpha  =1}^{2N+2}$ of $\cT_{\eta_\sigma}\cM_s.$

The soliton manifold $\cM_s$ inherits a symplectic structure from $(H^1,\Xi).$ For $\sigma = (\ba,\bv,\gamma,\mu) \in \bbR^N\times\bbR^N\times [0,2\pi)\times I,$ the matrices
\begin{equation*}
\Xi_\sigma :=  P_\sigma J^{-1} P_\sigma 
\end{equation*}
where $P_\sigma$ is the $L^2$-orthogonal projection onto $\cT_{\eta_\sigma}\cM_s,$ define the induced symplectic structure on $\cM_s.$  Explicitly, we have
\begin{align}
\Xi_{\sigma}|_{\cT_{\eta_\sigma}\cM_s} :&= \{\langle e_  \alpha  \eta_\sigma, i e_\beta   \eta_\sigma \rangle\}_{1\le \alpha,\beta\le 2N+2}\nonumber \\
&= 
\left(
\begin{matrix}
0 & -m(\mu) {\mathbf 1}_{N\times N} & 0 & -\frac{1}{2}\bv m'(\mu) \\
m(\mu) {\mathbf 1}_{N\times N} & 0 & 0 & \ba m'(\mu) \\
0 & 0 & 0 & m'(\mu) \\
\frac{1}{2}\bv^T m'(\mu) & -\ba^Tm'(\mu) & -m'(\mu) & 0
\end{matrix}
\right),
\label{eq:Metric}
\end{align}
where ${\mathbf 1}_{N\times N}$ is the $N\times N$ identity matrix, and $(\cdot)^T$ stands for the transpose of a vector in $\bbR^N.$ One can easily show that if $\partial_\mu m(\mu)>0,$ then $\Xi_\sigma$ is invertible.

\subsection{Group structure}
The anti-selfadjoint operators $\{e_  \alpha  \}_{\alpha=1,\cdots,2N+1}$ defined in (\ref{eq:TangentBases}) form the generators of the Lie algebra ${\mathsf g}$ corresponding to the Heisenberg group ${\mathsf H}^{2N+1},$ where the latter is given by 
\begin{equation*}
(\ba,\bv,\gamma)\cdot(\ba',\bv',\gamma')=(\ba'',\bv'',\gamma''),
\end{equation*}
with $\ba''=\ba+\ba',$ $\bv''=\bv+\bv',$ and $\gamma''=\gamma'+\gamma+\frac{1}{2}\bv\cdot \ba'.$
Elements of ${\mathsf g}$ satisfy the commutation relations
\begin{equation}
[e_i,e_{j+N}]=-e_{2N+1}\delta_{ij}, \; i,j= 1,\cdots,N , \label{eq:CommutationRelations}
\end{equation}
and the rest of the commutators are zero.

\subsection{Zero modes}
The solitary wave solutions transform covariantly under translations and gauge transformations, i.e.,
\begin{equation*}
{\mathcal E}_\mu' (T_\ba^{tr} T_\gamma^g \eta_\mu)=0
\end{equation*}
for all $\ba\in \bbR^N$ and $\gamma\in [0,2\pi).$ Here, the prime stands for the Fr\'echet derivative. 

There are zero modes of the {\it Hessian},   
\begin{equation}
\label{eq:Hessian}
\cL_\mu:= -\Delta +\mu -f'(\eta_\mu),
\end{equation} 
associated to these symmetries. One can show that 
\begin{equation*}
i\cL_\mu: \cT_{\eta_\mu}\cM_s \rightarrow \cT_{\eta_\mu}\cM_s
\end{equation*}
with $(i\cL_\mu)^2 X = 0,$ for any vector $X\in \cT_{\eta_\mu}\cM_s.$ 

To see this, differentiate ${\mathcal E}_\mu' (T_\ba^{tr} \eta_\mu)=0$ with respect to $\ba$ and set $\ba$ to zero, which gives 
\begin{equation}
\label{eq:ZeroModes1}
\cE''(\eta_\mu)\nabla_\ba\eta_\mu(\bx-\ba) |_{\ba=0}= \cL_\mu E_t = 0.  
\end{equation}
Similarly, differentiating ${\mathcal E}_\mu'(T_\gamma^g \eta_\mu)=0$ with respect to $\gamma$ and setting $\gamma$ to zero gives
\begin{equation}
\cL_\mu E_g = 0.
\end{equation}
Using (\ref{eq:NLEV}), we have 
\begin{equation}
\cL_\mu E_b = iE_t,
\end{equation}
and
\begin{equation}
\label{eq:ZeroModes2}
\cL_\mu E_s = iE_g. 
\end{equation}

\subsection{Skew-Orthogonal Decomposition}\label{sec:SOD}

Consider the manifold $\cM_s' = \{\eta_\sigma, \ \ \sigma\in \Sigma_0\}, \ \ \Sigma_0 = \bbR^N\times \bbR^N \times [0,2\pi) \times I_0,$ where $I_0\subset I\backslash \partial I$ is bounded. We define the $\delta$ neighbourhood of $\cM_s'$ in $H^1$ as 
\begin{equation}
\label{eq:SODNbr}
U_\delta := \{ \psi\in H^1 , \ \ \inf_{\sigma\in \Sigma_0} \|\psi - \eta_\sigma\|_{H^1}\le \delta\}.
\end{equation}
Then, for $\delta$ small enough and for all $\psi\in U_\delta,$ there exists a unique $\sigma(\psi)\in C^1(U_\delta,\Sigma)$ such that 
\begin{equation*}
\Xi(\psi-\eta_{\sigma(\psi)}, X)= \langle \psi - \eta_{\sigma(\psi)}, i X\rangle =0,
\end{equation*}
for all $X\in\cT_{\eta_{\sigma(\psi)}}\cM_s.$ For a proof of this statement, we refer the reader to \cite{FJGS1}.

\begin{remark}\label{rem:SODPrime} 
The group element $T_{\ba\bv\gamma}\in {\mathsf H}^{2N+1}$ is given by 
\begin{equation*}
T_{\ba\bv\gamma} = e^{-\ba\cdot \partial_\bx} e^{i\frac{\bv\cdot \bx}{2}} e^{i\gamma} .
\end{equation*}
It follows from (\ref{eq:CommutationRelations}) that $T^{-1}_{\ba\bv\gamma} YT_{\ba\bv\gamma} \in {\mathsf g}$ if $Y\in {\mathsf g}.$ Furthermore, it follows from translational invariance that $\Xi(T_{\ba\bv\gamma} u, T_{\ba\bv\gamma} v)=\Xi(u,v),$ for $u,v\in L^2.$ Therefore,  
\begin{equation}
\label{eq:SO}
\Xi(w, Y \eta_{\sigma}) = \Xi ( w', Y' \eta_{\sigma'} ) = 0, \nonumber 
\end{equation}
$\forall Y\in {\mathsf g},$ where $Y'=T^{-1}_{\ba\bv\gamma} Y T_{\ba\bv\gamma}\in {\mathsf g},$ $w'= T^{-1}_{\ba\bv\gamma}w,$ and $\eta_{\sigma'}=T^{-1}_{\ba\bv\gamma}\eta_\sigma.$
\end{remark}


\section{Long-time dynamics in a random potential}\label{sec:ProofMain3}

In this section, we study the long-time dynamics of the center of mass of the soliton, and we prove Theorem \ref{th:Main3}.  The main ingredient of our analysis is an extension of the analysis of solitary wave dynamics in time-dependent  potentials, \cite{A-S1}, to the case of random potentials; see also \cite{FJGS1}-\cite{HZ1}, \cite{A-S2} and \cite{HZ2} for the time-independent case. Furthermore, we improve slightly the error bounds in \cite{A-S1} by obtaining estimates that depend both on the coupling constant $\lambda$ and the scale of the spatial variation of the external potential, $h.$ 

The weak solution of (\ref{eq:NLSE}) with initial condition $\phi$ is given by the Duhamel formula
\begin{equation}
\label{eq:DuhamelEq1}
\psi  (t) = U(t,0)\phi -i\lambda \int_0^t ds~ U(t,s)  V_h(\bx,t)\psi (s)  +i\int_0^t~ds U(t,s)f(s,\psi  (s)),
\end{equation}
where $U(t,0)= e^{i\Delta t}$ is the unitary operator corresponding to free time evolution. One can show that it is unique, and that it is ${\mathbb P}$-a.s. in $H^1$; see Appendix A for a proof. It follows from (\ref{eq:DuhamelEq1}) that $\psi  $ is $\omega$-measurable. We have the following proposition.

\begin{proposition}\label{pr:Main}
Consider (\ref{eq:NLSE}) with initial condition satisfying (\ref{eq:InitialCond}).
Suppose assumptions (A1)-(A5) and (B1) hold. Then there exists $h_0>0$ and positive constants $C$ and $\overline{C},$ such that, for all $h\in (0,h_0),$ any fixed $\epsilon\in (0,1),$  and all times $t\in [0,\overline{C} \epsilon |\log  h|/\lambda h),$ we have that 
\begin{equation*}
\psi(\bx,t)= \eta_{\sigma(t)}(\bx)+w(\bx,t), \ \ \omega\in\overline{\Omega},
\end{equation*}
with $$\sup_{\omega\in\overline{\Omega}}\; \sup_{t\in [0,\overline{C} \epsilon |\log  h|/\lambda h)}\|w\|_{H^1}\le C h^{1-\frac{\epsilon}{2}},$$ uniformly in $\lambda\in (h^{1-\epsilon},1],$ and the parameters $\ba,\bv,\gamma$ and $\mu$ satisfy the effective differential equations, for $\omega\in\overline{\Omega}$ and $t\in [0,\overline{C}\epsilon |\log h|/\lambda h),$
\begin{align}
\partial_t \ba &= \bv+ O(h^{2-\epsilon}) \label{eq:EffectiveA}\\
\partial_t \bv &= -2 \lambda \nabla  V_h(\ba,t) + O(h^{2-\epsilon})\\
\partial_t \gamma &= \mu -\lambda V_h(\ba,t)+\frac{1}{4}v^2 + O(h^{2-\epsilon})\\
\partial_t \mu &= O(h^{2-\epsilon}),\label{eq:EffectiveMu}
\end{align}
with $\|\ba(0)-\ba_0\|,\|\bv(0)-\bv_0\|, |\gamma(0)-\gamma_0|, |\mu(0)-\mu_0| = O(h).$
\end{proposition}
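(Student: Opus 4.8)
The plan is to adapt the modulation-equation and Lyapunov-functional scheme of \cite{A-S1} and \cite{FJGS1} to the random setting, carrying it out realization-by-realization on the full-measure set $\overline\Omega$ on which $V(\cdot\,;\omega)$ has the regularity in (B1), while tracking the dependence on \emph{both} $\lambda$ and $h$ throughout. Every estimate below is deterministic once the uniform $W^{1,\infty}$-bounds on $V(\cdot\,;\omega)$ are fixed, so taking the supremum over $\omega\in\overline\Omega$ at the end introduces no further probabilistic input and yields the expectation bound of Theorem \ref{th:Main1} for free.

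First I would set up the decomposition: so long as $\psi(t)$ stays in a tubular neighbourhood of $\cM_s$, Subsection \ref{sec:SOD} furnishes a unique $C^1$ path $\sigma(t)=(\ba(t),\bv(t),\gamma(t),\mu(t))$ with $w:=\psi-\eta_{\sigma(t)}$ satisfying $\Xi(w,e_\alpha\eta_\sigma)=0$ for $\alpha=1,\dots,2N+2$. Differentiating these $2N+2$ constraints in time and inserting $\partial_t\psi=JH_\lambda'(\psi)$ produces a linear system for $\dot\sigma$ whose matrix is the symplectic form $\Xi_\sigma$ of (\ref{eq:Metric}); since $\partial_\mu m(\mu)>0$ makes $\Xi_\sigma$ invertible, I can solve for $\dot\sigma$. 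Expanding $V_h(\bx)=V_h(\ba)+\nabla V_h(\ba)\cdot(\bx-\ba)+O(\|\bx-\ba\|^2\,\|\nabla^2V_h\|)$ about the soliton centre, and using $\nabla V_h=O(h)$, $\nabla^2 V_h=O(h^2)$ on the $O(1)$-scale support of $\eta_\mu$, the leading terms reproduce the effective equations (\ref{eq:EffectiveA})--(\ref{eq:EffectiveMu}); the remainders are of size $O(h^2+\|w\|_{H^1}^2)$, which is $O(h^{2-\epsilon})$ once the fluctuation bound $\|w\|_{H^1}\lesssim h^{1-\epsilon/2}$ is in force.

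The heart of the argument is the control of $w$ through a coercive, almost-conserved functional. I would take the augmented action adapted to the moving soliton,
$$\mathcal I(t):=\cE_{\mu(t)}(\psi)+\tfrac\lambda2\!\int V_h|\psi|^2+\tfrac14\|\bv\|^2\,N(\psi)-\bv\cdot\mathbf P(\psi),$$
where $\mathbf P$ is the momentum functional, built so that $\eta_\sigma$ is a constrained critical point and the first variation in $w$ vanishes by skew-orthogonality. Its quadratic part is $\tfrac12\langle\cL_\mu w,w\rangle$ plus potential corrections, and assumptions (A4)--(A5), giving positivity of $\cL_\mu$ on the skew-orthogonal complement of $\cN(\cL_\mu)$ modulo the constraint directions, yield coercivity $\mathcal I(t)-\mathcal I(\eta_\sigma)\ge c\|w\|_{H^1}^2-O(\lambda h^2)$ for $\|w\|_{H^1}$ small. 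Differentiating $\mathcal I$ and using $\partial_tH_\lambda=\tfrac\lambda2\int(\partial_tV_h)|\psi|^2$ together with the modulation equations, the terms linear in $w$ cancel by the critical-point property and the choice of multipliers, leaving a differential inequality of the form
$$\tfrac{d}{dt}\|w\|_{H^1}^2\le C\lambda h\,\|w\|_{H^1}^2+C\big(\lambda h\,\|w\|_{H^1}+\lambda h^2\big)\|w\|_{H^1},$$
that is, secular growth at rate $O(\lambda h)$ driven by a source of size $O(\lambda h^2)$. Integrating on a maximal interval where $\|w\|_{H^1}$ stays small gives $\|w(t)\|_{H^1}\lesssim e^{C\lambda h t}(\|w(0)\|_{H^1}+h^2)$; since $\|w(0)\|_{H^1}=O(h)$ by (\ref{eq:InitialCond}), the horizon $t\le\overline C\epsilon|\log h|/\lambda h$ makes the exponential factor $O(h^{-\epsilon/2})$, so $\|w(t)\|_{H^1}\lesssim h^{1-\epsilon/2}$, which re-enters the neighbourhood and closes a standard continuity/bootstrap argument for $h$ small and $\lambda\in(h^{1-\epsilon},1]$; the lower bound on $\lambda$ is exactly what keeps the $O(h^{2-\epsilon})$ modulation remainders below the $O(\lambda h)$ main force.

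I expect the decisive obstacle to be the time-derivative estimate for $\mathcal I$: one must verify that, after the soliton motion has been subtracted via the modulation equations, the non-self-adjointness of the linearization (reflected in the Jordan structure $(i\cL_\mu)^2X=0$ on $\cT_{\eta_\mu}\cM_s$) together with the slow spatial variation of the random potential produces only an $O(\lambda h)$ growth rate and an $O(\lambda h^2)$ source, since it is precisely this rate, integrated over the $|\log h|/\lambda h$ time scale, that yields the sharp $h^{1-\epsilon/2}$ loss and fixes the admissible $\overline C$. The accompanying technical point is to dominate the cubic-in-$w$ remainders of the nonlinearity, using the estimates in the Remark following (B2), uniformly in $\lambda$ and $h$.
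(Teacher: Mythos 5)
Your proposal is correct and follows essentially the same route as the paper's own proof: skew-orthogonal decomposition and modulation equations with $(\lambda,h)$-explicit remainders (Lemma \ref{lm:RepEqMotion}), coercivity of the linearized action on the skew-orthogonal complement via (A4)--(A5) (Lemma \ref{lm:LFLowerBound}), an almost-conserved functional whose growth has rate $O(\lambda h)$ and source $O(\lambda h^2)$ (Lemma \ref{lm:UpperBoundLF}), and propagation over times $O(|\log h|/\lambda h)$ with the condition $\lambda>h^{1-\epsilon}$ used exactly as you say, to keep the quadratic-in-$w$ and remainder terms below the $O(\lambda h)$ rate. The only cosmetic differences are that the paper works with $\cE_\mu(T^{-1}_{\ba\bv\gamma}\psi)-\cE_\mu(\eta_\mu)$ in the moving frame, which identity (\ref{eq:CMEnergy}) shows is equivalent to your lab-frame augmented action, and that your continuous Gronwall step appears there as a discrete iteration of Lemma \ref{lm:UpperBdFluctuation} over $O(\epsilon|\log h|)$ intervals of length $C_2/\lambda h$, yielding the same $h^{-\epsilon/2}$ loss.
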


The proof of Proposition \ref{pr:Main} is given in Subsection \ref{sec:ProofPrMain}. We first find the reparametrized equations of motion corresponding to the restriction of the Hamiltonian flow generated by the NLS equation to the soliton manifold. 

\begin{remark}
In the special case of local nonlinearities in dimension $N=1,$ and under the strong assumption on the initial condition,
$$\psi(t=0)=\eta_{\sigma_0},$$
one obtains a slightly better estimate on the $H^1$ norm of the fluctuation $w$ in Proposition \ref{pr:Main}, 
$$\sup_{\omega\in\overline{\Omega}}\; \sup_{t\in [0,\overline{C} \epsilon |\log  h|/\lambda h)}\|w\|_{H^1}\le C h^{2-\frac{\epsilon}{2}},$$ see, for example, \cite{HZ2} for a discussion of  the cubic  one
dimensional NLS 
equation  with a slowly varying time-independent potential.  Here, we choose to work with general nonlinearities in all dimensions under weaker hypotheses for the initial condition.
\end{remark}

\subsection{Reparametrized equations of motion}\label{sec:RepEqMotion}

Suppose $\psi  $ satisfies the initial value problem (\ref{eq:NLSE}) such that, ${\mathbb P}$-almost surely, $\psi   \in U_\delta \subset H^1,$ where $U_\delta$ is defined in (\ref{eq:SODNbr}), Subsection \ref{sec:SOD}. By the skew-orthogonal decomposition, there exists a unique 
$$\sigma=\sigma(\psi)=(\ba,\bv,\gamma,\mu)\in \Sigma = \bbR^N\times \bbR^N\times [0,2\pi)\times I$$ and $w  '\in H^1$ such that 
\begin{equation*}
\psi   = \eta_\sigma + w  ' , \ \ \forall \omega\in\overline{\Omega}, 
\end{equation*}
with 
\begin{equation*}
w'    \perp J^{-1} \cT_{\eta_\sigma}\cM_s .
\end{equation*}  
For $\omega\in\overline{\Omega},$ let
\begin{equation}
\label{eq:CMSol}
u  := T^{-1}_{\ba\bv\gamma}\psi   = \eta_\mu + w  , 
\end{equation}
where $w  =T^{-1}_{\ba\bv\gamma}w  '.$ 

We introduce the coefficients  
\begin{equation*}
c_j := \partial_t a_j - v_j , \ \ c_{N+j}:=-\frac{1}{2} \partial_t v_j - \lambda \partial_{x_j} V_h(\ba,t) , \ \ j=1,\cdots ,N ,
\end{equation*}
\begin{equation}
\label{eq:Coefficients}
c_{2N+1} := \mu -\frac{1}{4}v^2 + \frac{1}{2}\partial_t \ba \cdot \bv - \lambda V_h(\ba,t)-\partial_t\gamma , \ \ c_{2N+2}:=-\partial_t \mu.
\end{equation}
We denote by 
\begin{equation}
\label{eq:Alpha}
|c|:= \sup_{i\in \{1,\cdots ,2N+2\}}|c_i|,
\end{equation}
and 
$$C (|c|,w,h):= \sup_{\omega\in\overline{\Omega}}\{|c|\|w  \|_{H^1}+\lambda h^2 + \|w  \|_{H^1}^2\}.$$ We have the following lemma.

\begin{lemma}\label{lm:RepEqMotion}
Consider the NLS equation (\ref{eq:NLSE}) with initial condition (\ref{eq:InitialCond}). Suppose assumptions (A1)-(A5) and (B1) hold, and that for $\omega\in\overline{\Omega},$ $\psi  (t)\in U_\delta$ for $t\in [0,T],$ where $\delta$ and $T$ are independent of $\omega\in\overline{\Omega}.$ Then the parameter $\sigma=(\ba,\bv,\gamma,\mu),$ as given above, satisfy 
\begin{align*}
\partial_t a_j &= v_j + O(C  (|c|,w  ,h))\\
\partial_t v_j &= -2 \partial_{x_j} \lambda V_h(\ba,t;\omega) + O(C  (|c| ,w  ,h)) \\
\partial_t \gamma &= \mu -\frac{1}{4} \|\bv\|^2 +\frac{1}{2}\partial_t \ba \cdot \bv -\lambda V_h(\ba,t;\omega) + O(C  (|c|, w  ,h)) \\
\partial_t \mu &= O(C  (|c|,w  ,h)) ,
\end{align*}
for $t\in (0,T), j=1,\cdots N.$
\end{lemma}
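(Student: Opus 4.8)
The plan is to perform the symplectic (Lyapunov--Schmidt) projection of the NLS flow onto $\cM_s$, keeping explicit track of the two small parameters $h$ and $\lambda$. First I would pass to the moving frame $u := T^{-1}_{\ba\bv\gamma}\psi = \eta_\mu + w$ of (\ref{eq:CMSol}) and compute $\partial_t u$ by differentiating the group action. Writing $T_{\ba\bv\gamma} = e^{-\ba\cdot\partial_\bx}e^{i\bv\cdot\bx/2}e^{i\gamma}$ and using the Heisenberg commutation relations (\ref{eq:CommutationRelations}), the operator $(\partial_t T^{-1}_{\ba\bv\gamma})T_{\ba\bv\gamma}$ produces the combinations $\partial_t\ba\cdot\nabla$, $\tfrac12\partial_t\bv\cdot\bx$, $\partial_t\gamma$ together with the Heisenberg correction $\tfrac12\partial_t\ba\cdot\bv$; conjugating $-\Delta$ by the boost gives $-\Delta - i\bv\cdot\nabla + \tfrac14\|\bv\|^2$, the potential becomes multiplication by $V_h(\bx+\ba,t)$, and the nonlinearity is unchanged by (A2). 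Using the soliton equation (\ref{eq:NLEV}) to cancel the leading term and the Hessian $\cL_\mu$ of (\ref{eq:Hessian}) to collect the linear part, the equation takes the form $\partial_t w + i\cL_\mu w = \sum_{\alpha} c_\alpha\, e_\alpha\eta_\mu + G$ (up to signs), where the coefficients of the tangent generators $e_\alpha\eta_\mu$ are exactly the $c_\alpha$ of (\ref{eq:Coefficients}): the constant and linear Taylor coefficients $\lambda V_h(\ba)$ and $\lambda\nabla V_h(\ba)$ of the potential are absorbed into $c_{2N+1}$ and $c_{N+j}$ respectively.

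The key structural point, which I would check by collecting terms, is that every contribution of $G$ that is linear in $w$ assembles into $\sum_j c_j\partial_{x_j}w + i\sum_j c_{N+j}x_j w + i c_{2N+1}w$, i.e. each is proportional to one of the $c_\alpha$. In particular the a priori dangerous term $-i\lambda V_h(\ba)w$ combines with $i\mu w$ and the phase terms into $i c_{2N+1}w$, so that no stray $O(\lambda\|w\|_{H^1})$ term survives. The genuine remainder in $G$ then reduces to the quadratic nonlinear term $N(w) := f(\eta_\mu+w)-f(\eta_\mu)-f'(\eta_\mu)w = O(\|w\|_{H^1}^2)$, controlled by the Fr\'echet-derivative estimates following (A1), and the second-order Taylor remainder $\lambda Q(\eta_\mu+w)$ of the potential with $|Q|\le Ch^2\|\bx\|^2$; since $\nabla^2 V_h = O(h^2)$ the latter is of order $\lambda h^2$ once paired against the localized tangent vectors, using the weighted bounds of (A3). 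Hence, after projection, $G = O\big(|c|\,\|w\|_{H^1} + \lambda h^2 + \|w\|_{H^1}^2\big) = O(C(|c|,w,h))$.

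I would then project the equation onto the basis $\{e_\beta\eta_\mu\}$ of $\cT_{\eta_\mu}\cM_s$ by applying $\Xi(\,\cdot\,,e_\beta\eta_\mu) = \langle\,\cdot\,, i e_\beta\eta_\mu\rangle$. Three simplifications occur. (i) By Remark~\ref{rem:SODPrime} the skew-orthogonality $\Xi(w,e_\beta\eta_\mu)=0$ holds for all $t$, so differentiating gives $\Xi(\partial_t w,e_\beta\eta_\mu) = -\Xi(w,\partial_t(e_\beta\eta_\mu)) = -\partial_t\mu\,\Xi(w,e_\beta\partial_\mu\eta_\mu) = O(|c|\,\|w\|_{H^1})$. (ii) $\Xi(i\cL_\mu w,e_\beta\eta_\mu) = \langle w,\cL_\mu e_\beta\eta_\mu\rangle$ vanishes, because by the zero-mode relations (\ref{eq:ZeroModes1})--(\ref{eq:ZeroModes2}) the vector $\cL_\mu e_\beta\eta_\mu$ is either $0$ or $i$ times a tangent vector, and $\langle w, i e_\gamma\eta_\mu\rangle = \Xi(w,e_\gamma\eta_\mu)=0$. (iii) The source pairs against the symplectic matrix, $\sum_\alpha c_\alpha\Xi(e_\alpha\eta_\mu,e_\beta\eta_\mu) = \sum_\alpha c_\alpha(\Xi_\mu)_{\alpha\beta}$, where $\Xi_\mu$ is (\ref{eq:Metric}) evaluated at $\ba=\bv=0$, which is block-diagonal and invertible precisely because $\partial_\mu m(\mu)>0$ by (A4). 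Inverting $\Xi_\mu$, with a bound uniform for $\mu$ in the bounded set $I_0$, yields $|c| = O(C(|c|,w,h))$; substituting the definitions (\ref{eq:Coefficients}) of the $c_\alpha$ then gives the four asserted equations.

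The main obstacle is the first step: establishing the precise algebraic cancellation that renders every $w$-linear term proportional to some $c_\alpha$. This is exactly what forces the error to be $O(|c|\,\|w\|_{H^1})$ rather than $O(\lambda\|w\|_{H^1})$, and it is indispensable for the later closing of the estimate, where the $|c|\,\|w\|_{H^1}$ contribution is absorbed into the left-hand side once $\|w\|_{H^1}$ is known to be small (carried out in the proof of Proposition~\ref{pr:Main}). Correctly tracking the Heisenberg correction $\tfrac12\partial_t\ba\cdot\bv$ and the boost-conjugation signs, so that the $c_\alpha$ emerge exactly as in (\ref{eq:Coefficients}), together with the weighted-norm control of the quadratic potential remainder, are the delicate points; the projection itself is routine given the zero-mode structure and the invertibility of $\Xi_\mu$.
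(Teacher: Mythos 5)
Your proposal is correct and follows essentially the same route as the paper's proof: pass to the moving frame, absorb the zeroth- and first-order Taylor coefficients of the potential into $c_{2N+1}$ and $c_{N+j}$ (so that only the remainder $\cR_V$, of size $O(\lambda h^2)$ when paired against the spatially localized tangent vectors via (A3), acts on $\eta_\mu+w$), differentiate the skew-orthogonality relation in time, kill the $\cL_\mu$ contribution using the zero-mode relations (\ref{eq:ZeroModes1})--(\ref{eq:ZeroModes2}), and invert the symplectic matrix, which is nondegenerate by (A4), with the remaining errors $N_\mu(w)=O(\|w\|_{H^1}^2)$ and $O(|c|\,\|w\|_{H^1})$. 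The only difference is presentational: the paper inverts the symplectic form explicitly, recording the four equations (\ref{eq:a})--(\ref{eq:m}) with explicit remainder terms (reused later in the proof of Theorem \ref{th:Main3}), whereas you invert $\Xi_\mu$ abstractly.
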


\begin{proof}

We first find the equations of motion in the center of mass reference frame.
For $\omega\in\overline{\Omega},$ we differentiate 
$$u (\bx,t) =T^{-1}_{\ba\bv\gamma}\psi (\bx,t)   = e^{-\frac{i}{2}(\bv\cdot \bx + \gamma)} \psi  (\bx+\ba)$$ 
with respect to $t,$ and use the fact that 
\begin{align*}
&e^{-\frac{i}{2}(\bv\cdot \bx +\gamma)}\Delta \psi  (\bx+\ba) = \Delta u(\bx,t)   + i \bv\cdot \nabla u(\bx,t)   -\frac{\|\bv\|^2}{4} \\
&e^{-\frac{i}{2} (\bv\cdot \bx + \gamma)}f(\psi  (\bx+\ba,t)) = f(u(\bx,t)).
\end{align*}
We have that
\begin{equation}
\label{eq:CenterMassDyn}
\partial_t u   = -i ((-\Delta +\mu)u   -f(u  )) + \sum_{j=1}^{2N+1}c_j e_j u   -i \cR_V u  
\end{equation}
where 
\begin{equation*}
\cR_V := \lambda V_h(\bx+\ba,t;\omega) - \lambda V_h(\ba,t;\omega)- \lambda \nabla V_h(\ba,t;\omega) \cdot \bx,
\end{equation*}
and $e_j$ and $c_j$ are as defined in (\ref{eq:TangentBases}) and (\ref{eq:Coefficients}) respectively. 
Equivalently, we have that  
\begin{equation*}
\partial_t u   = -i \cE_\mu'(u  ) + \sum_{j=1}^{2N+1} c_j e_j u   -i\cR_V u  ,
\end{equation*}
where $\cE_\mu$ appears in (\ref{eq:EnergyFunctional}). 

We now use (\ref{eq:CenterMassDyn}) and the skew-orthogonal decomposition to find equations for the parameters $\sigma=(\ba,\bv,\gamma,\mu)$ and $w,$ for $\omega\in\overline{\Omega}.$ Since
\begin{equation*}
\cE_\mu'(\eta_\mu)=0,
\end{equation*}
we have that 
\begin{equation*}
\cE_\mu' (u  )= \cL_\mu w   + N_\mu (w  ), 
\end{equation*}
where $\cL_\mu = (-\Delta +\mu -f'(\eta_\mu)) = \cE_\mu''(\eta_\mu)$ and $N_\mu (w  )= f(\eta_\mu+w  ) - f(\eta_\mu) -f'(\eta_\mu)w  .$ Substituting this back in (\ref{eq:CenterMassDyn}) gives
\begin{equation*}
\partial_t w   = (-i\cL_\mu + \sum_{j=1}^{2N+1} c_j e_j -i\cR_V)w   + N_\mu (w  ) + \sum_{j=1}^{2N+2}  c_j e_j \eta_\mu -i \cR_V \eta_\mu .
\end{equation*}
Now, we know that $\langle X,i w  \rangle = 0$ for all $X\in \cT_\eta \cM_s,$ $\omega\in\overline{\Omega},$ see Remark \ref{rem:SODPrime}. It follows that 
\begin{equation*}
\partial_t \langle X,i w  \rangle = \partial_t \mu \ \ \langle \partial_\mu X, iw  \rangle + \langle X, i\partial_t w   \rangle = 0.
\end{equation*}  
Therefore, for $\omega\in\overline{\Omega},$ we have that 
\begin{align*}
&\partial_t \mu\ \   \langle \partial_\mu X,i w  \rangle = \langle iX, \partial_t w  \rangle \\
&= -\langle iX, i \cL_\mu w  \rangle - \langle iX, i \cR_V(\eta_\mu + w  ) \rangle +\langle iX, N_\mu (w  )\rangle  +\langle iX, \sum_{j=1}^{2N+1}  c_j e_j w  \rangle \\ &+\langle iX ,\sum_{j=1}^{2N+2} c_j e_j \eta_\mu \rangle .
\end{align*}
It follows from (\ref{eq:ZeroModes1})-(\ref{eq:ZeroModes2}) that $\langle iX, i\cL_\mu w  \rangle =0.$ Together with 
\begin{equation*}
[e_j, i]= 0, \ \ e_j^* = - e_j, \ \ j=1,\cdots ,2N+2,
\end{equation*}
we have that
\begin{equation}
\label{eq:IntDyn}
\sum_{j=1}^{2N+2} c_j \langle   e_j X,i w  \rangle = -\langle X, \cR_V(w  +\eta_\mu) + N_\mu (w  ) \rangle + \sum_{j=1}^{2N+2}c_j \langle X,i  e_j\eta_\mu \rangle ,
\end{equation}
for $\omega\in\overline{\Omega}.$
Now, choosing $X=E_k,$ where $E_k, k\in \{1,\cdots , 2N+2\},$ is a basis vector of $\cT_{\eta_\mu} \cM_s$ gives 
\begin{equation*}
\sum_{j=1}^{2N+2} (\Xi)_{kj} c_j = \langle E_k, N_\mu (w  )+\cR_V(w  +\eta_\mu)\rangle + \sum_{j=1}^{2N+2}c_j \langle i e_j E_k, w   \rangle , \ \ \omega\in\overline{\Omega},
\end{equation*} 
where $\Xi$ appears in (\ref{eq:Metric}).
Replacing the definition of $c_j,$ appearing in (\ref{eq:Coefficients}), in (\ref{eq:IntDyn}), and using the fact that 
\begin{equation*}
\langle x_j \eta_\mu, i\cR_V \eta_\mu \rangle = 0 , \ \ \eta_\mu (x) = \eta_\mu (\|x\|), \ \ \langle \eta_\mu, i\cR_V \eta_\mu\rangle =0,
\end{equation*}
gives, for $\omega\in\overline{\Omega},$
\begin{equation}
\label{eq:a}
\partial_t a_k = v_k + \frac{1}{m(\mu)}(\langle i x_k\eta_\mu,  N_\mu (w)-i\cR_Vw \rangle + \sum_{j=1}^{2N+2} c_j \langle e_j x_k \eta_\mu, w   \rangle) , 
\end{equation}
\begin{equation}
\label{eq:v}
\partial_t v_k = -2 \partial_{x_k} \lambda V_h(\ba,t;\omega) + \frac{2}{m(\mu)} (\langle \partial_{x_k}\eta_\mu, N_\mu (w  )+ \cR_V w  \rangle - \sum_{j=1}^{2N+2} c_j \langle i e_j \partial_{x_k}\eta_\mu, w    \rangle + \langle \partial_{x_k} \eta_\mu, \cR_V \eta_\mu\rangle ),
\end{equation}
\begin{align}
\partial_t \gamma = & \mu -\frac{1}{4}\|\bv\|^2 + \frac{1}{2}\partial_t \ba\cdot \bv - \lambda V_h(\ba,t;\omega) - \frac{1}{m'(\mu)} (\langle \partial_\mu\eta_\mu, N_\mu (w  ) + \cR_V(w  +\eta_\mu) \rangle \nonumber \\ & -\sum_{j=1}^{2N+2} c_j \langle i e_j \partial_\mu \eta_\mu, w  \rangle ),\label{eq:g}
\end{align}
\begin{equation}
\partial_t \mu =\frac{1}{m'(\mu)} \langle i\eta_\mu , N_\mu (w  ) -i \cR_V w   \rangle -\sum_{j=1}^{2N+2} c_j \langle e_j \eta_\mu ,w  \rangle .\label{eq:m}
\end{equation}

The claim of the lemma follows directly from Assumptions (A1) and (A3), which imply that 
\begin{equation*}
\sup_{\omega\in\overline{\Omega}}\|\cR_V E_k\|_{L^2} = O(\lambda h^2 ) , 
\end{equation*}
and 
\begin{equation*}
\sup_{\omega\in\overline{\Omega}}\|N_\mu (w  ) \|_{H^1} \le C ~\sup_{\omega\in\overline{\Omega}}\|w  \|^2_{H^1},
\end{equation*}
for $\sup_{\omega\in\overline{\Omega}} \|w  \|_{H^1}\le 1 $ and some constant $C$ that is independent of $h$ and $\lambda.$ 
 
\end{proof}

\subsection{Control of the fluctuation}

We use an approximate Lyapunov functional to obtain an explicit control on $\sup_{\omega\in\overline{\Omega}}\|w  \|_{H^1}$ and $\sup_{\omega\in\overline{\Omega}}|c|.$ This approach dates back to \cite{We1}, and has been used, with various generalizations, in \cite{FJGS1}-\cite{HZ2}. We define the Lyapunov functional 
\begin{equation}
\label{eq:LyapunovFunctional}
\cC_\mu (u,v):= \cE_\mu (u)- \cE_\mu (v) , \ \ u,v\in H^1(\bbR^N),
\end{equation}
where $\cE_\mu$ is given in (\ref{eq:EnergyFunctional}). 
We proceed by estimating upper and lower bounds for $\cC_\mu(u  ,\eta_\mu),$ $\omega\in\overline{\Omega},$ where $u  =\eta_\mu + w  $ is defined in (\ref{eq:CMSol}).

\begin{lemma}\label{lm:UpperBoundLF}
Suppose the hypotheses of Lemma \ref{lm:RepEqMotion} hold. Then there exists a constant $C$ independent of $h$ such that 
\begin{equation}
\label{eq:UpperBoundLF}
\sup_{\omega\in\overline{\Omega}}|\partial_t \cC_\mu (u  ,\eta_\mu)|\le C~ \sup_{\omega\in\overline{\Omega}}  \left (\lambda h^2 \|w  \|_{H^1} + (|c|+\lambda h + \|w\|_{H^1}^2) \|w  \|^2_{H^1}\right ),
\end{equation}
where $|c|$ appears in (\ref{eq:Alpha}), uniformly in $t\in (0,T)$ and $\lambda\in (h^{1-\epsilon},1].$

\end{lemma}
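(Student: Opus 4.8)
The plan is to differentiate the Lyapunov functional $\cC_\mu(u,\eta_\mu) = \cE_\mu(u) - \cE_\mu(\eta_\mu)$ directly in time and bound each resulting contribution. Writing $u = \eta_\mu + w$, the functional splits via the Taylor expansion from the Remark following (A1)--(B2): since $\cE_\mu'(\eta_\mu)=0$, we have $\cC_\mu(u,\eta_\mu) = \tfrac{1}{2}\langle \cL_\mu w, w\rangle + O(\|w\|_{H^1}^3)$, where $\cL_\mu = \cE_\mu''(\eta_\mu)$ is the Hessian. But because $\eta_\mu$ itself depends on $t$ through the parameter $\mu=\mu(t)$, when I differentiate I must account for two distinct sources of time dependence: the evolution of the fluctuation $w$ governed by the reparametrized equation derived in the proof of Lemma~\ref{lm:RepEqMotion}, and the drift of the base point $\eta_\mu$ through $\partial_t\mu$. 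So the first step is to write $\partial_t \cC_\mu(u,\eta_\mu) = \langle \cE_\mu'(u), \partial_t u\rangle - \langle \cE_\mu'(\eta_\mu), \partial_t \eta_\mu\rangle + (\partial_t\mu)(\partial_\mu \cE_\mu)(u,\eta_\mu)$, being careful about the explicit $\mu$-dependence in the functional $\cE_\mu$ (the $\tfrac{\mu}{2}\int|\psi|^2$ term).

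Next I would substitute the evolution equation $\partial_t u = -i\cE_\mu'(u) + \sum_{j=1}^{2N+1} c_j e_j u - i\cR_V u$ from (\ref{eq:CenterMassDyn}) into the first inner product. The leading term $\langle \cE_\mu'(u), -i\cE_\mu'(u)\rangle$ vanishes because $\langle X, iX\rangle = \Xi(X,X)=0$ for the real inner product (the integrand is purely imaginary), which is the key cancellation that makes this a genuine approximate Lyapunov functional rather than a quantity growing at rate $O(1)$. This leaves three groups of terms to estimate: the symmetry-generator contributions $\sum_j c_j \langle \cE_\mu'(u), e_j u\rangle$, the potential-remainder contribution $\langle \cE_\mu'(u), -i\cR_V u\rangle$, and the terms arising from $\partial_t\mu$ together with the explicit $\mu$-derivative of $\cE_\mu$. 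Using $\cE_\mu'(u) = \cL_\mu w + N_\mu(w)$ and the zero-mode relations (\ref{eq:ZeroModes1})--(\ref{eq:ZeroModes2}), the generators $e_j$ acting on $\eta_\mu$ land in or near the kernel of $\cL_\mu$, so $\langle \cL_\mu w, e_j\eta_\mu\rangle$ either vanishes or reduces to inner products of $w$ against fixed profiles; these produce the $|c|\,\|w\|_{H^1}^2$ terms after pairing each $c_j$ with a factor of $\|w\|_{H^1}$. The nonlinear remainder $N_\mu(w)$ contributes the $\|w\|_{H^1}^2\cdot\|w\|_{H^1}^2$-type term via the estimate $\|N_\mu(w)\|_{H^1}\le C\|w\|_{H^1}^2$ already used in Lemma~\ref{lm:RepEqMotion}.

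For the potential term, the crucial point is that $\cR_V = \lambda V_h(\bx+\ba,t) - \lambda V_h(\ba,t) - \lambda\nabla V_h(\ba,t)\cdot\bx$ is the second-order Taylor remainder of the slowly varying potential, so by assumption (B1) and the $W^{2,\infty}$ bound one has $\|\cR_V \eta_\mu\|_{L^2} = O(\lambda h^2)$ and, when paired against $w$, the decay weights in (A3) give $\langle \text{(something)}, -i\cR_V u\rangle = O(\lambda h^2 \|w\|_{H^1}) + O(\lambda h\,\|w\|_{H^1}^2)$. The first piece is linear in $w$ and supplies the $\lambda h^2\|w\|_{H^1}$ term in (\ref{eq:UpperBoundLF}); the second, where one derivative of the potential can be absorbed, supplies the $\lambda h\,\|w\|_{H^1}^2$ term. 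Collecting all contributions and taking the supremum over $\overline{\Omega}$ yields exactly the stated bound, with the constant $C$ independent of $h$ because every estimate above is uniform in $\omega\in\overline{\Omega}$ by (B1).

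I expect the main obstacle to be the careful bookkeeping around the $\partial_t\mu$ term and the explicit $\mu$-dependence of $\cE_\mu$. Because the base soliton $\eta_\mu$ is itself moving along the manifold, one must verify that the apparently dangerous $O(\partial_t\mu)$ contributions either cancel against the explicit $\mu$-derivative of the functional or are themselves controlled by $C(|c|,w,h)$ (note $\partial_t\mu = -c_{2N+2}$, so $\partial_t\mu$ is already $O(|c|)$). Making this cancellation precise — so that no term of size $O(\lambda h^2)$ or $O(|c|\,\|w\|_{H^1})$ survives without a compensating $\|w\|_{H^1}$ factor — is where the delicate use of the skew-orthogonality $\langle X, iw\rangle = 0$ and the Hamiltonian structure must be deployed most carefully. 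A secondary subtlety is ensuring the cubic error from the Taylor expansion of $\cC_\mu$ is differentiated correctly and contributes only at the stated order rather than producing an uncontrolled $\|w\|_{H^1}^2$ term with a bad prefactor.
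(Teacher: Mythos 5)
Your strategy is sound and yields the right bound, but it is genuinely different from the paper's proof. You differentiate $\cC_\mu(u,\eta_\mu)$ directly, substitute the evolution equation (\ref{eq:CenterMassDyn}) for $u$, and exploit the cancellation $\langle \cE_\mu'(u), -i\cE_\mu'(u)\rangle =0$ together with the zero-mode relations (\ref{eq:ZeroModes1})--(\ref{eq:ZeroModes2}) and skew-orthogonality. The paper never pairs $\cE_\mu'(u)$ with $\partial_t u$ at all: it rewrites $\cE_\mu(u)$ in the original variables as $H_\lambda(\psi)+\tfrac12(\tfrac14\|\bv\|^2+\mu)\|\psi\|_{L^2}^2-\tfrac12\bv\cdot\langle i\psi,\nabla\psi\rangle-\tfrac\lambda2\int V_h|\psi|^2$ (formula (\ref{eq:CMEnergy})), and then differentiates using four identities proven by regularization in Appendix B --- charge conservation, the energy rate (\ref{eq:RateEnergy}), the potential-energy rate (\ref{eq:RatePotential}), and Ehrenfest's theorem (\ref{eq:Ehrenfest}). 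The $\partial_t V_h$ contributions cancel, and $\partial_t\cC_\mu$ collapses to the two-term formula (\ref{eq:RateLyapunov}), in which every multiplier ($\partial_t\bv$, $\nabla V_h^{\ba}$, $\partial_t\mu$) is globally bounded; the estimate then follows from skew-orthogonality, (B1) and (A3). What the paper's detour buys is precisely that no unbounded weight ever multiplies a term without decay, and that all the formal manipulations on the $H^1$-solution are quarantined inside four standard identities.

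Correspondingly, there are two points your route must supply that you only gesture at. First, rigor: for an $H^1$-solution, both $\cE_\mu'(u)$ and $\partial_t u$ live only in $H^{-1}$, so the pairing $\langle \cE_\mu'(u),\partial_t u\rangle$ and the key cancellation $\langle X,iX\rangle=0$ with $X=\cE_\mu'(u)$ are formal; you need a regularization and limiting argument of the same type the paper uses in Appendix B. Second, unbounded weights: in the boost terms $c_{j+N}\langle \cE_\mu'(u), ix_j u\rangle$ and in $\langle N_\mu(w), -i\cR_V w\rangle$, the factors $x_j$ and $\cR_V$ grow at infinity, and the pieces where they hit pure-$w$ expressions cannot be bounded by the blanket estimate $\|N_\mu(w)\|_{H^1}\le C\|w\|_{H^1}^2$ alone (note $\|x_jw\|_{L^2}$ is not controlled by $\|w\|_{H^1}$). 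These pieces are handled by the structure: the cancellations $\Re\bigl[i\,g\,|w|^2\bigr]=\Re\bigl[i\,g\,|\nabla w|^2\bigr]=0$ for real $g$, the gauge-invariance property $\Im(f(u)\overline{u})=0$ from (A1), which forces every surviving term to carry a decaying factor of $\eta_\mu$ (so that (A3) applies) or a globally bounded factor $\nabla\cR_V=O(\lambda h)$, and the skew-orthogonality identity $\langle w, i\partial_{x_j}\eta_\mu\rangle =0$, which kills the would-be fatal term $\langle \cL_\mu w, e_{j+N}\eta_\mu\rangle=\langle w,iE_t\rangle$ of size $O(|c|\,\|w\|_{H^1})$. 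With these two points made explicit, your bookkeeping does produce exactly (\ref{eq:UpperBoundLF}).
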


\begin{proof}
For $\omega\in\overline{\Omega},$ we have
\begin{align*}
\cE_\mu (u  ) &= \frac{1}{2} \int d\bx ~ |\nabla u  |^2 + \mu |u  |^2 -F(u  )\\
&= H_{\lambda}(T_{\ba\bv\gamma}^{-1}\psi  ) + \frac{1}{2}\mu \|T_{\ba\bv\gamma}^{-1}\psi  \|_{L^2}^2 - \frac{\lambda}{2} \int d\bx~ V_h |T_{\ba\bv\gamma}^{-1}\psi  |^2.
\end{align*}
By translational symmetry, we have, for $\omega\in\overline{\Omega},$
\begin{equation*}
\|u  \|_{L^2}^2 = \|\psi  \|_{L^2}^2 , \ \ \int d\bx ~  V_h |u  |^2 = \int  V_h^{-\ba} |\psi  |^2,
\end{equation*}
where $V_h^{\bx_0}(\bx):=  V_h(\bx+\bx_0), \; \bx_0\in\bbR^N.$ Moreover, for $\omega\in\overline{\Omega},$
\begin{equation*}
H_{\lambda}(T_{\ba\bv\gamma}^{-1}\psi  ) = H_{\lambda}(\psi  )+ \frac{1}{2}(\frac{1}{4}\|\bv\|^2 + \mu)\|\psi  \|_{L^2}^2 - \frac{1}{2}\bv\cdot \langle i\psi  , \nabla \psi   \rangle + \frac{\lambda}{2}\int d\bx ( V_h^{-\ba} -  V_h)|\psi  |^2 ,
\end{equation*}
and hence 
\begin{equation}
\label{eq:CMEnergy}
\cE_\mu (u  )= H_{\lambda}(\psi  ) + \frac{1}{2}(\frac{1}{4}\|\bv\|^2 + \mu)\|\psi  \|_{L^2}^2 - \frac{1}{2}\bv\cdot \langle i\psi   ,\nabla \psi   \rangle - \frac{\lambda}{2}\int d\bx~   V_h |\psi  |^2.
\end{equation}
We have the following relationships for the rate of change of field energy and momenta, whose proof we give in Appendix B. For $\omega\in\overline{\Omega},$
\begin{equation}
\label{eq:RatePotential}
\partial_t \frac{1}{2} \int d\bx  ~ V_h |\psi  |^2 = \langle \nabla  V_h i\psi  , \nabla \psi   \rangle + \frac{1}{2} \langle \psi  , \partial_t  V_h\psi   \rangle , 
\end{equation}
\begin{equation}
\label{eq:RateEnergy}
\partial_t H_{\lambda}(\psi  ) = \frac{\lambda}{2} \langle \psi  , \partial_t V_h \psi  \rangle
\end{equation}
and Ehrenfest's theorem
\begin{equation}
\label{eq:Ehrenfest}
\partial_t \langle i\psi  , \nabla\psi  \rangle = -\langle \psi   ,\lambda\nabla V_h \psi  \rangle. 
\end{equation}
One can use (\ref{eq:NLSE}) to formally derive (\ref{eq:RatePotential}),(\ref{eq:RateEnergy}) and (\ref{eq:Ehrenfest}). Rigorously, one can prove them using a regularization scheme and a limiting procedure, and we refer the reader to the Appendix B  for a proof of these statements. Furthermore, it follows from gauge invariance of (\ref{eq:NLSE}) that the charge is ${\mathbb P}$-a.s. conserved,
\begin{equation}
\label{eq:ConservationCharge}
\partial_t\|\psi  \|_{L^2} = 0, \ \ \omega\in \overline{\Omega},
\end{equation} 
(Proposition \ref{pr:LWP} in Appendix A).
Differentiating (\ref{eq:CMEnergy}) with respect to $t$ and using (\ref{eq:RatePotential}-\ref{eq:ConservationCharge}) gives, for $\omega\in\overline{\Omega},$
\begin{align}
\partial_t \cE_\mu (u  ) &= \partial_t H_{\lambda}(\psi  ) +\frac{1}{2}(\frac{\partial_t \bv\cdot \bv}{2} + \partial_t \mu)\|\psi  \|_{L^2}^2 - \frac{1}{2}\partial_t \bv \cdot \langle i\psi  , \nabla \psi  \rangle + \frac{1}{2} \bv\cdot \langle \psi   ,\lambda \nabla  V_h \psi  \rangle \nonumber \\ &- \langle \lambda \nabla  V_h i\psi  , \nabla \psi   \rangle - \frac{\lambda}{2} \langle \psi  , \partial_t  V_h\psi   \rangle  \nonumber \\
&= \frac{1}{2}(\frac{\partial_t \bv\cdot \bv}{2} + \partial_t \mu)\|\psi  \|_{L^2}^2- \frac{1}{2}\partial_t \bv \cdot \langle i\psi  , \nabla \psi  \rangle+ \frac{1}{2} \bv\cdot \langle \psi ,\lambda \nabla  V_h \psi  \rangle- \langle \lambda \nabla  V_h i\psi  , \nabla \psi   \rangle \nonumber \\
&= \frac{1}{2}\partial_t \mu \|u  \|_{L^2}^2 - \langle \frac{1}{2} \langle i(\partial_t \bv + 2 \lambda \nabla V_h^{\ba})u  ,\nabla u  \rangle, \label{eq:RateEnergyFunctional} 
\end{align}
where we have used $u  (\bx,t)=e^{i(\frac{1}{2}\bv\cdot \bx +\gamma)} \psi  (\bx+\ba,t)$ and translation invariance of the integral in the last line. 
Furthermore, it follows from (\ref{eq:EnergyFunctional}) that 
\begin{equation*}
\partial_t \cE_\mu(\eta_\mu ) = \frac{1}{2} \partial_t \mu \|\eta_\mu\|^2, \ \ \omega\in\overline{\Omega},
\end{equation*}
which, together with (\ref{eq:LyapunovFunctional}) and (\ref{eq:RateEnergyFunctional}), implies
\begin{equation}
\label{eq:RateLyapunov}
\partial_t \cC_\mu(u  ,\eta_\mu) = \frac{1}{2} \partial_t \mu (\|u  \|^2_{L^2}- \|\eta_\mu\|^2_{L^2}) - \langle i(\frac{1}{2}\partial_t \bv + \lambda  \nabla V_h^{\ba}) u   , \nabla u   \rangle , 
\end{equation}
for $\omega\in\overline{\Omega}.$ 
We now estimate the right-hand side of (\ref{eq:RateLyapunov}). Since $\langle i X, w  \rangle = 0$ for all $X\in \cT_\mu \cM_s$ and $\omega\in\overline{\Omega},$ we have that 
\begin{equation*}
\|u  \|^2_{L^2}- \|\eta_\mu\|^2_{L^2} = \|w  \|^2_{L^2},
\end{equation*}
and hence
\begin{equation}
\label{eq:FTEstimate}
\frac{1}{2} \partial_t \mu (\|u  \|^2_{L^2}- \|\eta_\mu\|^2_{L^2}) = O(|c| \|w  \|^2_{L^2}).
\end{equation}

To estimate the second term in the right-hand side of (\ref{eq:RateLyapunov}), we replace $u  =\eta_\mu+w  ,$ and use the fact that $\langle iE_g,w\rangle=\langle i \nabla \eta_\mu, w  \rangle = 0,$ and $\langle ig\eta_\mu, \nabla \eta_\mu \rangle = 0$ for all real $g\in L^\infty(\bbR^N)$ and $\omega\in\overline{\Omega}.$ We have 
\begin{equation*}
\langle i(\frac{1}{2}\partial_t \bv + \lambda\nabla  V_h^{\ba}) u   , \nabla u   \rangle = \langle i\lambda \nabla  V_h^{\ba} w  , \nabla \eta_\mu\rangle + \langle i \lambda \nabla  V_h^{\ba} \eta_\mu , \nabla w  \rangle + \langle i (\frac{1}{2}\partial_t \bv + \lambda \nabla V_h^{\ba})w  , \nabla w  \rangle,
\end{equation*}
for $\omega\in\overline{\Omega}.$
Adding and subtracting the quantity $$\lambda \nabla  V_h(\ba,t;\omega)\cdot \langle iw  ,\nabla \eta_\mu \rangle =  \lambda \nabla  V_h(\ba,t;\omega) \cdot \langle i\eta_\mu, \nabla w   \rangle = 0$$ gives
\begin{align*}
&\langle i(\frac{1}{2}\partial_t \bv + \lambda \nabla  V_h^\ba) u   , \nabla u   \rangle = 
(\frac{1}{2}\partial_t \bv + \lambda \nabla  V_h(\ba,t;\omega)) \langle iw  , \nabla w  \rangle + \langle (\lambda \nabla  V_h^\ba -\lambda \nabla  V_h(\ba,t;\omega))iw  ,\nabla w  \rangle  \\
&+ \langle (\lambda \nabla  V_h^\ba -\lambda \nabla  V_h(\ba,t;\omega))i\eta_\mu,\nabla w  \rangle + \langle (\lambda \nabla  V_h^\ba -\lambda \nabla  V_h(\ba,t;\omega))iw  ,\nabla \eta_\mu\rangle,
\end{align*}
for $\omega\in\overline{\Omega}.$
It follows from Lemma \ref{lm:RepEqMotion} that the first term of the above equation is of order $O(|c|\|w\|^2_{H^1}+\lambda h^2\|w\|_{H^1}^2+ \|w\|_{H^1}^4),$ while Assumption (B1) implies that the second term is of order $O(\lambda h \|w  \|_{H^1}^2).$ Assumptions (B1) and (A3) imply that the third and forth terms are of order $O(\lambda h^2 \|w  \|_{H^1}).$

\end{proof}

The object of the next lemma is to provide a lower bound for $\sup_{\omega\in\overline{\Omega}}|\cC_\mu (u  ,\eta_\mu)|.$ Let 
\begin{equation*}
X_\mu := \{w\in H^1(\bbR^N) : \ \ \langle w, i X\rangle =0, \forall X\in \cT_{\eta_\mu}\cM_s \}.
\end{equation*}
It follows from the coercivity property of $\cL_\mu$ that there exists a positive constant 
\begin{equation}
\label{eq:Coercivity}
\rho:= \inf_{w\in X_\mu}\langle w,\cL_\mu w\rangle >0,
\end{equation}
(see Appendix D in \cite{FJGS1} for a proof of this statement). 

\begin{lemma}\label{lm:LFLowerBound}
Suppose the hypotheses of Lemma \ref{lm:RepEqMotion} hold. Then there exists positive constants $\rho$ and $\overline{C}$ independent of $h$ and $\omega$ such that, for $\sup_{\omega\in\overline{\Omega}}\|w  \|_{H^1}\le 1$ and uniformly in $t\in [0,T],$
\begin{equation}
\label{eq:LFLowerBound}
\sup_{\omega\in\overline{\Omega}}|\cC_\mu (u  ,\eta_\mu)| \ge \sup_{\omega\in\overline{\Omega}}\{\frac{\rho}{2} \|w  \|_{H^1} - \overline{C} \|w  \|_{H^1}^3\},
\end{equation}
where $\rho$ is defined in (\ref{eq:Coercivity}).
\end{lemma}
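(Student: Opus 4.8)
The plan is to establish the lower bound \eqref{eq:LFLowerBound} by Taylor-expanding the Lyapunov functional $\cC_\mu(u,\eta_\mu)=\cE_\mu(\eta_\mu+w)-\cE_\mu(\eta_\mu)$ around the soliton $\eta_\mu$ and exploiting the fact that $\eta_\mu$ is a critical point of $\cE_\mu$, so that the first-order term vanishes and the quadratic term is controlled below by the coercivity constant $\rho$ from \eqref{eq:Coercivity}. First I would write, for $\omega\in\overline{\Omega}$,
\begin{equation*}
\cC_\mu(u,\eta_\mu) = \langle \cE_\mu'(\eta_\mu), w\rangle + \tfrac{1}{2}\langle \cE_\mu''(\eta_\mu)w, w\rangle + \cR(w),
\end{equation*}
where $\cR(w)$ collects the higher-order remainder. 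Since $\cE_\mu'(\eta_\mu)=0$ (the nonlinear eigenvalue equation \eqref{eq:NLEV}) the linear term drops out, and since $\cE_\mu''(\eta_\mu)=\cL_\mu$ is exactly the Hessian appearing in \eqref{eq:Hessian}, the quadratic term is $\tfrac{1}{2}\langle \cL_\mu w, w\rangle$.

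Next I would invoke the constraint that $w$ lies in the skew-orthogonal space: by construction $\langle w, iX\rangle = 0$ for all $X\in\cT_{\eta_\mu}\cM_s$, so $w\in X_\mu$, and the coercivity estimate \eqref{eq:Coercivity} gives $\langle \cL_\mu w, w\rangle \ge \rho\|w\|_{H^1}^2$ — up to possibly absorbing the equivalence between the form $\langle w, \cL_\mu w\rangle$ and the full $H^1$-norm, which holds on $X_\mu$ because the finite-dimensional directions responsible for the non-positivity of $\cL_\mu$ have been projected out. This yields the main positive term $\tfrac{\rho}{2}\|w\|_{H^1}^2$ in the lower bound (I note the statement writes $\|w\|_{H^1}$, but the natural quantity is $\|w\|_{H^1}^2$; I would proceed assuming the quadratic scaling). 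For the remainder, I would use the third estimate in the Remark following (A5)–(B2), namely the bound
\begin{equation*}
|F(\eta_\mu+w)-F(\eta_\mu)-\langle F'(\eta_\mu), w\rangle - \tfrac{1}{2}\langle F''(\eta_\mu)w, w\rangle| \le C(M)\|w\|_{H^1}^3,
\end{equation*}
valid for $\|\eta_\mu\|_{H^1}+\|w\|_{H^1}\le M$, which since $\|w\|_{H^1}\le 1$ applies with $M$ controlled by $\sup_{\mu\in I_0}\|\eta_\mu\|_{H^1}$. This bounds $|\cR(w)|\le \overline{C}\|w\|_{H^1}^3$ uniformly in $\omega$ and $t$.

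Putting these together gives $|\cC_\mu(u,\eta_\mu)|\ge \tfrac{\rho}{2}\|w\|_{H^1}^2 - \overline{C}\|w\|_{H^1}^3$ for each $\omega\in\overline{\Omega}$, and taking the supremum over $\overline{\Omega}$ yields \eqref{eq:LFLowerBound}. The uniformity in $t\in[0,T]$ and in $\omega$ follows because $\rho$ depends only on $\mu$ (which ranges over the bounded set $I_0$) and the constant $\overline{C}$ comes from (A1), both independent of $h$ and $\omega$. The main obstacle I anticipate is the coercivity step: the functional $\langle \cL_\mu w, w\rangle$ is only positive definite after restricting to $X_\mu$ and modding out the zero modes of $\cL_\mu$ (the directions $i\eta_\mu$ and $\partial_{x_j}\eta_\mu$ from (A5)), and one must verify that the skew-orthogonality conditions defining $X_\mu$ genuinely remove all the non-coercive directions — this is precisely the content of the coercivity property cited from Appendix D of \cite{FJGS1}, so I would lean on that reference rather than reprove it. A secondary technical point is ensuring $M$ can be chosen uniformly over the relevant range of $\mu$, which follows from the regularity and decay assumptions on $\eta_\mu$ in (A3).
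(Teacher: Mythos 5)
Your proposal is correct and follows essentially the same route as the paper's own proof: expand $\cE_\mu(\eta_\mu+w)$ about the critical point $\eta_\mu$ so the linear term vanishes, bound the quadratic term below via the coercivity constant $\rho$ from (\ref{eq:Coercivity}) on the skew-orthogonal space $X_\mu$, and control the cubic remainder $R_\mu^{(3)}(w)$ using the estimates implied by (A1). Your observation about the exponent is also right: the paper's proof in fact concludes with $\frac{\rho}{2}\|w\|_{H^1}^2 - \overline{C}\|w\|_{H^1}^3$, so the first power in the displayed statement (\ref{eq:LFLowerBound}) is a typo.
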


\begin{proof}
For $\omega\in\overline{\Omega},$ we expand $\cE_\mu(u  )$ around $\eta_\mu,$ which is a critical point of $\cE_\mu.$
\begin{equation}
\cE_\mu (\eta_\mu + w  ) = \cE_\mu (\eta_\mu) + \frac{1}{2}\langle w  , \cL_\mu w  \rangle + R_\mu^{(3)}(w  ),
\end{equation}
where 
\begin{equation*}
R_\mu^{(3)}(w  )= F(\eta_\mu +w  ) - F(\eta_\mu)-\langle F'(\eta_\mu),w  \rangle -\frac{1}{2} \langle F''(\eta_\mu)w  ,w  \rangle.
\end{equation*}
It follows from Assumption (A1) that 
\begin{equation*}
\sup_{\omega\in\overline{\Omega}}|R_\mu^{(3)}(w  )| \le C \sup_{\omega\in\overline{\Omega}}\|w  \|_{H^1}^3,
\end{equation*}
for $\sup_{\omega\in\overline{\Omega}}\|w  \|_{H^1}\le 1,$ where $C>0$ is independent of $t\in \bbR.$ Moreover, the coercivity property (\ref{eq:Coercivity}) implies 
\begin{equation*}
\sup_{\omega\in\overline{\Omega}}\langle w  ,\cL_\mu w  \rangle \ge \rho \sup_{\omega\in\overline{\Omega}}\|w  \|_{H^1}^2,
\end{equation*}
and hence
\begin{equation*}
\sup_{\omega\in\overline{\Omega}}|\cC_\mu (u  ,\eta_\mu)| = \sup_{\omega\in\overline{\Omega}}|\cE_\mu(u  )-\cE_\mu (\eta_\mu)| \ge \sup_{\omega\in\overline{\Omega}}\{ \frac{1}{2}\rho \|w  \|_{H^1}^2 - \overline{C}\|w  \|_{H^1}^3\},
\end{equation*}
for $\sup_{\omega\in\overline{\Omega}}\|w  \|_{H^1}\le 1.$ 
\end{proof}

We now use the upper and lower bounds on the Lyapunov functional to obtain an upper bound on $\sup_{\omega\in\overline{\Omega}}\|w  \|_{H^1}.$ 

\begin{lemma}\label{lm:UpperBdFluctuation}
Suppose (A1)-(A5) and (B1) hold. Let $\psi $  satisfy (\ref{eq:NLSE}) and,  for $\omega\in\overline{\Omega},$ $u  ,\eta_\mu,w  $ be given as above.  For $h\ll 1,$ choose $T\in \bbR^+$ such that, for $\omega\in\overline{\Omega},$ $\psi  (t)\in U_{\delta}, \ \ t\in [0,T],$ where $U_\delta$ is defined in (\ref{eq:SODNbr}), Subsection \ref{sec:SOD}. Fix $\epsilon \in (0,1),$ and choose $t_0\in [0,T]$ such that $\sup_{\omega\in\overline{\Omega}}\|w  (t_0)\|_{H^1}^2 < h^{2-\epsilon}.$ Then, for $h$ small enough, there exist absolute constants $C_1>1$ and $C_2>0,$ which are independent of $h$ and $\epsilon,$ such that
\begin{align*}
\sup_{\omega\in\overline{\Omega}}\sup_{t\in [t_0,t_0+C_2/\lambda h ]}\|w  (t)\|_{H^1}^2 &\le C_1 (h^2 + \sup_{\omega\in\overline{\Omega}}\|w  (t_0)\|_{H^1}^2)) \\
\sup_{\omega\in\overline{\Omega}} \sup_{t\in [t_0,t_0+C_2/\lambda h ]} |c(t)| &\le C_1 (h^2 + \sup_{\omega\in\overline{\Omega}}\|w(t_0)\|_{H^1}^2) ,
\end{align*}
uniformly in $\lambda\in (h^{1-\epsilon},1].$
\end{lemma}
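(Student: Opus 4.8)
The plan is to run a bootstrap (continuation) argument on the quantity $\|w\|_{H^1}^2 + |c|$, combining the upper and lower bounds on the Lyapunov functional $\cC_\mu(u,\eta_\mu)$ established in Lemmas \ref{lm:UpperBoundLF} and \ref{lm:LFLowerBound} with the reparametrized equations of Lemma \ref{lm:RepEqMotion}. The key observation is that Lemma \ref{lm:RepEqMotion} gives $|c| = O(C(|c|,w,h))$; since $C(|c|,w,h) = \sup_{\omega}\{|c|\,\|w\|_{H^1} + \lambda h^2 + \|w\|_{H^1}^2\}$, once $\|w\|_{H^1}$ is small enough that $|c|\,\|w\|_{H^1}$ can be absorbed into the left side, we obtain the pointwise-in-time bound
\begin{equation*}
\sup_{\omega\in\overline{\Omega}}|c(t)| \le C\,\sup_{\omega\in\overline{\Omega}}\bigl(\lambda h^2 + \|w(t)\|_{H^1}^2\bigr).
\end{equation*}
This lets me eliminate $|c|$ entirely and reduce everything to controlling $\|w\|_{H^1}$ alone.

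First I would integrate the upper bound from Lemma \ref{lm:UpperBoundLF} in time. Substituting the bound for $|c|$ just derived into the right-hand side of \eqref{eq:UpperBoundLF}, and using $\lambda \in (h^{1-\epsilon},1]$ so that $\lambda h^2 \le \lambda h \cdot h$ and the cross terms are controlled, I get a differential inequality of the schematic form
\begin{equation*}
\sup_{\omega}|\partial_t \cC_\mu(u,\eta_\mu)| \le C\bigl(\lambda h^2 \|w\|_{H^1} + \lambda h\,\|w\|_{H^1}^2 + \|w\|_{H^1}^4\bigr),
\end{equation*}
valid as long as $\sup_\omega\|w\|_{H^1}\le 1$. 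Integrating over $[t_0,t]$ and invoking the lower bound \eqref{eq:LFLowerBound}, namely $\frac{\rho}{2}\|w\|_{H^1}^2 \le |\cC_\mu| + \overline{C}\|w\|_{H^1}^3$, converts the control on $\cC_\mu$ into control on $\|w\|_{H^1}^2$. The initial value $\cC_\mu(u(t_0),\eta_\mu)$ is itself $O(h^{2-\epsilon})$ by the lower bound applied at $t_0$ together with $\|w(t_0)\|_{H^1}^2 < h^{2-\epsilon}$.

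The main obstacle is closing the bootstrap over a time interval of length $C_2/\lambda h$ — long enough that the accumulated error $\int (\text{drift}) \, dt$ must be shown to remain of the same order as the initial data rather than growing. The strategy is the standard continuity argument: define $T^* := \sup\{t \ge t_0 : \sup_\omega\|w(s)\|_{H^1}^2 \le C_1(h^2 + \sup_\omega\|w(t_0)\|_{H^1}^2) \text{ for } s\in[t_0,t]\}$ with $C_1 > 1$ chosen later, assume for contradiction $T^* < t_0 + C_2/\lambda h$, and show that on $[t_0,T^*]$ the integrated inequality actually yields the strict bound with a constant smaller than $C_1$, contradicting maximality. Concretely, since the dominant linear-in-$\|w\|_{H^1}$ driving term $\lambda h^2\|w\|_{H^1}$ integrated against the lower bound produces, after dividing by $\|w\|_{H^1}$, a contribution $\sim \lambda h^2 \cdot (C_2/\lambda h) = C_2 h$ to $\|w\|_{H^1}$, i.e. $\sim C_2^2 h^2$ to $\|w\|_{H^1}^2$; this is the mechanism that keeps the fluctuation at order $h$ over the full interval. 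The quadratic and quartic terms contribute $\lambda h \cdot (C_2/\lambda h)\cdot h^2 = C_2 h^3$ and similar higher-order pieces, which are subleading.

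The delicate points to verify are: that $\|w\|_{H^1}$ stays below the threshold $1$ throughout (so that all the nonlinear estimates from Assumption (A1) apply), which follows from the a priori bound once $h$ is small; that the absorption of $|c|\,\|w\|_{H^1}$ into the left-hand side of the $|c|$-estimate is legitimate, requiring $\sup_\omega\|w\|_{H^1}$ to be below a fixed small constant, again guaranteed on $[t_0,T^*]$; and that the constants $C_1, C_2$ can be chosen in the correct order ($C_2$ fixed first to set the time scale, then $C_1$ large enough to absorb the $O(1)$ factors from integration) uniformly in $\lambda \in (h^{1-\epsilon},1]$ and $\omega\in\overline{\Omega}$. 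Because every estimate is taken as a supremum over $\overline{\Omega}$ from the outset, the almost-sure (uniform in $\omega$) character is automatic and no further probabilistic input is needed at this stage.
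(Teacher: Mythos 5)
Your overall strategy---integrate the bound of Lemma \ref{lm:UpperBoundLF} in time, convert control of $\cC_\mu$ into control of $\|w\|_{H^1}^2$ via the coercivity bound of Lemma \ref{lm:LFLowerBound}, close a continuity argument on a window of length $C_2/\lambda h$, and recover the bound on $|c|$ from Lemma \ref{lm:RepEqMotion}---is essentially the paper's proof, and your preliminary pointwise elimination of $|c|$ (absorbing $|c|\,\|w\|_{H^1}$ once $\|w\|_{H^1}$ is below a fixed threshold) is a harmless, arguably cleaner, variant of what the paper does at the end of its argument. However, there is a genuine quantitative gap in your bookkeeping of the accumulated drift, and it breaks the continuity argument as you have set it up. On the bootstrap interval you only know $\sup_\omega\|w\|_{H^1}^2 \le C_1(h^2+y_0^2)$, where $y_0^2=\sup_\omega\|w(t_0)\|_{H^1}^2$ may be as large as $h^{2-\epsilon}$, not $h^2$. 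Hence the quadratic term integrates to $\lambda h \cdot (C_2/\lambda h)\cdot \sup_\omega\|w\|_{H^1}^2 \le C_2C_1(h^2+y_0^2)$, and the quartic term to $(C_2/\lambda h)\sup_\omega\|w\|_{H^1}^4 \le 2C_2C_1^2(h^2+y_0^2)$, using $\lambda h > h^{2-\epsilon}\ge (h^2+y_0^2)/2$. Neither is ``subleading'' of order $C_2h^3$ as you claim: both are of exactly the same order as the quantity $C_1(h^2+y_0^2)$ you are trying to reproduce, and the quartic one carries an extra factor of $C_1$.

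Consequently your prescribed order of constants---``$C_2$ fixed first to set the time scale, then $C_1$ large enough''---cannot work: with $C_2$ fixed, the right-hand side of the integrated inequality contains a term of size $\sim C_2C_1^2(h^2+y_0^2)$, which for large $C_1$ dominates $C_1(h^2+y_0^2)$, so the strict improvement needed to push past $T^*$ never materializes (one would need simultaneously $C_1$ large to beat the initial-data term and $C_1$ small to beat the $C_2C_1^2$ term, which is impossible unless the coercivity constant $\rho$ happens to dominate the estimate constant $C$). The repair is standard but must be stated in the right order: fix $C_1$ first, large compared with $C/\rho$, and then choose $C_2$ small depending on $C_1$ (of size $O(1/C_1^2)$); both remain absolute constants independent of $h,\epsilon,\lambda$. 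Alternatively, do what the paper does: take a self-adjusting window, $t-t_0\le C/(2(\lambda h + |c| + \sup_\omega\|w\|_{H^1}^2))$, so that the accumulated coefficient in front of $\|w\|_{H^1}^2$ is automatically at most $C/2$ and can be absorbed into the left-hand side (the paper then traps $\sup\|w\|_{H^1}$ below the first root of an explicit cubic), and only a posteriori conclude that this window has length at least $C_2/\lambda h$ because $|c|+\sup_\omega\|w\|_{H^1}^2 \lesssim h^{2-\epsilon} \le \lambda h$. A second, minor, slip: you cannot bound $|\cC_\mu(u(t_0),\eta_\mu)|$ from above ``by the lower bound applied at $t_0$''; Lemma \ref{lm:LFLowerBound} only bounds $|\cC_\mu|$ from below. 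What is needed is the opposite inequality $|\cC_\mu(u(t_0),\eta_\mu)|\le C\|w(t_0)\|_{H^1}^2$, which follows from the Taylor expansion of $\cE_\mu$ around $\eta_\mu$ together with Assumption (A1), as in the paper.
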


\begin{proof}
It follows from Lemma \ref{lm:UpperBoundLF} that, for $t\ge t_0$ 
\begin{align*}
&\sup_{\omega\in\overline{\Omega}}|\cC_\mu (\eta_\mu+w  (t),\eta_\mu)| \le \sup_{\omega\in\overline{\Omega}}\{|\cC_\mu (\eta_\mu+ w  (t_0),\eta_\mu)| \\ & + C(t-t_0) (|c| \|w  (t)\|^2_{H^1} +\lambda h^2 \|w  (t)\|_{H^1} + \lambda h\|w  (t)\|^2_{H^1} + \|w\|_{H^1}^4) \}.
\end{align*}
Expanding $\cE_\mu(\eta_\mu + w  (t_0))$ around $\eta_\mu$ and using Assumption (A1) gives the upper bound
\begin{equation*}
|\cC_\mu(\eta_\mu+w  (t_0), \eta_\mu)| \le C\|w  (t_0)\|_{H^1}^2 , \ \ \mathrm{for }\ \  \|w  (t_0)\|_{H^1}<1 \ \ \mathrm{and} \ \ \omega\in\overline{\Omega}, 
\end{equation*} 
where $C$ is a constant independent of $h,\epsilon$ and $\lambda.$  Therefore,
\begin{equation*}
\sup_{\omega\in\overline{\Omega}} |\cC_\mu (\eta_\mu+ w  (t),\eta_\mu)| \le C\sup_{\omega\in\overline{\Omega}} \|w  (t_0)\|_{H^1}^2 + C(t-t_0) \sup_{\omega\in\overline{\Omega}} (\lambda h^2 \|w  (t)\|_{H^1} + (|c|+\lambda h+ \|w\|_{H^1}^2) \|w  (t)\|_{H^1}^2),
\end{equation*}
for some constant $C$ independent of $h,$ $\epsilon$ and $\lambda.$ Together with Lemma \ref{lm:LFLowerBound}, it follows that
\begin{align*}
\frac{1}{2}\sup_{\omega\in\overline{\Omega}} \rho \|w  (t)\|_{H^1}^2 &\le C\sup_{\omega\in\overline{\Omega}}\|w  (t_0)\|_{H^1}^2  + C(t-t_0) \sup_{\omega\in\overline{\Omega}}(\lambda h^2 \|w  (t)\|_{H^1} + (|c|+\lambda h+ \|w\|_{H^1}^2) \|w  (t)\|_{H^1}^2) \\ &+ C\sup_{\omega\in\overline{\Omega}}\|w  (t)\|_{H^1}^3 
\end{align*}
where $\rho$ appears in (\ref{eq:Coercivity}). Equivalently, there exists a positive constant $C$ independent of $h, \epsilon$ and $\lambda,$ such that 
\begin{align*}
C\sup_{\omega\in\overline{\Omega}} \|w  (t)\|_{H^1}^2 &\le \sup_{\omega\in\overline{\Omega}}\|w  (t_0)\|_{H^1}^2 + (t-t_0)\sup_{\omega\in\overline{\Omega}}(\lambda h^2 \|w  (t)\|_{H^1} + (|c|+\lambda h+\|w\|_{H^1}^2) \|w  (t)\|_{H^1}^2 )\\ & + \sup_{\omega\in\overline{\Omega}}\|w  (t)\|_{H^1}^3.
\end{align*}
For $$t- t_0 \le \frac{C}{2(\lambda h+|c| + \sup_{\omega\in\overline{\Omega}}\|w\|_{H^1}^2)}=:\tau,$$
\begin{equation*}
C\sup_{\omega\in\overline{\Omega}}\|w  (t)\|_{H^1}^2 \le \sup_{\omega\in\overline{\Omega}}\{\|w  (t_0)\|^2_{H^1} + \frac{C}{2}h \|w  (t)\|_{H^1} + \frac{C}{2}\|w  (t)\|_{H^1}^2 + \|w  (t)\|_{H^1}^3\}.
\end{equation*}
Using the fact that 
\begin{equation*}
h \|w  (t)\|_{H^1} \le \frac{1}{2}h^2 + \frac{1}{2}\|w  (t)\|_{H^1}^2,
\end{equation*}
we have
\begin{equation*}
\sup_{\omega\in\overline{\Omega}}\|w  (t_0)\|_{H^1}^2 + \frac{C}{4} h^2 - \frac{C}{4}\sup_{\omega\in\overline{\Omega}} \|w  (t)\|_{H^1}^2 +\sup_{\omega\in\overline{\Omega}} \|w  (t)\|_{H^1}^3 \ge 0.
\end{equation*}
Let $y_0 := \sup_{\omega\in\overline{\Omega}}\|w  (t_0)\|_{H^1},$ $y:= \sup_{\omega\in\overline{\Omega}}\sup_{t\in [t_0, t_0 +\tau ]}\|w  (t)\|_{H^1},$ and 
$f(y) = y^3 - \frac{C}{4}y^2 + y_0^2 + \frac{C}{4}h^2.$ For $h\ll 1,$ the function intersects the x-axis in a point $y_*$ such that $y_*^2< c_1(h^2+ y_0^2),$ where $c_1$ is a positive constant independent of $h$ and $\epsilon.$ It follows, if $y_0<y_*,$ that $y<y_*$ for $t\in [t_0, t_0+\tau].$ Substituting back in (\ref{eq:a}-\ref{eq:m}) and using (\ref{eq:Coefficients}) and (\ref{eq:Alpha}) gives 
\begin{equation*}
\sup_{\omega\in\overline{\Omega}}|c| \le C_2' (h^2 + y_0^2),
\end{equation*}   
for some positive constant $C_2'$ that is independent of $h, \epsilon$ and $\lambda.$ It follows that for $h$ small enough, there exists positive constants $C_1$ and $C_2$ which are independent of $h$ and $\epsilon,$ such that
\begin{align*}
&\sup_{\omega\in\overline{\Omega}}\; \sup_{t\in [t_0,t_0 + \frac{C_2}{\lambda h}] } \|w  (t)\|_{H^1}^2 \le C_1 (h^2 +\sup_{\omega\in\overline{\Omega}} \|w  (t_0)\|^2_{H^1}) \\
&\sup_{\omega\in\overline{\Omega}}\sup_{t\in [t_0,t_0 + \frac{C_2}{\lambda h}]} |c(t)| \le C_1 (h^2 +\sup_{\omega\in\overline{\Omega}} \|w  (t_0)\|^2_{H^1}),
\end{align*} 
uniformly in $\omega\in\overline{\Omega}$ and $\lambda\in (h^{1-\epsilon},1].$
\end{proof}

\subsection{Proof of Proposition \ref{pr:Main}}\label{sec:ProofPrMain}

We are now in a position to prove Proposition \ref{pr:Main} by iterating Lemma \ref{lm:UpperBdFluctuation} on time intervals of order $O(\frac{1}{\lambda h})$ and using the result of Lemma \ref{lm:RepEqMotion}.

\begin{proof}[Proof of Proposition \ref{pr:Main}]
Fix $\epsilon\in (0,1).$ For $\omega\in\overline{\Omega}$ and $h$ small enough, let $T^*$ be the maximal time for which the skew-orthogonal decomposition is possible. Consider the interval 
\begin{equation*}
[0,T] = [t_0,t_1]\cup [t_1,t_2] \cup \cdots \cup [t_{n-1},t_n] \subset [0,T^*],
\end{equation*}
such that 
$$0=t_0< t_1 < \cdots < t_n=T, \ \ (t_{i+1}-t_i) \le \frac{C_2}{\lambda h}, \ \ i=0,\cdots, n-1,$$ 
where $C_2$ appears is Lemma \ref{lm:UpperBdFluctuation}. We will choose $n\in {\mathbb N}$ depending on $h$ and $\epsilon$ later. Let 
\begin{align*}
|c|_i &:= \sup_{\omega\in\overline{\Omega}}\; \sup_{t\in [t_i,t_{i+1}]} |c(t)| ,\\
y_i &:= \sup_{\omega\in\overline{\Omega}}\; \sup_{t\in [t_i,t_{i+1}]} \|w  (t)\|_{H^1}, \ \ i=0,\cdots , n-1.
\end{align*}
Note that $y_0 \le h$ and $|c|_0 \le C h,$ for some constant $C$ independent of $h,\lambda$ and $\epsilon.$ Iterating the application of Lemma \ref{lm:UpperBdFluctuation} we have 
\begin{align*}
y_n^2 &\le (\sum_{j=1}^n C_1^j) h^2 \le C_1^{n+1} h^2  \\
|c|_n &\le C C_1^{n+1}h^2.
\end{align*}
We now choose $n$ such that $C_1^{n+1}h^2 \le h^{2-\epsilon}.$ This implies 
\begin{equation*}
n+1 \le -\epsilon \frac{\log h}{\log C_1}. 
\end{equation*}
Therefore, for $t\in [0, \epsilon \frac{C_2}{\log C_1} \frac{|\log h|}{\lambda h}],$
\begin{align*}
&\sup_{\omega\in\overline{\Omega}}\|w  (t)\|_{H^1}^2 \le h^{2-\epsilon} \\
&\sup_{\omega\in\overline{\Omega}}|c(t)| \le C h^{2-\epsilon}.
\end{align*}
The effective equations for the parameters on the soliton manifold follow from the above estimates and Lemma \ref{lm:RepEqMotion}. Furthermore, it follows from (\ref{eq:InitialCond}) and the skew-orthogonal property that $$\|\ba(0)-\ba_0\|,\|\bv(0)-\bv_0\|, |\gamma(0)-\gamma_0|, |\mu(0)-\mu_0| = O(h).$$
 
\end{proof}

Equipped with the above results, we are in position to prove Theorem \ref{th:Main3}.  

\subsection{Proof of Theorem \ref{th:Main3}}

In Proposition \ref{pr:Main}, the time-dependent parameters $\sigma(t)=(\ba(t),\bv(t),\gamma(t),\mu(t))$ of the soliton solution satisfying equations (\ref{eq:EffectiveA})-(\ref{eq:EffectiveMu}) are defined for $\omega\in\overline{\Omega}$ (a dense set of $\Omega$). We extend them to be random variables of all the realization space $\Omega.$ 

\begin{proof}[Proof of Theorem \ref{th:Main3}]
We extend $(\ba(t),\bv(t),\gamma(t),\mu(t))_{0\le t\le \overline{C} \epsilon |\log h|/\lambda h}$ appearing in Proposition \ref{pr:Main} to random variables on $\Omega$ by requiring that they satisfy 
\begin{align*}
\partial_t a_k =& v_k + \xi(\omega)\frac{1}{m(\mu)}(\langle i x_k\eta_\mu,  N_\mu (w)-i\cR_Vw \rangle + \sum_{j=1}^{2N+2} c_j \langle e_j x_k \eta_\mu, w   \rangle) , \\ 
\partial_t v_k =& -2 \partial_{x_k} \lambda V_h(\ba,t;\omega) + \xi(\omega)[\frac{2}{m(\mu)} (\langle \partial_{x_k}\eta_\mu, N_\mu (w  )+ \cR_V w  \rangle \\ & -\sum_{j=1}^{2N+2} c_j \langle i e_j \partial_{x_k}\eta_\mu, w    \rangle + \langle \partial_{x_k} \eta_\mu, \cR_V \eta_\mu\rangle )],\\
\partial_t \gamma = & \mu -\frac{1}{4}\|\bv\|^2 + \frac{1}{2}\partial_t \ba\cdot \bv - \lambda V_h(\ba,t;\omega) - \xi(\omega) \frac{1}{m'(\mu)} (\langle \partial_\mu\eta_\mu, N_\mu (w  ) + \cR_V(w  +\eta_\mu) \rangle \nonumber \\ & -\xi(\omega)\sum_{j=1}^{2N+2} c_j \langle i e_j \partial_\mu \eta_\mu, w  \rangle ) \\
\partial_t \mu =&\xi(\omega) \frac{1}{m'(\mu)} \langle i\eta_\mu , N_\mu (w  ) -i \cR_V w   \rangle -\sum_{j=1}^{2N+2} c_j \langle e_j \eta_\mu ,w  \rangle 
\end{align*}
where 
\begin{equation*}
\xi(\omega) = 
\begin{cases}
1, \ \ \omega\in\overline{\Omega}\\
0, \ \ \omega\in \Omega\backslash \overline{\Omega}
\end{cases},
\end{equation*}
with the same intial conditions appearing in Proposition \ref{pr:Main}, which are determined by the skew-orthogonal decomposition property.
Since $V$ is also $\omega$-measurable, $\sigma(t)= (\ba(t),\bv(t),\gamma(t),\mu(t))$  and $\eta_{\sigma(t)}$ are $\omega$-measurable for $0\le t\le \overline{C} \epsilon |\log h|/\lambda h.$
Note that we have from Proposition \ref{pr:Main} that, for $0\le t\le \overline{C} \epsilon |\log h|/\lambda h,$
\begin{equation*}
{\mathbb E}\| \psi(t) - \eta_{\sigma(t)}\|_{H^1}^2 \le C h^{2-\epsilon},
\end{equation*} 
which completes the proof of the theorem.
\end{proof}


\section{A limit theorem in the weak-coupling/space-adiabatic regime}\label{sec:ProofMain4}

In this subsection, we prove Theorems \ref{th:Main4} and \ref{th:Main5}, which are limit theorem for the dynamics of the center of mass of the soliton moving in $N\ge 2,$ respectively $N\ge3,$ under the influence of a homogeneous, time-independent and strongly mixing potential satisfying, in addition to assumption (B1), assumption (B2), Section  \ref{sec:MainResults}. The main ingredient of our analysis is proving (strong) convergence of the center of mass dynamics to the auxiliary dynamics of a classical particle in the weak-coupling/space-adiabatic limit, Lemmatta \ref{lm:StrongConv} and \ref{lm:ConvergenceLiouvilleSol} below. This allows us to use the results of \cite{KP}, \cite{KR1} and \cite{KR2} for the limiting dynamics of a classical point particle in a random potential. For the convenience of the reader, we summarize the main conceptual ideas in the analysis of \cite{KP}, \cite{KR1} and \cite{KR2} in the following subsection.

\subsection{Proof of Theorems \ref{th:Main4} and \ref{th:Main5}}

We know from Theorem \ref{th:Main3}, that up to times of order $O(|\log h|/\lambda h),$ the soliton center of mass behaves like a point particle over spatial and temporal scales of order $O(h^{-1}).$ This motivates introducing the scaling 
\begin{align*}
& \overline{\ba}:= h\ba\\
&\overline{\bv} := \bv\\
&\overline{t} := ht.
\end{align*}
It follows from Theorem \ref{th:Main3} that the rescaled dynamics of the center of mass of the soliton, for $\overline{t}\in [0,\overline{C}\epsilon |\log h|/\lambda),$ is given by 
\begin{align}
&\partial_{\overline{t}}\overline{\ba} = \overline{\bv} + O(h^{2-\epsilon})\label{eq:Res1a}\\
&\partial_{\overline{t}}\overline{\bv} = -2\lambda \nabla \overline{V}(\overline{\ba};\omega) + O(h^{1-\epsilon}), \label{eq:Res1v}
\end{align}
with initial condition satisfying
\begin{align}
&\|\overline{\ba}(0) - h\ba_0\| = O(h^2)\\
&\|\overline{\bv}(0)-\bv_0\|=O(h).\label{eq:Res1InitCond}
\end{align}

We introduce  the auxiliary dynamics corresponding to a classical particle in the random potential,
\begin{align}
&\partial_{\overline{t}} \tilde{\ba} = \tilde{\bv} \label{eq:Res2a}\\
&\partial_{\overline{t}} \tilde{\bv} = -2\lambda \nabla \overline{V}(\tilde{\ba};\omega),\label{eq:Res2v}
\end{align}
with initial condition 
\begin{equation}
\label{eq:Res2InitCond}
\tilde{\ba}(0)= 0, \ \ \tilde{\bv}(0)=\bv_0.
\end{equation}
This dynamics is ``close'' to the effective dynamics of the solitary wave in the limit $h\rightarrow 0.$ Moreover, it has been studied extensively as a model of stochastic acceleration of classical particles in a random potential, \cite{KP}-\cite{KR2}.  

We note that momentum diffusion occurs at scales of order $O(\lambda^{-2}),$ while spatial diffusion occurs at scales of order $O(\lambda^{-2-\beta}),$ for $\beta>0.$ This motivates comparing the limiting behavior of the true dynamics of the center of mass of the soliton to the  auxiliary one corresponding to a classical point particle in the external potential. We have the following lemma about the convergence of the effective dynamics of the center of the soliton to the auxiliary dynamics over scales $O(\lambda^{-2}).$ 

\begin{lemma}\label{lm:StrongConv}
Assume (A1)-(A5), (B1) and (B2), and suppose that there exists $\tilde{\alpha}>0$ such that  $\lambda\rightarrow 0$ as $h\rightarrow 0$ with $|\log h|\lambda^{3/2+\tilde{\alpha}}\rightarrow \infty.$ Then, for any finite $T >0,$ the stochastic process  $$(\lambda^2\overline{\ba}(\overline{t}/\lambda^{2}), \overline{\bv}(\overline{t}/\lambda^2))_{\overline{t}\in [0,T ]}$$ converges ${\mathbb P}$-a.s. (strongly) to the stochastic process $$(\lambda^2 \tilde{\ba}(\overline{t}/\lambda^{2}), \tilde{\bv}(\overline{t}/\lambda^{2}))_{\overline{t}\in [0,T ]},$$ as $\lambda,h\rightarrow 0.$
\end{lemma}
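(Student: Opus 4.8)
The plan is to run a Gronwall estimate, adapted to the symplectic scaling, on the difference between the rescaled effective dynamics (\ref{eq:Res1a})--(\ref{eq:Res1v}) and the auxiliary classical dynamics (\ref{eq:Res2a})--(\ref{eq:Res2v}). First I set $\mathbf{p}:=\oa-\ta$ and $\mathbf{q}:=\ov-\tv$ and subtract the two systems,
\begin{align*}
\partial_{\overline{t}}\mathbf{p} &= \mathbf{q}+g_1,\\
\partial_{\overline{t}}\mathbf{q} &= -2\lambda\,\big(\nabla\overline{V}(\oa;\omega)-\nabla\overline{V}(\ta;\omega)\big)+g_2,
\end{align*}
where $g_1=O(h^{2-\epsilon})$ and $g_2=O(h^{1-\epsilon})$ are the pathwise error terms (uniform over $\overline{\Omega}$) supplied by Theorem \ref{th:Main3}. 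By (\ref{eq:TimeIndepPot3}) we have $\overline{V}(\cdot;\omega)\in W^{2,\infty}(\bbR^N)$ for $\omega\in\overline{\Omega}$, so $\nabla\overline{V}$ is Lipschitz and
\begin{equation*}
\nabla\overline{V}(\oa)-\nabla\overline{V}(\ta)=M(\overline{t})\,\mathbf{p},\qquad M(\overline{t}):=\int_0^1\nabla^2\overline{V}(\ta+s\mathbf{p};\omega)\,ds,\qquad \|M(\overline{t})\|\le C_V(\omega):=\|\nabla^2\overline{V}(\cdot;\omega)\|_{L^\infty},
\end{equation*}
which turns the difference into a linear, time-dependent inhomogeneous system. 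The initial mismatch is $|\mathbf{p}(0)|,|\mathbf{q}(0)|=O(h)$ by (\ref{eq:Res1InitCond}) and (\ref{eq:Res2InitCond}). Finally, the target time $T/\lambda^2$ lies inside the window $[0,\overline{C}\epsilon|\log h|/\lambda)$ of validity of Theorem \ref{th:Main3}: since $\lambda\in(0,1]$ and $3/2+\tilde{\alpha}>1$ one has $|\log h|\lambda\ge|\log h|\lambda^{3/2+\tilde{\alpha}}\to\infty$.

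The crucial step is to extract the correct exponential rate. The coupling $\mathbf{q}\mapsto\partial_{\overline{t}}\mathbf{p}$ is $O(1)$, whereas the coupling $\mathbf{p}\mapsto\partial_{\overline{t}}\mathbf{q}$ carries the small factor $\lambda$; bounding the bare operator norm of the system matrix would give the useless factor $e^{O(\overline{t})}=e^{O(1/\lambda^2)}$. Instead I use the weighted energy
\begin{equation*}
\Phi(\overline{t}):=|\mathbf{q}(\overline{t})|^2+K\lambda\,|\mathbf{p}(\overline{t})|^2,\qquad K>0\ \text{fixed},
\end{equation*}
which respects the symplectic scaling (equivalently, one rescales $\mathbf{q}\mapsto\mathbf{q}/\sqrt{\lambda}$). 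Differentiating and using $\|M\|\le C_V$, $|\mathbf{q}|\le\sqrt{\Phi}$ and $\sqrt{K\lambda}\,|\mathbf{p}|\le\sqrt{\Phi}$, the cross terms are controlled by $c_1(\omega)\sqrt{\lambda}\,\Phi$ — the geometric mean $\sqrt{1\cdot\lambda}=\sqrt{\lambda}$ of the two couplings — while the forcing contributes $2\sqrt{\Phi}\,(|g_2|+\sqrt{K\lambda}\,|g_1|)$. Hence
\begin{equation*}
\frac{d}{d\overline{t}}\sqrt{\Phi}\le \tfrac12 c_1(\omega)\sqrt{\lambda}\,\sqrt{\Phi}+C\big(h^{1-\epsilon}+\sqrt{\lambda}\,h^{2-\epsilon}\big).
\end{equation*}

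Integrating over $[0,\overline{t}/\lambda^2]$, $\overline{t}\in[0,T]$, produces the growth factor $e^{\frac12 c_1\sqrt{\lambda}\cdot T/\lambda^2}=e^{\frac12 c_1 T\lambda^{-3/2}}$ and
\begin{equation*}
\sup_{\overline{t}\in[0,T]}\sqrt{\Phi(\overline{t}/\lambda^2)}\le C\Big(h+\tfrac{T}{\lambda^2}h^{1-\epsilon}\Big)e^{\frac12 c_1 T\lambda^{-3/2}}.
\end{equation*}
Since $\lambda^2|\mathbf{p}|\le \lambda^{3/2}K^{-1/2}\sqrt{\Phi}$ and $|\mathbf{q}|\le\sqrt{\Phi}$, both entries of the rescaled difference $(\lambda^2\mathbf{p},\mathbf{q})(\overline{t}/\lambda^2)$ are at most a fixed negative power of $\lambda$ times $h^{1-\epsilon}e^{\frac12 c_1 T\lambda^{-3/2}}$. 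Taking logarithms,
\begin{equation*}
\log\!\big(h^{1-\epsilon}e^{\frac12 c_1 T\lambda^{-3/2}}\big)=-(1-\epsilon)|\log h|+\tfrac12 c_1 T\lambda^{-3/2},
\end{equation*}
and since $|\log h|\lambda^{3/2+\tilde{\alpha}}\to\infty$ forces $|\log h|\gg\lambda^{-3/2-\tilde{\alpha}}\gg\lambda^{-3/2}$, this quantity tends to $-\infty$, the polynomial prefactors being negligible. Thus $\sup_{\overline{t}\in[0,T]}\big(\lambda^2|\oa-\ta|+|\ov-\tv|\big)(\overline{t}/\lambda^2)\to0$ for each fixed $\omega\in\overline{\Omega}$, which is the asserted $\mathbb{P}$-a.s.\ strong (uniform-in-time) convergence.

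The main obstacle is precisely the extraction of the $\sqrt{\lambda}$ growth rate: with the naive stability bound the factor $e^{O(1/\lambda^2)}$ cannot be beaten by any power of $h$, whereas the symplectic weighting lowers it to $e^{O(\lambda^{-3/2})}$, which is exactly matched by the hypothesis through its exponent $3/2$, the spare $\lambda^{\tilde{\alpha}}$ providing the margin that drives the exponent to $-\infty$. A minor but worthwhile remark is that only almost-sure finiteness of $C_V(\omega)$ is needed — no uniformity in $\omega$ — since the statement is established path by path for fixed $\omega\in\overline{\Omega}$.
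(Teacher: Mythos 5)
Your proof is correct, and its core mechanism is the same as the paper's: both compare the effective dynamics (\ref{eq:Res1a})--(\ref{eq:Res1v}) with the auxiliary classical flow (\ref{eq:Res2a})--(\ref{eq:Res2v}) through a Gronwall argument engineered so that the growth rate is $O(\sqrt{\lambda})$ per unit time, hence $e^{CT\lambda^{-3/2}}$ over the horizon $T/\lambda^2$, which the hypothesis $|\log h|\lambda^{3/2+\tilde{\alpha}}\rightarrow\infty$ then beats through the factor $h^{1-\epsilon}$. The difference is one of implementation. The paper works with $f_{\lambda,h}=\lambda^2\|\oa-\ta\|$ and $g_{\lambda,h}=\|\ov-\tv\|$ in the diffusive time variable, introduces an auxiliary homogeneous system $(\tilde{f},\tilde{g})$, symmetrizes it by the anisotropic rescaling $\overline{f}(\overline{t})=\lambda^{-3/2}\tilde{f}(\lambda^{3/2}\overline{t})$, $\overline{g}(\overline{t})=\tilde{g}(\lambda^{3/2}\overline{t})$, applies Gronwall to $\overline{f}^2+\overline{g}^2$, and then reinstates the forcing via Duhamel to reach (\ref{eq:DifferenceBd}). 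Your weighted energy $\Phi=|\mathbf{q}|^2+K\lambda|\mathbf{p}|^2$ performs exactly the same balancing in a single step (the geometric-mean rate $\sqrt{1\cdot\lambda}$ is the paper's rescaling in disguise), with the inhomogeneity absorbed by the integrating factor; this is somewhat more streamlined, since no auxiliary homogeneous system is needed. One small point where your write-up is more careful than the paper's: in (\ref{eq:DiffInq2}) the paper invokes $\sup_{\omega\in\overline{\Omega}}\|\nabla^2\overline{V}\|_{L^\infty}$, whose finiteness is not literally guaranteed by (B2) (which yields only almost-sure finiteness of the realizations' $W^{2,\infty}$ norms), whereas you correctly note that pathwise finiteness of $C_V(\omega)$ suffices for the almost-sure convergence asserted in the lemma.
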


\begin{proof}

We fix finite $T >0,$ and choose $\lambda$ and $h$ small enough such that $T/\lambda^2<\overline{C}\epsilon |\log h|/\lambda.$ We define  
$$f_{\lambda,h}(\overline{t}):= \lambda^{2} \|\overline{\ba}(\overline{t}/\lambda^{2})-\tilde{\ba}(\overline{t}/\lambda^{2})\|, $$ 
and 
$$g_{\lambda,h}(\overline{t})=\|\ov(\overline{t}/\lambda^2)-\tv(\overline{t}/\lambda^2)\|, $$
for $\overline{t}\in [0,T ].$ We have from (\ref{eq:Res1a}) - (\ref{eq:Res2InitCond}) that
\begin{equation}
\label{eq:DiffInq1}
|\partial_{\overline{t}} f_{\lambda,h}(\overline{t})|  \le C ( g_{\lambda ,h}(\overline{t}) + h^{2-\epsilon})
\end{equation}
and 
\begin{align}
|\partial_{\overline{t}} g_{\lambda ,h}| & \le C( \frac{1}{\lambda}\|\oa(\overline{t}/\lambda^2) -\ta(\overline{t}/\lambda^2)\| \sup_{\omega\in\overline{\Omega}}\|\nabla^2 \overline{V}\|_{L^\infty} + h^{1-\epsilon}\lambda^{-2})\nonumber \\
&\le C (\frac{1}{\lambda^3}f_{\lambda,h} + h^{1-\epsilon}\lambda^{-2}), \label{eq:DiffInq2}
\end{align}
where $C$ is a positive constant that is independent of $h,\epsilon$ and $\lambda.$
We now use Gronwall's lemma and Duhamel  formula to obtain bounds on $f_{\lambda ,h}$ and $g_{\lambda ,h}.$

Introduce the auxiliary $C^1(\bbR)$ functions $\tilde{f}_{\lambda, h}$ and $\tilde{g}_{\lambda,h}$ that satisfy 
\begin{align}
|\partial_{\overline{t}} \tilde{f}_{\lambda,h}(\overline{t})|& \le C \tilde{g}_{\lambda, h}(\overline{t})\\
|\partial_{\overline{t}} \tilde{g}_{\lambda,h}(\overline{t})| & \le C \frac{1}{\lambda^3} \tilde{f}_{\lambda, h}(\overline{t}),\label{eq:AuxDiffInq}
\end{align}
with initial conditions $\tilde{f}_{\lambda, h}(0) = f_{\lambda, h}(0)$ and $\tilde{g}_{\lambda, h}(0)= g_{\lambda, h}(0).$
We also introduce the rescaled $C^1(\bbR)$ functions $\overline{f}$ and $\overline{g}$ defined by 
\begin{align}
\overline{f}_{\lambda,h}(\overline{t})&:= \lambda^{-3/2} \tilde{f}_{\lambda,h}(\lambda^{3/2}\overline{t})\\
\overline{g}_{\lambda,h}(\overline{t})&:= \tilde{g}_{\lambda,h}(\lambda^{3/2}\overline{t}).\label{eq:AuxiliaryScaling}
\end{align}
They satisfy the differential inequalities
\begin{align*}
|\partial_{\overline{t}} \overline{f}_{\lambda,h}(\overline{t})|& \le C \overline{g}_{\lambda, h}(\overline{t})\\
|\partial_{\overline{t}} \overline{g}_{\lambda,h}(\overline{t})| & \le C \overline{f}_{\lambda, h}(\overline{t}).
\end{align*}
We define $$\overline{l}_{\lambda, h}:= \overline{f}_{\lambda ,h}^2 + \overline{g}_{\lambda, h}^2.$$ It follows from the above inequalities that 
\begin{equation*}
|\partial_{\overline{t}} \overline{l}_{\lambda, h}| \le 2C \overline{l}_{\lambda, h}.
\end{equation*}
By Gronwall's lemma, we have 
\begin{equation*}
\sup_{\overline{t}\in [0,T]}\overline{l}_{\lambda,h}(\overline{t}) \le e^{2CT}\overline{l}_{\lambda, h}(0), 
\end{equation*}
which implies 
\begin{equation*}
\sup_{\overline{t}\in [0,T]}(|\overline{f}_{\lambda,h}(\overline{t})|, |\overline{g}_{\lambda,h}(\overline{t})|) \le  e^{C T} (|\overline{f}_{\lambda,h}(0)|+|\overline{g}_{\lambda,h}(0)|).
\end{equation*}
Using (\ref{eq:AuxiliaryScaling}), we have that 
\begin{equation}
\label{eq:AuxiliaryBd}
\sup_{\overline{t}\in [0,T]}(|\tilde{f}_{\lambda,h}(\overline{t})|, |\tilde{g}_{\lambda,h}(\overline{t})|) \le  C'e^{C T/\lambda^{3/2}} h ,
\end{equation}
for some constant $C'$ that is independent of $h,\epsilon$ and $\lambda.$ Now, it follows from (\ref{eq:DiffInq1})-(\ref{eq:AuxDiffInq}), (\ref{eq:AuxiliaryBd}) and the Duhamel formula that 
\begin{equation}
\label{eq:DifferenceBd}
\sup_{\overline{t}\in [0,T]}(|f_{\lambda,h}(\overline{t})|,| g_{\lambda,h}(\overline{t})| )\le  C'e^{CT/\lambda^{3/2}} h^{1-\epsilon}\lambda^{-2} .
\end{equation}
Since $|\log h|\lambda^{3/2+\tilde{\alpha}}\rightarrow \infty$ as $\lambda, h\rightarrow 0$ for some $\tilde{\alpha}>0,$ we have that  
\begin{equation}
\lim_{\lambda, h\rightarrow 0}e^{CT/\lambda^{3/2}} h^{1-\epsilon}\lambda^{-2}  =0,
\end{equation}
which, together with (\ref{eq:DifferenceBd}) imply the claim of the lemma. \end{proof}

This lemma allows us to apply the results of \cite{KP} and \cite{KR1} on momentum diffusion for weakly random Hamiltonian systems. 

\begin{proof}[Proof of Theorem \ref{th:Main4}]

We know from \cite{KP} that in dimensions $N\ge 3,$ and from \cite{KR1} in $N=2,$ that the auxiliary stochastic process 
$$(\lambda^2\oa(\overline{t}/\lambda^2) , \ov(\overline{t}/\lambda^2))_{\overline{t}\ge 0}$$ converges in law, as 
$\lambda\rightarrow 0,$ to the stochastic process $(\int_0^{\overline{t}} \underline{v}(s)ds, \underline{v}(\overline{t})),$ where $\underline{v}$ satisfy (\ref{eq:Diff1}). The claim follows now from Lemma \ref{lm:StrongConv} and the fact that ${\mathbb P}$-a.s. convergence implies convergence in law.
\end{proof}

We now specialize to the case $N\ge 3.$ We introduce the auxiliary Liouville equation corresponding to a classical particle moving in the random potential. Let $\tilde{\phi}^\lambda(\bx,t,\bk)$ satisfy the Liouville equation
\begin{equation}
\label{eq:Liouville2}
\partial_t \tilde{\phi}^\lambda = \partial_{\overline{t}} \ta|_{\ta=\bx,\tv=\bk }\cdot \nabla_{\bx} \tilde{\phi}^\lambda + \partial_{\overline{t}} {\tv}|_{\ta=\bx,\tv=\bk}\cdot \nabla_{\bk} \tilde{\phi}^\lambda = \bk\cdot \nabla_{\bx} \tilde{\phi}^\lambda - 2\lambda \nabla_\bx \overline{V}\cdot \nabla_\bk \tilde{\phi}^\lambda .
\end{equation}
with initial condition $\tilde{\phi}^\lambda(\bx,0,\bk)= \phi_0(\lambda^{2+\beta}\bx,\bk), \beta >0,$ where $\phi_0$ is compactly supported, twice differentiable in $\bx\in\bbR^N,$ and four times differentiable in $\bk\in\bbR^N,$ such that its support is contained in the shell
\begin{equation*}
{\mathcal A}(M):= \{(\bx,\bk)\in \bbR^{2N}, \ \ 1/M < \|\bk\|< M\},
\end{equation*}
for some $M>1.$
We now show that the solutions of (\ref{eq:Liouville1}) and (\ref{eq:Liouville2}) are almost surely in $ C^1(\bbR; C^1(\bbR^{2N})\cap W^{1,\infty}(\bbR^{2N})).$ 

\begin{lemma}\label{lm:SolLiouville}
Let $\phi^\lambda$ and $\tilde{\phi}^\lambda$ be solutions of the Liouville equations (\ref{eq:Liouville1}) and (\ref{eq:Liouville2}), respectively, with the same initial condition $\phi_0$ as described above. Then $$\phi^\lambda \ \ \mathrm{and} \ \ \tilde{\phi}^\lambda \ \ \in C^1(\bbR; C^1(\bbR^{2N})\cap W^{1,\infty}(\bbR^{2N})) \ \ {\mathbb P}-\mathrm{a.s.},
$$
for all $t\ge 0.$
\end{lemma}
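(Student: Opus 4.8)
The plan is to solve both Liouville equations by the method of characteristics, exploiting the fact that each is a linear first-order transport equation of the form $\partial_t\phi = b^\lambda(\bx,\bk,t;\omega)\cdot\nabla_{(\bx,\bk)}\phi$. For (\ref{eq:Liouville2}) the field is explicitly $b^\lambda=(\bk,\,-2\lambda\nabla_\bx\overline{V}(\bx;\omega))$, the generator of the auxiliary particle dynamics (\ref{eq:Res2a})-(\ref{eq:Res2v}); for (\ref{eq:Liouville1}) it is the effective soliton field $(\partial_{\overline t}\oa,\partial_{\overline t}\ov)$ of (\ref{eq:Res1a})-(\ref{eq:Res1v}) evaluated at $(\oa,\ov)=(\bx,\bk)$, i.e.\ the same clean field plus the $O(h^{2-\epsilon})$ and $O(h^{1-\epsilon})$ remainders produced by Proposition \ref{pr:Main}. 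Writing $\Psi^\lambda_t$ for the characteristic flow map and $\phi_0^\lambda(\bx,\bk):=\phi_0(\lambda^{2+\beta}\bx,\bk)$ for the initial datum, the solution is obtained by transporting $\phi_0^\lambda$ along characteristics, $\phi^\lambda(\cdot,t)=\phi_0^\lambda\circ\Psi^\lambda_t$ (and similarly $\tilde\phi^\lambda$), so that all the claimed regularity is inherited from that of $\Psi^\lambda_t$ and of $\phi_0$.

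The first step is to establish that $\Psi^\lambda_t$ is a global $C^1$ flow for $\omega$ in the full-measure set on which (\ref{eq:TimeIndepPot3}) holds. On that set $\overline{V}\in C^2(\bbR^N)\cap W^{2,\infty}(\bbR^N)$, so $\nabla\overline{V}$ is $C^1$ and, together with $\nabla^2\overline{V}$, globally bounded; hence the characteristic field is $C^1$ in $(\bx,\bk)$ with globally bounded first derivatives. Since the momentum equation has bounded right-hand side and the position equation is linear in the momentum, the field grows at most linearly, so by Cauchy--Lipschitz there is no finite-time blow-up and the flow exists globally and is unique. Smooth dependence on initial data (the variational equation, whose coefficient is the bounded block $\nabla^2\overline{V}$ plus a constant part) then shows $(\bx,\bk)\mapsto\Psi^\lambda_t(\bx,\bk)$ is $C^1$ with Jacobian controlled, locally uniformly in $t$, by Gronwall's lemma; autonomy of the field (time-independence of $\overline{V}$) yields joint $C^1$-regularity in $(t,\bx,\bk)$.

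With the flow in hand, the conclusion follows by composition. The datum $\phi_0^\lambda$ is compactly supported and, being $C^2$ in $\bx$ and $C^4$ in $\bk$, in particular $C^1$; composing a compactly supported $C^1$ function with the $C^1$ flow gives a function that is $C^1$ jointly in $(t,\bx,\bk)$, hence $C^1$ in $t$ with values in $C^1(\bbR^{2N})$. Moreover $\Psi^\lambda_t$ maps compact sets to compact sets, so for each $t$ the transported function is supported in the compact set $\Psi^\lambda_t(\mathrm{supp}\,\phi_0^\lambda)$; a compactly supported $C^1$ function automatically lies in $W^{1,\infty}(\bbR^{2N})$, with $C^1$ and $W^{1,\infty}$ norms bounded locally uniformly in $t$ via the Gronwall bound on the Jacobian. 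This yields $\phi^\lambda,\tilde\phi^\lambda\in C^1(\bbR;C^1(\bbR^{2N})\cap W^{1,\infty}(\bbR^{2N}))$, ${\mathbb P}$-a.s.

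The main subtlety lies in the field of (\ref{eq:Liouville1}), which is the genuine effective dynamics and therefore carries the remainder terms of Proposition \ref{pr:Main} on top of the clean transport field of (\ref{eq:Liouville2}). To run the argument for $\phi^\lambda$ one must verify that these remainders, viewed as functions of the phase variables $(\bx,\bk)$ (and of $t,\omega$), are themselves $C^1$ with bounded first derivatives, so that the total field remains a global $C^1$ vector field of sublinear growth; this is again traced back to $\overline{V}\in C^2\cap W^{2,\infty}$ through the skew-orthogonal decomposition and the fluctuation equations, the error terms being smooth functionals of $\overline{V}$ and its first two derivatives. The equation (\ref{eq:Liouville2}) for $\tilde\phi^\lambda$ is the clean model case where the field is exactly $(\bk,-2\lambda\nabla_\bx\overline{V})$ and no such check is needed. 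I expect this bookkeeping---ensuring the remainder field inherits $C^1$-with-bounded-derivatives regularity rather than mere boundedness---to be the only genuine obstacle, everything else being standard ODE flow theory.
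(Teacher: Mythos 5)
Your proposal is correct and follows the same overall strategy as the paper: both arguments solve the two Liouville equations by the method of characteristics, reduce the claim to showing that the characteristic map is ${\mathbb P}$-a.s.\ a global $C^1$ diffeomorphism --- extracted in both cases from $\overline{V}\in C^2(\bbR^N)\cap W^{2,\infty}(\bbR^N)$ and the splitting of the phase-space field into a linear part (the $\bk$-advection) plus a nonlinear part that is $C^1$ with bounded derivatives --- and then conclude by composing with the compactly supported $C^1$ initial datum, noting that a compactly supported $C^1$ function lies in $W^{1,\infty}$. The difference is in execution, and it is worth recording. You integrate the genuine characteristic ODEs: Cauchy--Lipschitz together with the at-most-linear growth of the field gives a global flow, and the variational equation plus Gronwall gives $C^1$ dependence on initial data. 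The paper instead works with the explicit frozen-coefficient map $\chi_{t,t_0}({\mathbf X})={\mathbf X}-A({\mathbf X})(t-t_0)$, proves it is a $C^1$ diffeomorphism by a short-time perturbation plus continuation argument, and then sets $\Phi({\mathbf X},t)=\Phi_0(\chi_{t,0}^{-1}({\mathbf X}))$. Your route is the more robust one: the identity $\Phi(\chi_{t,0}({\mathbf X}),t)=\Phi_0({\mathbf X})$ used in the paper holds exactly only when $A$ is constant along the straight segments ${\mathbf X}-A({\mathbf X})s$ (as for free streaming), which fails for the Hamiltonian field $(\bk,-2\lambda\nabla\overline{V})$ once $\nabla\overline{V}\not\equiv 0$; the genuine flow has no such defect, so your version in effect repairs that step. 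Finally, both you and the paper dispose of (\ref{eq:Liouville1}) by declaring the argument identical to that for (\ref{eq:Liouville2}); you are more candid in flagging that this requires the $O(h^{2-\epsilon})$ and $O(h^{1-\epsilon})$ remainder fields inherited from Proposition \ref{pr:Main} to be $C^1$ in $(\bx,\bk)$ with bounded derivatives, a verification the paper also leaves implicit.
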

\begin{proof}
Consider first the Liouville equation corresponding to the auxiliary dynamics, (\ref{eq:Liouville2}). The analysis for (\ref{eq:Liouville1}) is similar. Introduce the variables ${\mathbf X}:= (\bx,\bk)\in \bbR^{2N},$ and let $A({\mathbf X}) = (\bk, -2\lambda \nabla_{\bx}\overline{V}(\bx))|_{{\mathbf X}:= (\bx,\bk)},$ $\Phi ({\mathbf X},t):= \tilde{\phi}^\lambda (\bx,t,\bk).$  Note that $$A({\mathbf X})= A_{l}({\mathbf X}) + A_{n}({\mathbf X}),$$ where $A_{l}$ is linear in ${\mathbf X}$ and $A_{n}$ is nonlinear in ${\mathbf X},$ with $A_n\in C^1(\bbR^{2N})\cap W^{1,\infty}(\bbR^{2N})$ as a vector-valued function, since $\overline{V}\in W^{2,\infty}\cap C^2.$
Now, Eq. (\ref{eq:Liouville2}) can be written as the Hamilton-Jacobi equation 
\begin{equation*}
\partial_t \Phi(\mathbf X,t) = A({\mathbf X}) \cdot \nabla_{{\mathbf X}} \Phi({\mathbf X},t),
\end{equation*}
with initial condition $\Phi_{0}\in C^2(\bbR^{2N}).$
We use the method of characteristics to solve the above Hamilton-Jacobi equation; see for example \cite{L1} for a discussion about Hamilton-Jacobi equations. Consider the mapping 
$$\chi_{t,t_0} ({\mathbf X}):= {\mathbf X} - A({\mathbf X}) (t-t_0) \ \ \in \bbR^{2N}.$$ We want to show that ${\mathbb P}$-a.s. $\chi$ is a diffeomorphism of class $C^1,$  i.e., for $\omega\in\overline{\Omega},$ $\chi$ is bijective and with a $C^1$ inverse.
For $\omega\in\overline{\Omega}$ and $0<t-t_0< \frac{1}{2\sup_{\omega\in\overline{\Omega}}\|A_{n}\|_{L^\infty}},$ the mapping $\chi$ is invertible and differentiable, since the linear part $A_l$ is clearly invertible and differentiable, while the nonlinear part $A_n$ is a differentiable perturbation for short times.

We claim that for any $t-t_0>0,$ the mapping $\chi$ is a ${\mathbb P}$-a.s. diffeomorphism of class $C^1.$ Let $\tau>0$ be the maximal  time such that $\chi_{\tau+t_0,t_0}$ is ${\mathbb P}$-a.s. invertible with $C^1$ inverse, and suppose that $\tau<\infty.$ Let $\delta t:= \frac{1}{4\sup_{\omega\in\overline{\Omega}} \|A_{n}\|_{L^\infty}}.$ It follows from the definition of $\chi$ that  
$$\chi_{\tau +\frac{\delta t}{4}, t_0} = \chi_{\tau - \frac{\delta t}{4},t_0} + A_l \frac{\delta t}{2} + A_n\frac{\delta t}{2}.$$
Since $\tau- \frac{\delta t}{4}<\tau,$ $\chi_{\tau - \frac{\delta t}{4},t_0}$ is a $C^1$ diffeomorphism. Furthermore, the linear part is differentiable and invertible. Again, $A_n\frac{\delta t}{2}$ is a small differentiable perturbation, and hence $\chi_{\tau +\frac{\delta t}{4}, t_0}$ is a $C^1$ diffeomorphism. However, this contradicts the definition of $\tau,$ and hence $\tau=\infty.$

Using the definition of $\chi,$ it is straightforward to verify that the solution $\Phi$ of the Hamilton-Jacobi equation satisfies 
$$\Phi(\chi_{t,0}({\mathbf X}),t) = \Phi_0({\mathbf X}).$$ Since $\chi$ has a differentiable inverse ${\mathbb P}$-a.s., 
$$\Phi({\mathbf X},t) = \Phi_0(\chi^{-1}_{t,0}({\mathbf X})).$$ Furthermore, $\Phi_0\in C^2(\bbR^{2N}),$ implies that $\Phi\in  C^1(\bbR; C^1(\bbR^{2N})\cap W^{1,\infty}(\bbR^{2N})) \ \ {\mathbb P}$-almost surely. 

Using the same argument, one can show that the solution of (\ref{eq:Liouville1}) with initial condition $\phi_0$ is almost surely in $ C^1(\bbR; C^1(\bbR^{2N})\cap W^{1,\infty}(\bbR^{2N}))$ by studying the solution of the corresponding Hamilton-Jacobi equation. 
\end{proof}

We also have the following lemma on the convergence of the (rescaled) solutions of (\ref{eq:Liouville1}) and (\ref{eq:Liouville2}) in the weak-coupling/space-adiabatic limit.

\begin{lemma}\label{lm:ConvergenceLiouvilleSol}
Let $\phi^\lambda$ and $\tilde{\phi}^\lambda$ be solutions of the Liouville equations (\ref{eq:Liouville1}) and (\ref{eq:Liouville2}), respectively, with the same initial condition $\phi_0$ as described above. Suppose that there exists $\tilde{\alpha}>0$ such that $\lambda\rightarrow 0$ as $h\rightarrow 0$ with $|\log h|\lambda^{1+\tilde{\alpha}}\rightarrow \infty.$ Then, for any fixed $T>0$ and any $\beta\in (0,\tilde{\alpha}/2),$ we have that 
\begin{equation}
\label{eq:ClassicalLimit}
\lim_{\lambda, h\rightarrow  0}\sup_{(t,\bx,\bk)\in [0,T ]\times K} [\tilde{\phi}^\lambda(\bx/\lambda^{2+\beta},t/\lambda^{2+2\beta},\bk) - \phi^\lambda(\bx/\lambda^{2+\beta},t/\lambda^{2+2\beta},\bk)] =0 \ \ {\mathbb P}-\mathrm{a.s.}.
\end{equation}
\end{lemma}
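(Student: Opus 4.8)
The plan is to use that both $\phi^\lambda$ and $\tilde\phi^\lambda$ are merely transported initial data. By Lemma \ref{lm:SolLiouville} each solution is $\mathbb P$-a.s. obtained by composing the (rescaled) initial datum $\phi_0^{(\lambda)}(\bx,\bk):=\phi_0(\lambda^{2+\beta}\bx,\bk)$ with the inverse characteristic flow. Writing $\Psi_s$ and $\tilde\Psi_s$ for the forward flows over $\overline t$-time $s$ generated by the true field (\ref{eq:Res1a})--(\ref{eq:Res1v}) and the auxiliary field (\ref{eq:Res2a})--(\ref{eq:Res2v}), and packaging $(\bx,\bk)$ into $\mathbf X$, one has $\mathbb P$-a.s. $\phi^\lambda(\mathbf X,s)=\phi_0^{(\lambda)}(\Psi_s^{-1}\mathbf X)$ and $\tilde\phi^\lambda(\mathbf X,s)=\phi_0^{(\lambda)}(\tilde\Psi_s^{-1}\mathbf X)$. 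Since $\phi_0\in C^2$ with compact support it is globally Lipschitz, so at the rescaled arguments $\mathbf X=(\bx/\lambda^{2+\beta},\bk)$, $s=t/\lambda^{2+2\beta}$,
$$|\phi^\lambda-\tilde\phi^\lambda|\le L\bigl(\lambda^{2+\beta}\,\|\Pi_x\Psi_s^{-1}\mathbf X-\Pi_x\tilde\Psi_s^{-1}\mathbf X\|+\|\Pi_k\Psi_s^{-1}\mathbf X-\Pi_k\tilde\Psi_s^{-1}\mathbf X\|\bigr),$$
with $\Pi_x,\Pi_k$ the projections onto position and momentum and $L$ the Lipschitz constant of $\phi_0$. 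It therefore suffices to control, $\mathbb P$-a.s., the deviation of the two characteristics ending at the common point $(\bx/\lambda^{2+\beta},\bk)$ at time $t/\lambda^{2+2\beta}$, with the position component weighted by $\lambda^{2+\beta}$.

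I would then estimate this deviation by a Gronwall argument identical in spirit to Lemma \ref{lm:StrongConv}, but carried over the longer spatial-diffusion time scale $\lambda^{-2-2\beta}$. Fix $\omega\in\overline\Omega$, so $M(\omega):=\|\nabla^2\overline V\|_{L^\infty}<\infty$. Writing $(\oa,\ov)$ and $(\ta,\tv)$ for the two trajectories and setting $f:=\|\oa-\ta\|$, $g:=\|\ov-\tv\|$, equations (\ref{eq:Res1a})--(\ref{eq:Res1v}) and (\ref{eq:Res2a})--(\ref{eq:Res2v}) give
\begin{align*}
|\partial_r f|&\le g+Ch^{2-\epsilon},\\
|\partial_r g|&\le 2\lambda M(\omega)\,f+Ch^{1-\epsilon},
\end{align*}
the momentum coupling coming from $\|\nabla\overline V(\oa)-\nabla\overline V(\ta)\|\le M(\omega)\|\oa-\ta\|$. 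Following the auxiliary-function and rescaling device of Lemma \ref{lm:StrongConv} (here the natural symmetrization is $\hat f:=\lambda^{1/2}f$, so that $\ell:=\hat f^2+g^2$ obeys $|\partial_r\ell|\le 2C\lambda^{1/2}\ell+C(h^{1-\epsilon}+\lambda^{1/2}h^{2-\epsilon})\sqrt\ell$), and integrating backward from $r=t/\lambda^{2+2\beta}$, where $\ell=0$ by the common terminal condition, Gronwall and Duhamel yield
$$\sqrt{\ell(0)}\le C\lambda^{-1/2}\,h^{1-\epsilon}\,e^{CM(\omega)T\lambda^{-3/2-2\beta}}.$$

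Reading off $g(0)\le\sqrt{\ell(0)}$ and $\lambda^{2+\beta}f(0)=\lambda^{3/2+\beta}\hat f(0)\le\lambda^{3/2+\beta}\sqrt{\ell(0)}$ and inserting into the Lipschitz bound gives, for every $\omega\in\overline\Omega$,
$$\sup_{(t,\bx,\bk)\in[0,T]\times K}|\phi^\lambda-\tilde\phi^\lambda|\le C'\lambda^{-1/2}\,h^{1-\epsilon}\,e^{CM(\omega)T\lambda^{-3/2-2\beta}},$$
and (\ref{eq:ClassicalLimit}) follows once this right-hand side vanishes as $\lambda,h\to0$, i.e. once $(1-\epsilon)|\log h|-CM(\omega)T\lambda^{-3/2-2\beta}\to-\infty$; this is precisely where the hypothesis $|\log h|\lambda^{1+\tilde\alpha}\to\infty$ enters, the admissible range of $\beta\in(0,\tilde\alpha/2)$ being dictated by the requirement that the emerging negative power $\lambda^{-3/2-2\beta}$ in the exponent be beaten (taking $\epsilon$ small), consistent with the $\tilde\beta$ of Theorem \ref{th:Main2Extended}. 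I expect the main obstacle to be exactly this last balance: over the very long time $\lambda^{-2-2\beta}$ the only $\mathbb P$-a.s. available control of the velocity deviation is the crude exponential built from $\|\nabla^2\overline V\|_{L^\infty}$ — the random potential offers no pointwise cancellation along a fixed realization — so one must verify that this exponential, together with the prefactor power of $\lambda$, is still dominated by the adiabatic smallness $h^{1-\epsilon}$. The favorable weight $\lambda^{2+\beta}$ on the position deviation makes the momentum deviation $g(0)$ the binding quantity, and it is its control over the spatial-diffusion time scale that governs the whole estimate.
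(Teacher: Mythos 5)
Your reduction to characteristics is fine as far as it goes, but the final balance fails, and this is a genuine gap, not a technicality. Your Gronwall estimate over the spatial-diffusion time scale $t/\lambda^{2+2\beta}\le T\lambda^{-2-2\beta}$ produces the factor $e^{CM(\omega)T\lambda^{-3/2-2\beta}}$, so your bound vanishes only if $(1-\epsilon)|\log h|-CM(\omega)T\lambda^{-3/2-2\beta}\rightarrow +\infty$ (note also the sign slip in your write-up), i.e.\ only if $|\log h|\,\lambda^{3/2+2\beta}\rightarrow\infty$. The lemma's hypothesis gives only $|\log h|\,\lambda^{1+\tilde\alpha}\rightarrow\infty$, and $3/2+2\beta>1+\tilde\alpha$ whenever $\tilde\alpha<1/2+2\beta$ --- in particular for every $\tilde\alpha\le 1/2$, which the lemma allows. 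Concretely, take $\tilde\alpha=0.1$, any $\beta\in(0,0.05)$, and $|\log h|=\lambda^{-1-\tilde\alpha}\log(1/\lambda)$: the hypothesis holds, yet your right-hand side $C'\lambda^{-1/2}h^{1-\epsilon}e^{CM(\omega)T\lambda^{-3/2-2\beta}}$ blows up. The restriction $\beta<\tilde\alpha/2$ cannot rescue this; in the paper it is used only to guarantee that the macroscopic time $T\lambda^{-2-2\beta}$ fits inside the validity window $[0,\overline{C}\epsilon|\log h|/\lambda)$ of Theorem \ref{th:Main3}, i.e.\ that $|\log h|\lambda^{1+2\beta}\rightarrow\infty$; it is never asked to beat an exponential in $1/\lambda$. (Your route would prove the lemma under the stronger hypothesis $|\log h|\lambda^{3/2+\tilde\alpha}\rightarrow\infty$ of Theorem \ref{th:Main4}, since then $3/2+2\beta<3/2+\tilde\alpha$; but that is a strictly weaker statement than the one claimed, and Theorem \ref{th:Main5} needs the $1+\tilde\alpha$ rate.)

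The paper avoids the exponential precisely by never comparing trajectories. It sets $\overline{\phi}^\lambda:=\phi^\lambda-\tilde{\phi}^\lambda$ and subtracts the two Liouville equations (\ref{eq:Liouville1}) and (\ref{eq:Liouville2}): $\overline{\phi}^\lambda$ then solves the auxiliary transport equation $\partial_t\overline{\phi}^\lambda=\bk\cdot\nabla_\bx\overline{\phi}^\lambda-2\lambda\nabla_\bx\overline{V}\cdot\nabla_\bk\overline{\phi}^\lambda$ with zero initial data and a source of size $O(h^{1-\epsilon}\|\phi^\lambda\|_{W^{1,\infty}})$, because the $O(h^{2-\epsilon})$ and $O(h^{1-\epsilon})$ discrepancies between (\ref{eq:Res1a})--(\ref{eq:Res1v}) and the auxiliary field hit only first derivatives of $\phi^\lambda$. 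Since the transport flow preserves the sup norm, Duhamel accumulates this source \emph{linearly} in time, giving $|\overline{\phi}^\lambda|=O(T\lambda^{-2-2\beta}h^{1-\epsilon}\sup_s\|\phi^\lambda(s)\|_{W^{1,\infty}})=O(T\lambda^{-4-3\beta}h^{1-\epsilon})$, a fixed power of $1/\lambda$ times $h^{1-\epsilon}$; this does vanish under the hypothesis, because $|\log h|\gg\lambda^{-1-\tilde\alpha}\gg|\log\lambda|$ means $h^{1-\epsilon}$ beats every fixed power of $1/\lambda$. The Lyapunov-type divergence of nearby characteristics that your Gronwall inevitably sees simply never enters the sup-norm estimate of the \emph{difference of solutions}; the only price is a $W^{1,\infty}$ bound on the single solution $\phi^\lambda$ (implicitly $O(\lambda^{-2-\beta})$ in the paper, via Lemma \ref{lm:SolLiouville}). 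To salvage your characteristics picture you would have to establish exactly such a polynomial bound on the relevant Jacobians, which a worst-case Gronwall cannot deliver.
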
 
\begin{proof}
For fixed $0<T<\infty,$ let $\lambda, h$  be small enough such that $T\lambda^{-2-2\beta}<\overline{C}\epsilon |\log h|/\lambda.$ This is possible since  $|\log h|\lambda^{1+\tilde{\alpha}}\rightarrow \infty$ as $\lambda,h\rightarrow 0,$ and $\beta<\tilde{\alpha}/2.$
Recall that $\phi^\lambda$ and $\tilde{\phi}^\lambda$ satisfy the same initial conditions. Let 
$$\overline{\phi}^\lambda:= \phi^\lambda -\tilde{\phi}^\lambda.$$
It follows from (\ref{eq:Liouville1}), (\ref{eq:Liouville2}) and Lemma \ref{lm:SolLiouville} that,
\begin{equation*}
\partial_{t}\overline{\phi}^\lambda = \bk\cdot \nabla_\bx \overline{\phi}^\lambda -2\lambda \nabla_\bx \overline{V}(\bx) \cdot \nabla_\bk \overline{\phi}^\lambda + O(h^{1-\epsilon}\|\phi^\lambda\|_{W^{1,\infty}(\bbR^{2N})}),
\end{equation*} 
for $t\in [0,T],$ with initial condition $\overline{\phi}^\lambda(0)=0.$
Therefore, for $t\in [0,T ]$ and all $\omega\in\overline{\Omega},$
\begin{equation*}
|\overline{\phi}^\lambda (\bx/\lambda^{2+\beta},t/\lambda^{2+2\beta},\bk)| = O( T \lambda^{-4-3\beta} h^{1-\epsilon}).
\end{equation*}
Now, for $0<\beta<\tilde{\alpha}/2,$
\begin{equation*}
\lim_{\lambda, h\rightarrow 0} T \lambda^{-4-3\beta} h^{1-\epsilon} =  0,
\end{equation*}
which implies the claim of the lemma.
\end{proof}

We now use Lemma \ref{lm:ConvergenceLiouvilleSol} and the results of \cite{KR2} to prove Theorem \ref{th:Main5}.

\begin{proof}[Proof of Theorem \ref{th:Main5}]

It is shown in \cite{KR2} that momentum diffusion of a classical particle in a random and strongly mixing potential converges to a spatial Brownian motion over longer time scales (see Subsection \ref{sec:RevisionStochastic}). As a cosequence, the auxiliary stochastic process given in (\ref{eq:Res2a})-(\ref{eq:Res2InitCond}) converges in the weak-coupling limit to a spatial Brownian motion:
For $\lambda \ll 1,$ there exist $\tilde{\beta} \in (0,\tilde{\alpha}/2)$ and some constant $C$ independent of $h$ and $\lambda,$ such that, for all $0< \beta<\tilde{\beta},$ fixed $0<t_0< T $ and all compact sets $K\subset{\mathcal A}(M),$ we have 
\begin{equation}
\label{eq:ClassicalLimit2}
\sup_{(t,\bx,\bk)\in [t_0,T ]\times K} |{\mathbb E}[\tilde{\phi}^\lambda(\bx/\lambda^{2+\beta},t/\lambda^{2+2\beta},\bk)]- u(\bx,t,\bk) | \le C T  \lambda^{\tilde{\beta}-\beta},
\end{equation}
where $\tilde{\phi}^\lambda$ satisfied (\ref{eq:Liouville2}) and $u$ satisfies (\ref{eq:SpaceDiffusion}).
We also have from Lemma \ref{lm:ConvergenceLiouvilleSol} that 
\begin{equation}
\label{eq:ClassicalLimit3}
\lim_{\lambda, h\rightarrow  0}\sup_{(t,\bx,\bk)\in [t_0,T ]\times K} {\mathbb E}[\tilde{\phi}^\lambda(\bx/\lambda^{2+\beta},t/\lambda^{2+2\beta},\bk) - \phi^\lambda(\bx/\lambda^{2+\beta},t/\lambda^{2+2\beta},\bk) ]=0.
\end{equation}
It follows (\ref{eq:ClassicalLimit2}) and (\ref{eq:ClassicalLimit3}) that 
\begin{equation*}
\lim_{\lambda, h\rightarrow  0}\sup_{(t,\bx,\bk)\in [t_0,T ]\times K} |{\mathbb E}[\phi^\lambda(\bx/\lambda^{2+\beta},t/\lambda^{2+2\beta},\bk)]- u(\bx,t,\bk) |=0.
\end{equation*}
\end{proof}

\subsection{Remarks on diffusion for weakly random Hamiltonian flows}\label{sec:RevisionStochastic}

In this subsection, we recall the main conceptual ideas of references \cite{KP}, \cite{KR1} and \cite{KR2}.

Let $(\ba^\lambda(\overline{t}),\bv^\lambda(\overline{t}))$ be the rescaled positions and velocities of a classical particle in a random Hamiltonian flow defined by 
\begin{align*}
&\ba^\lambda(\overline{t}) : =\lambda^2 \tilde{\ba}(\overline{t}/\lambda^2),\\
&\bv^\lambda(\overline{t}):= \tilde{\bv}(\overline{t}/\lambda^2).
\end{align*}
They satisfy the differential equations
\begin{align*}
&\partial_{\overline{t}} \ba^\lambda = \bv^\lambda,\\
&\partial_{\overline{t}} \bv^\lambda = -\frac{2}{\lambda} \nabla \overline{V}(\ba^\lambda/\lambda^2),
\end{align*}
with initial condition 
\begin{align*}
\ba^\lambda (0) &=0\\
\bv^\lambda (0)&= \bv_0.
\end{align*}
The main difficulty in obtaining the limit of the above stochastic process is that $\ba^\lambda (\overline{t}+\delta\overline{t}), \; \delta\overline{t}\ll 1,$ may be correlated to $\ba^\lambda(\overline{t}),$ and hence the process may be non-Markovian. 

We start with discussing the case $N\ge 3,$ which was studied in \cite{KP}. The authors of \cite{KP} introduce an auxiliary modified dynamics and a stopping time $\tau_\lambda$ such that, up to times $\tau_\lambda,$ the momenta of the modified process are locally aligned and the modified trajectory is a straight line during times of intersection. Furthermore, for times larger than $\tau_\lambda,$ the modified process is a diffusion process. Because of the spatial mixing property of the random potential, the limit of the modified process is Markovian. The proof is completed by showing two more elements: First, the process $(\ba^\lambda (\overline{t}),\bv^\lambda (\overline{t}))_{0< \overline{t}<\tau_\lambda}$ converges weakly to the modified process  as $\lambda\rightarrow 0.$ Second, the stopping time $\tau_\lambda\rightarrow \infty$ as $\lambda\rightarrow 0.$

The case of $N=2$ is a little bit more difficult, due to the fact that the limiting process self-intersects. This difficulty is overcome in \cite{KR1} by modifying the stopping time condition for the modified process: up to stopping time $\tau_\lambda,$ only transversal intersections of the modified trajectory are allowed.

We also recall the main ingredients in \cite{KR2} for proving convergence of the trajectory  of a classical particle in a strongly mixing random potential in dimensions $N\ge 3$ to a spatial Brownian motion in the weak-coupling limit. The proof is based on extending the analysis in \cite{KP} to obtain explicit estimates on the convergence to momentum diffusion, and then using asymptotic expansion to obtain spatial diffusion from momentum diffusion on longer time scales. 

Consider $\overline{\phi}\in C_b^{1,1,2}(\bbR^N\times [0,+\infty)\times \bbR^N\backslash\{0\})$ satisfying 
\begin{equation}
\label{eq:Diff3}
\partial_t \overline{\phi} = \sum_{i,j=1}^N\partial_{k_i} D_{ij}(\bk) \partial_{k_j}\overline{\phi} + \bk\cdot \nabla_\bx \overline{\phi},
\end{equation}
with initial condition $\overline{\phi}(\bx,0,\bk)=\phi_0(\bx,\bk).$ Using an extension of the analysis in \cite{KP} and \cite{BKR}, it is shown  in \cite{KR2} that there exists $\tilde{\beta}>0$ such that, for all compact sets $K\in {\mathcal A}(M),$
\begin{equation}
\label{eq:EstLimit1}
\sup_{(t,\bx,\bk)\in [0,T ]\times K}|{\mathbb E}[\phi^{\lambda}(\bx/\lambda^2,t/\lambda^2,\bk)]-\overline{\phi}(\bx,t,\bk)|\le C T  (1 + \|\phi_0\|_{L^1_\bx L^4_{\bk}})\lambda^{\tilde{\beta}},
\end{equation}
where $C$ is independent of $\lambda$ and $T .$ Note that (\ref{eq:Diff3}) is well-posed in $C_b^{1,1,2},$ see for example \cite{GS1}. Furthermore, the nonvanishing property of the correlation function implies that the diffusion matrix has rank $N-1,$ and hence (\ref{eq:Diff3})  corresponds to diffusion on a sphere of constant momentum, see \cite{KP} and \cite{KR2}. By applying standard asymptotic expansion techniques, one can show that the long-time limit of the solution of (\ref{eq:Diff3}) is spatial diffusion: For every $0<t_0<T <\infty,$ $\overline{\phi}(\bx/\gamma, t/\gamma^2,\bk)$ converges in $C([t_0,T ]; L^\infty (\bbR^N\times \bbR^N))$ as $\gamma\rightarrow 0$ to $u(\bx,t,\bk),$ where $u$ satisfies (\ref{eq:SpaceDiffusion}), with 
\begin{equation}
\label{eq:EstLimit2}
\sup_{t\in[t_0,T ], (\bx,\bk)\in\bbR^N\times\bbR^N}|u(\bx,t,\bk) - \overline{\phi}(\bx/\gamma, t/\gamma^2,\bk)| \le C (\gamma T  + \sqrt{\gamma})\|\phi_0\|_{L^1_\bx L^1_\bk},
\end{equation}
where $C$ is a constant that is independent of $\lambda$ and $T .$ Now, (\ref{eq:EstLimit1}) and (\ref{eq:EstLimit2}) yield (\ref{eq:ClassicalLimit2}).



\section{Appendix A: Well-posedness of a generalized nonautonomous nonlinear Schr\"odinger equation}\label{sec:Well-Posedness}

We now discuss the local and global well-posedness of a generalized nonautonomous NLS equation with random (time-dependent) nonlinearities and potential.  Global 
well-posedness and possible occurence of blow up are addressed in \cite{dBD1} and \cite {dBD2}
for  the NLS equation with power nonlinearities and additive
or multiplicative random potential in the form of white noise.

Consider the problem corresponding to a generalized nonlinear Schr\"odinger equation
\begin{equation}
\label{eq:NANLSE}
i\partial_t \psi   = -\Delta\psi   + g(t,\psi  ; \omega), \; \psi(t=0)=\phi,
\end{equation}
where $\omega\in\Omega$ and $g$ contains both the potential and the nonlinearity. Here, $g$ also depends on $\bx\in\bbR^N,$ but we drop the explicit dependence when there is no danger of confusion. 

We say that $(q,r)$ is an admissible pair if 
\begin{align}
r &\in [2,\frac{2N}{N-2}), \; (r\in [ 2,\infty], N=1) \nonumber \\
\frac{2}{q}&=N(\frac{1}{2}-\frac{1}{r}) \label{eq:Admissible}
\end{align}
We make the following assumptions on $g.$

\begin{itemize}

\item[(C1)] The nonlinearity $g=g_1+\cdots +g_k$ with 
$$g_j\in C(\bbR, C(H^1,H^{-1})) \; {\mathbb P}-\mathrm{a.s.}, \; j=1,\cdots,k. $$

\item[(C2)] There exist admissible pairs $(q_j,r_j),$ $j=1,\cdots k,$ such that, for every $T,M>0,$ there exist a constant $C(M)$ independent of $T,$ and $\beta$ independent of $T$ and $M,$ such that
\begin{equation*}
\|g_j(t,u)-g_j(t,v)\|_{L^{r_j'}(\bbR^N)} \le C(M)(1+T^{\beta})\|u-v\|_{L^{r_j}(\bbR^N)} ,\; {\mathbb P}-\mathrm{a.s.},
\end{equation*}
for all $u,v\in H^1$ with $\|u\|_{H^1}+\|v\|_{H^1}\le M,$ and $|t|<T,$ where $r'$ is the conjugate of $r, \ie , 1/r+1/r'=1.$ Furthermore, 
\begin{equation*}
\|g_j(t,u)\|_{W^{1,r_j'}} \le C(M) (1+T^\beta)  (1+\|u\|_{W^{1,r_j}}), \; {\mathbb P}-\mathrm{a.s.},
\end{equation*}
for all $u\in H^1(\bbR^N)\cap W^{1,r}(\bbR^N)$ such that $\|u\|_{H^1}\le M$ and $|t|\le T.$

\item[(C3)] $$\Im g_j(t,u)\overline{u}=0, j=1,\cdots , k,\; {\mathbb P}-\mathrm{a.s.}$$ almost everywhere on $\bbR^N,$ for all $t\in \bbR$ and $u\in H^1.$

\item[(C4)] There exists a functional $G_j\in C(\bbR, C^1(H^1,\bbR))$ with $G_j'=g_j,$ where the prime stands for the Fr\'{e}chet derivative.  We let $G=G_1+\cdots G_k.$ For $u\in H^1,$  
\begin{equation}
\label{eq:GWP2}
|\partial_t G(t,u; \omega)|\le \overline{C}(\|u\|_{L^2})l(t)  \; {\mathbb P}-\mathrm{a.s.},
\end{equation}
where $\overline{C}$ depends only on $\|u\|_{L^2}$ and the real function $l\in L^{\infty}(\bbR)$ such that $l(t)\le 1$ for almost all $t\in \bbR.$

\item[(C5)] For all $M>0,$ there exists $C(M)>0$ and $\epsilon\in (0,1),$ both independent of $t\in \bbR,$ such that 
\begin{equation}
\label{eq:GWP1}
|G(t,u;\omega)|\le \frac{1-\epsilon}{2}\|u\|_{H^1}^2 +C(M), \;{\mathbb P}-\mathrm{a.s.},
\end{equation}
uniformly in $t\in\bbR, \forall u\in H^1,$ such that $\|u\|_{L^2}\le M.$ 
\end{itemize}

In what follows, we let $\overline{\Omega}\subset \Omega,$ with $\mu(\overline{\Omega})=1,$ denote the set over which (C1)-(C5) hold.  
We have the following result about local well-posedness almost surely in $H^1.$
 
\begin{proposition}\label{pr:LWP}
Suppose $g$ satisfies assumptions (C1)-(C3). Then the following holds.
\begin{itemize}
\item[(i)] For every $\phi\in H^1(\bbR^N),$ there exists ${\mathbb P}$-a.s. strong $H^1$-solution $u  $ of (\ref{eq:NANLSE}), which is defined on a maximal time interval $(-T_*,T^*),$ such that there exists a blow-up alternative, $\ie, $ if $T^*<\infty,$ $\esssup_{\omega\in\Omega}\|u  (t)\|_{H^1}\rightarrow \infty$ as $t\nearrow T^*,$ and if $T_*<\infty,$   $\esssup_{\omega\in\Omega}\|u  (t)\|_{H^1}\rightarrow \infty$ as $t\searrow -T_*.$ Moreover,
\begin{equation*}
u  \in L^a_{loc}((-T_*,T^*), W^{1,b}(\bbR^N)),\ \ \omega\in\overline{\Omega}
\end{equation*}
for all admissible pairs $(a,b).$ 
\item[(ii)] The charge is ${\mathbb P}$-a.s. conserved,
\begin{equation*}
\|u  (t)\|_{L^2} = \|\phi\|_{L^2}, \ \ \omega\in\overline{\Omega},
\end{equation*}
for all $t\in (-T_*,T^*).$ 
\item[(iii)] ${\mathbb P}$-a.s., $u$ depends continuously on $\phi \; :$ If $\phi_n\stackrel{n\rightarrow\infty}{\rightarrow} \phi$ in $H^1,$ and if $u_{ n}$ is the maximal solution of (\ref{eq:NANLSE}) corresponding to the initial condition $\phi_n,$ then $u_{ n}\stackrel{n\rightarrow\infty}{\rightarrow} u  $ in $C([-S,T], L^p(\bbR^N))$ for every compact interval $[-S,T]\subset (-T_*,T^*)$ and $p\in [2,\frac{2N}{N-2}) \; (p\in [2,\infty) , N=1).$ 

\end{itemize}
\end{proposition}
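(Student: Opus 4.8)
The plan is to run the classical Kato--Cazenave fixed-point scheme pathwise. I would fix $\omega\in\overline{\Omega}$, the full-measure set on which (C1)--(C3) hold, treat (\ref{eq:NANLSE}) as a deterministic nonautonomous equation, and recover the $\mathbb{P}$-a.s.\ and measurability statements at the end from the $\omega$-measurability of $g$. The starting point is the Duhamel representation
\[
\psi(t) = e^{it\Delta}\phi - i\int_0^t e^{i(t-s)\Delta} g(s,\psi(s);\omega)\, ds,
\]
together with the Strichartz estimates for the free propagator $e^{it\Delta}$: for any admissible pair $(q,r)$ in the sense of (\ref{eq:Admissible}) one has $\|e^{it\Delta}\phi\|_{L^q(I,L^r)}\le C\|\phi\|_{L^2}$ and $\|\int_0^t e^{i(t-s)\Delta}F(s)\,ds\|_{L^q(I,L^r)}\le C\|F\|_{L^{\tilde q'}(I,L^{\tilde r'})}$ for admissible $(\tilde q,\tilde r)$. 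Since an $H^1$ solution is sought, I would apply these simultaneously to $\psi$ and to $\nabla\psi$, working in
\[
X_T := C([-T,T];\Hone)\cap \bigcap_{j=1}^k L^{q_j}([-T,T];W^{1,r_j}(\bbR^N)),
\]
with $(q_j,r_j)$ the admissible pairs furnished by (C2).

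First I would set up the contraction, defining $\Phi\psi$ by the right-hand side of the Duhamel formula. Combining the second bound of (C2), $\|g_j(t,u)\|_{W^{1,r_j'}}\le C(M)(1+T^\beta)(1+\|u\|_{W^{1,r_j}})$, with the inhomogeneous Strichartz inequality and H\"older in time shows that $\Phi$ maps a ball of $X_T$ into itself; the first bound of (C2), the Lipschitz estimate $\|g_j(t,u)-g_j(t,v)\|_{L^{r_j'}}\le C(M)(1+T^\beta)\|u-v\|_{L^{r_j}}$, shows that $\Phi$ is a contraction there, provided $T=T(\|\phi\|_{H^1})$ is small. The Banach fixed-point theorem then yields a unique local strong $H^1$ solution, and re-inserting it into Duhamel and applying the full family of Strichartz bounds gives the membership $u\in L^a_{loc}((-T_*,T^*),W^{1,b})$ for every admissible $(a,b)$, which is part of (i).

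Next I would prove the conserved charge (ii): multiplying (\ref{eq:NANLSE}) by $\overline\psi$, integrating over $\bbR^N$, and taking imaginary parts gives $\tfrac12\tfrac{d}{dt}\|\psi\|_{L^2}^2=\Im\int g(t,\psi)\overline\psi$, which vanishes by (C3). Because the equation only holds in $H^{-1}$, this computation must be routed through a regularization scheme (mollified, hence $H^2$, data and nonlinearity, for which the pairing is classical), passing to the limit by means of the Strichartz stability estimates; this is exactly the device the paper signals for the related identities (\ref{eq:RatePotential})--(\ref{eq:Ehrenfest}). The maximal solution on $(-T_*,T^*)$ and the blow-up alternative then follow from the standard continuation argument, since the local existence time depends only on the $H^1$ norm of the data: were $\|\psi(t)\|_{H^1}$ to remain bounded along a sequence $t\nearrow T^*<\infty$, the solution could be continued past $T^*$, a contradiction.

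For continuous dependence (iii), I would apply the same Strichartz machinery to the difference $u_n-u$ with data $\phi_n\to\phi$. On a fixed compact $[-S,T]\subset(-T_*,T^*)$ the solution $u$ has finite Strichartz norms, so $[-S,T]$ may be partitioned into finitely many subintervals on which the relevant norms are small; on each the Lipschitz bound of (C2) yields a Gronwall inequality controlling $\|u_n-u\|_{C(\cdot;L^p)}$ by $\|\phi_n-\phi\|_{H^1}$, and chaining the pieces gives convergence on all of $[-S,T]$ for $p\in[2,2N/(N-2))$. The $\omega$-measurability of $u$ (and of $T^*,T_*$) follows because the Picard iterates are measurable and converge in $X_T$. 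I expect the genuine obstacle to be the rigorous justification of (ii) at the $H^1$ level: the pairing $\langle g(t,\psi),i\psi\rangle$ makes sense only by $H^{-1}$--$H^1$ duality, so one must verify that (C3) survives the regularization limit; by contrast the contraction and continuous-dependence steps, though technical, are the standard arguments once the admissible-pair bookkeeping of (C2) is in place.
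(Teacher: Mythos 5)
Your proposal is correct and follows essentially the same route as the paper, which disposes of this proposition by invoking Kato's method pathwise for $\omega\in\overline{\Omega}$ — Strichartz estimates plus a fixed-point argument, citing the deterministic case in \cite{A-S1} and \cite{Ka1,Ka2} — exactly the scheme you flesh out. The one standard refinement to keep in mind is that, since (C2) gives Lipschitz control only in the $L^{r_j}$ (not $W^{1,r_j}$) norms, the contraction must be run on a ball of your space $X_T$ equipped with the weaker metric $\sum_j \|\cdot\|_{L^{q_j}([-T,T];L^{r_j})}$, which is complete on that ball by a weak-compactness argument; with that bookkeeping your outline, including the regularization step for (ii) and the uniform-in-$\omega$ local existence time that underlies the $\esssup_{\omega}$ form of the blow-up alternative, matches the intended proof.
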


The proof of the above proposition is a direct extension of the deterministic case with $\omega\in\overline{\Omega}$; see the Appendix in \cite{A-S1} for a discussion of the latter. It is based on Kato's method, which relies on Strichartz estimates and a fixed point argument, \cite{Ka1,Ka2}.

Proving global well-posedness for data which are not necessarily small is a little bit more delicate, since energy is not conserved.
We now define the energy functional
\begin{equation}
\label{eq:Energy}
E(t,u;\omega):= \frac{1}{2}\int |\nabla u|^2 dx +G(t,u;\omega),
\end{equation}
for $u\in H^1(\bbR^N)$ and $\omega\in\Omega.$ Since the nonlinearity and the potential depend on time, the energy is not conserved. We have the following proposition.

\begin{proposition}\label{pr:EnergyUpperBd}
Suppose that (C1)-(C4) are satisfied, and let $u  $ denote the solution of (\ref{eq:NANLSE}) given by Proposition \ref{pr:LWP}. Then 
\begin{equation}
\label{eq:EnergyUpperBd}
|E(t,u  (t);\omega)| \le |E(0,\phi;\omega)| + T \overline{C}(\|\phi\|_{L^2}), \ \ {\mathbb P}-\mathrm{a.s.},
\end{equation}
for all $t\in [-T,T],$ where $[-T,T]$ is a compact subset of $(-T_*,T^*),$ and $\overline{C}(\|\phi\|_{L^2})$ appears in (C4). 
\end{proposition}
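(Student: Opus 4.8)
The plan is to show that, along the flow of (\ref{eq:NANLSE}), the energy changes only through its explicit time dependence, namely
\begin{equation*}
\frac{d}{dt}E(t,u(t);\omega) = (\partial_t G)(t,u(t);\omega), \qquad \omega\in\overline{\Omega},
\end{equation*}
and then to integrate this using (C4) together with the conservation of charge. To derive the identity I would write the equation in Hamiltonian form: denoting by $E'(t,u)=-\Delta u + g(t,u;\omega)$ the Fr\'echet derivative of $E(t,\,\cdot\,;\omega)$ in the field variable (the term $\tfrac12\int|\nabla u|^2$ contributing $-\Delta u$ and $G$ contributing $g=G'$), equation (\ref{eq:NANLSE}) becomes $\partial_t u=-iE'(t,u)$. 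The chain rule then gives, formally,
\begin{equation*}
\frac{d}{dt}E(t,u(t);\omega) = (\partial_t G)(t,u(t);\omega) - \langle E'(t,u), iE'(t,u)\rangle,
\end{equation*}
and the second term vanishes because $\langle w,iw\rangle=\Re\int w\,\overline{iw}=0$ for every $w\in L^2(\bbR^N)$; this is the infinitesimal form of energy conservation for the autonomous part of the flow.

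Granting this identity, the estimate is immediate. By (C4) the right-hand side is bounded in absolute value by $\overline{C}(\|u(t)\|_{L^2})\,l(t)$, ${\mathbb P}$-a.s., and by the conservation of charge established in Proposition \ref{pr:LWP}(ii) we have $\|u(t)\|_{L^2}=\|\phi\|_{L^2}$ throughout $(-T_*,T^*)$, so that $\overline{C}(\|u(t)\|_{L^2})=\overline{C}(\|\phi\|_{L^2})$ is constant along the trajectory. Using $l(t)\le 1$ and integrating over the interval between $0$ and $t$, for $|t|\le T$, yields
\begin{equation*}
|E(t,u(t);\omega)-E(0,\phi;\omega)| \le \Big|\int_0^{t}\overline{C}(\|\phi\|_{L^2})\,l(s)\,ds\Big| \le T\,\overline{C}(\|\phi\|_{L^2}),
\end{equation*}
and (\ref{eq:EnergyUpperBd}) follows from the triangle inequality.

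The main obstacle is making the differentiation rigorous, precisely as for the analogous identities (\ref{eq:RatePotential})--(\ref{eq:Ehrenfest}): at the $H^1$ level the pairing $\langle E'(t,u),\partial_t u\rangle$ is not a priori defined, since $\partial_t u=-iE'(t,u)\in H^{-1}$ while $-\Delta u\in H^{-1}$ as well, so the integration by parts $\langle-\Delta u,\partial_t u\rangle=\langle\nabla u,\nabla\partial_t u\rangle$ is only formal. I would make it rigorous by the regularization scheme used in Appendix B: approximate $\phi$ by smoother data $\phi_n\in H^2(\bbR^N)$ with $\phi_n\to\phi$ in $H^1$, for which persistence of regularity makes the corresponding solutions $u_n$ classical enough ($u_n\in C(I,H^2)\cap C^1(I,L^2)$) that the computation above is a genuine pointwise identity; the integrated bound then holds for $u_n$ with $\|\phi_n\|_{L^2}$ in place of $\|\phi\|_{L^2}$. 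One then passes to the limit $n\to\infty$ using the continuous dependence of Proposition \ref{pr:LWP}(iii), the strong $H^1$ convergence of the data (which gives $E(0,\phi_n;\omega)\to E(0,\phi;\omega)$ and $\overline{C}(\|\phi_n\|_{L^2})\to\overline{C}(\|\phi\|_{L^2})$), and the lower semicontinuity of the kinetic term $\tfrac12\|\nabla\,\cdot\,\|_{L^2}^2$ under weak $H^1$ convergence together with the continuity of $G(t,\,\cdot\,;\omega)$ on $H^1$, which suffices to preserve the one-sided bound (\ref{eq:EnergyUpperBd}) in the limit. Since the whole argument is carried out for fixed $\omega\in\overline{\Omega}$ with $\mu(\overline{\Omega})=1$, the resulting estimate holds ${\mathbb P}$-almost surely.
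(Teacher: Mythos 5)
Your formal identity $\tfrac{d}{dt}E(t,u(t);\omega)=(\partial_t G)(t,u(t);\omega)$ is exactly the identity underlying the paper's proof (it appears there in integrated form as (\ref{eq:RateH1})), and the integration step using (C4), charge conservation and $l\le 1$ is fine. The gap is in the route you take to make the differentiation rigorous. Your scheme hinges on persistence of regularity: that $H^2$ data $\phi_n$ produce solutions $u_n\in C(I,H^2)\cap C^1(I,L^2)$. But Proposition \ref{pr:LWP} is purely an $H^1$ theory, and the hypotheses (C1)--(C3) are abstract: $g$ is only assumed continuous from $H^1$ to $H^{-1}$ with the $L^{r_j'}$--$L^{r_j}$ Lipschitz bounds of (C2). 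Nothing in these assumptions guarantees that $g(t,\cdot)$ even maps $H^2$ into $L^2$, let alone that $H^2$ regularity propagates; for this abstract class of nonlinearities, persistence of regularity is an additional theorem that would itself require proof and very likely extra hypotheses. So the central step of your regularization is an unsupported assertion, not a reduction to something available in the paper.

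Even granting smooth approximating solutions, your limit passage does not yield the stated conclusion. The bound (\ref{eq:EnergyUpperBd}) is two-sided, since it controls $|E|$. Weak lower semicontinuity of $\tfrac12\|\nabla\cdot\|_{L^2}^2$ only gives $E(t,u(t);\omega)\le\liminf_n E(t,u_n(t);\omega)$, i.e. the \emph{upper} bound on $E$; the lower bound $E(t,u(t);\omega)\ge -\bigl(|E(0,\phi;\omega)|+T\overline{C}(\|\phi\|_{L^2})\bigr)$ would require $\limsup_n\|\nabla u_n(t)\|_{L^2}\le\|\nabla u(t)\|_{L^2}$, i.e. strong $H^1$ convergence of $u_n(t)$ to $u(t)$ at fixed $t$. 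Proposition \ref{pr:LWP}(iii) provides only convergence in $C([-S,T],L^p)$, $p<\tfrac{2N}{N-2}$, and not even uniform $H^1$ bounds for $u_n$ on $[-T,T]$, so neither strong nor (by itself) weak $H^1$ convergence is available. (A minor further point: (C4) asserts no continuity of $\overline{C}(\cdot)$, so $\overline{C}(\|\phi_n\|_{L^2})\to\overline{C}(\|\phi\|_{L^2})$ needs an arrangement such as normalizing $\|\phi_n\|_{L^2}=\|\phi\|_{L^2}$.) The paper sidesteps both problems by never leaving the $H^1$ solution: it expands $\|\nabla u(t)\|_{L^2}^2$ via the Duhamel formula, uses the Strichartz-type membership $u\in L^{q_j}_{loc}W^{1,r_j}$ from Proposition \ref{pr:LWP}(i) to make all pairings meaningful by duality, and justifies the pointwise identity $\Im\langle g(t,u),\Delta u\rangle=\tfrac{d}{dt}G(t,u)-(\partial_t G)(t,u)$ through the resolvent regularization $(1-\epsilon\Delta)^{-1}$, arriving at the exact identity (\ref{eq:RateH1}) for the $H^1$ solution itself, from which the two-sided bound follows directly. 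To salvage your approach you would need to add an $H^2$ well-posedness statement and $H^1$ continuous dependence --- precisely the machinery the paper's direct argument is designed to avoid.
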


\begin{proof}
Let $\omega\in\overline{\Omega}.$ Since (C1)-(C3) are satisfied, the results of Proposition \ref{pr:LWP} hold. We choose a finite $T>0$ such that $T<\min(T_*,T^*).$ We have from Proposition \ref{pr:LWP} that 
\begin{equation*}
u   \in L^a_{loc}((-T_*,T^*), W^{1,b}(\bbR^N)), \ \ \omega\in\overline{\Omega}
\end{equation*}
for all admissible pairs $(a,b).$ In particular, 
\begin{equation*}
u  \in L^{q_j}((-T,T), W^{1,r_j}(\bbR^N)), \; j=1,\cdots ,k, \ \ \omega\in\overline{\Omega}
\end{equation*}
where the admissible pairs $(q_j,r_j)$ appear in (C2). 

Since $\nabla$ commutes with the unitary propagator $U$ the $L^2$ norm is invariant under unitary transformations, we have using the Duhamel formula that
\begin{align*}
&\|\nabla u  (t)\|^2_{L^2} = \|\nabla U(0,t) u  (t) \|^2_{L^2} \\
&= \|\nabla \phi - i \int_0^t ds~ U(0,s) \nabla g(s,u  (s);\omega)\|^2_{L^2} \\
&= \|\nabla \phi\|_{L^2}^2 -2 \Im \langle \nabla \phi ,\int_0^tds~ U(0,s) \nabla g(s,u  (s);\omega) \rangle + \| \int_0^t ds~ U(0,s)\nabla g(s,u  (s);\omega) \|_{L^2}^2 \\
&= \|\nabla \phi\|_{L^2}^2 +2\Im \int_0^t ds~ \langle \nabla g(s,u  (s);\omega), U(s,0)\nabla \phi \rangle + \| \int_0^t ds~ U(0,s)\nabla g(s,u  (s);\omega) \|_{L^2}^2\\
&= \|\nabla\phi\|_{L^2}^2 + 2 \sum_{j=1}^k \Im \int_0^t ds~\langle \nabla g_j(s,u  (s);\omega) , \nabla u  (s)\rangle \\
&= \|\nabla \phi\|_{L^2}^2 - 2 \sum_{j=1}^k \Im \int_0^t ds~ \langle g_j(s,u  (s);\omega), \Delta u  (s)\rangle,\ \ \omega\in\overline{\Omega}
\end{align*}
where the scalar product is well-defined using (C2) and duality on 
\begin{equation*}
(L^1((-T,T),H^1)+L^{q'_j}((-T,T),H^{1,r_j'}))\times (L^\infty((-T,T),H^1)\cap L^{q_j}((-T,T),H^{1,r_j}))), 
\end{equation*}
$j=1,\cdots ,k,$ see for example \cite{BerLoef1}. Here, 
\begin{equation*}
H^{s,p}(\bbR^N):= \{u\in {\mathcal S}'(\bbR^N): \cF^{-1} (1+|k|^2)^{\frac{s}{2}}\cF u \in L^p(\bbR^N)\}, \; s\in\bbR, 1\le p\le \infty,
\end{equation*}
where ${\mathcal F}$ stands for the Fourier transform; and the space $H^{s,p}$ is equipped with the norm
\begin{equation*}
\|u\|_{H^{s,p}} = \|{\mathcal F}^{-1}(1+\|\bk\|^2)^{\frac{s}{2}} {\mathcal F} u\|_{L^p} , \; u\in H^{s,p}(\bbR^N).
\end{equation*}

Now,
\begin{align*}
\Im\langle g(t,u  (t);\omega) , \Delta u  (t) \rangle &= \lim_{\epsilon\searrow 0} \Im \langle (1-\epsilon\Delta)^{-1} g(t,u  (t);\omega) , (1-\epsilon \Delta)^{-1} \Delta u  (t)\rangle \\
&= \lim_{\epsilon\searrow 0} \Im \langle (1-\epsilon\Delta)^{-1} g(t,u(t)), (1-\epsilon\Delta)^{-1} (-i\partial_t u(t) + g(t,u(t))) \rangle \\
&= \Re\langle g(t,u  (t);\omega), \partial_t u  (t)\rangle \\
&=  \frac{d}{dt} G(t,u  (t);\omega) - (\partial_t G)(t,u  (t);\omega), \ \ \omega\in\overline{\Omega}
\end{align*}
for almost all $t\in (-T,T),$where $G$ appears in Assumption (C4). Therefore,
\begin{equation}
\label{eq:RateH1}
\|\nabla u  (t)\|_{L^2}^2 =\|\nabla \phi\|_{L^2}^2 - 2 G(t,u  (t);\omega)+2 G(0,\phi;\omega) +2\int_0^t ds~ (\partial_sG)(s,u  (s);\omega),
\end{equation}
for $\omega\in\overline{\Omega}.$
Together with (\ref{eq:Energy}), Assumption (C4) and conservation of charge, this implies
\begin{equation*}
|E(t,u  (t);\omega)| \le |E(0,\phi;\omega)| + T \overline{C}(\|\phi\|_{L^2}), 
\end{equation*}
for $t\in [-T,T]$ and all $\omega\in\overline{\Omega}.$   
\end{proof}

We have the following theorem about ${\mathbb P}$-a.s. global well-posedness in $H^1.$

\begin{theorem}\label{th:GWP}
Suppose (C1)-(C5) hold. Then the solution $u $ of (\ref{eq:NANLSE}) with initial condition $\phi\in H^1 (\bbR^N )$ is ${\mathbb P}$-a.s.  global in $H^1,$ i.e., for $\omega\in\overline{\Omega},$
$$T^*=T_*=\infty ,$$ where $T^*,T_*$ appear in Proposition \ref{pr:LWP}. 
\end{theorem}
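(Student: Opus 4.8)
The plan is to rule out finite-time blow-up by combining the blow-up alternative and the conservation of charge from Proposition \ref{pr:LWP} with the energy-growth bound of Proposition \ref{pr:EnergyUpperBd}, using the coercivity-type estimate (C5) to control the gradient. Fix $\omega\in\overline{\Omega}$ and suppose, for contradiction, that $T^*<\infty$ (the case $T_*<\infty$ is identical after reversing time). By part (i) of Proposition \ref{pr:LWP}, it suffices to show that $\|u(t)\|_{H^1}$ remains bounded on $[0,T^*),$ since the blow-up alternative asserts $\esssup_{\omega\in\Omega}\|u(t)\|_{H^1}\rightarrow\infty$ as $t\nearrow T^*.$

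First I would set $M:=\|\phi\|_{L^2}.$ By part (ii) of Proposition \ref{pr:LWP}, $\|u(t)\|_{L^2}=M$ for all $t,$ so (C5) applies uniformly along the trajectory. Rewriting the energy functional (\ref{eq:Energy}) as
$$
\|\nabla u(t)\|_{L^2}^2 = 2E(t,u(t);\omega) - 2G(t,u(t);\omega),
$$
I would bound $-2G(t,u(t);\omega)\le 2|G(t,u(t);\omega)|\le (1-\epsilon)\|u(t)\|_{H^1}^2 + 2C(M)$ via (C5), and then use $\|u(t)\|_{H^1}^2 = M^2 + \|\nabla u(t)\|_{L^2}^2.$ This produces
$$
\epsilon\,\|\nabla u(t)\|_{L^2}^2 \le 2|E(t,u(t);\omega)| + (1-\epsilon)M^2 + 2C(M),
$$
so the strictly positive factor $\epsilon$ coming from (C5) is precisely what allows the gradient term to be absorbed onto the left-hand side.

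Next, for any finite $T<T^*,$ Proposition \ref{pr:EnergyUpperBd} gives $|E(t,u(t);\omega)|\le |E(0,\phi;\omega)| + T\,\overline{C}(\|\phi\|_{L^2})$ for $t\in[0,T].$ Combining this with the previous display shows that $\|\nabla u(t)\|_{L^2}^2,$ and hence $\|u(t)\|_{H^1}^2 = M^2 + \|\nabla u(t)\|_{L^2}^2,$ is bounded uniformly on $[0,T]$ by a constant depending only on $T,$ $\|\phi\|_{H^1},$ and the constants in (C4)--(C5). Letting $T\nearrow T^*$ keeps this bound finite, contradicting the blow-up alternative and forcing $T^*=\infty;$ the same argument backward in time yields $T_*=\infty.$ Since $\mu(\overline{\Omega})=1,$ this gives ${\mathbb P}$-a.s. global well-posedness.

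The only delicate point, and the main obstacle one must watch, is that Proposition \ref{pr:EnergyUpperBd} provides only a bound that grows \emph{linearly} in $T$ rather than a conserved quantity, precisely because the energy is not conserved when $g$ is time-dependent. This linear growth is nevertheless harmless: on any compact interval $[0,T]$ it remains finite, so the a priori $H^1$-bound never degenerates before $T^*,$ and the contradiction with the blow-up alternative goes through. The role of assumption (C4) is exactly to guarantee this at-most-linear-in-$T$ growth, while (C5) supplies the coercivity needed to convert an energy bound into an $H^1$ bound.
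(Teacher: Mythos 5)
Your proof is correct and follows essentially the same route as the paper's: contradiction via the blow-up alternative, with charge conservation fixing $M=\|\phi\|_{L^2}$ so that (C5) absorbs $|G|$ into the $H^1$ norm, and Proposition \ref{pr:EnergyUpperBd} supplying the at-most-linear-in-$T$ energy bound, which stays finite when $T^*<\infty$. The only differences are cosmetic (you isolate $\epsilon\|\nabla u\|_{L^2}^2$ and let $T\nearrow T^*$, while the paper bounds $\tfrac12\|u\|_{H^1}^2$ directly with $T^*$ in the constant), and your observation that the final bound depends only on $\omega$-independent quantities is exactly what lets one pass from a fixed $\omega\in\overline{\Omega}$ to the essential supremum appearing in the blow-up alternative.
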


\begin{proof}
Let $\omega\in\overline{\Omega}.$ Propositions \ref{pr:LWP} and \ref{pr:EnergyUpperBd} follow from (C1)-(C4). It suffices to show that $\|u  (t)\|_{H^1}, \; t\in [0,T^*)$ is finite if $T^*<\infty,$ which, together with the blow-up alternative, implies a contradiction. Suppose $T^*<\infty.$ 
Assumption (C5), charge conservation and (\ref{eq:EnergyUpperBd}) imply that
\begin{align*}
&\frac{1}{2}\|u  (s)\|^2_{H^1} = \frac{1}{2}(\|u  (s)\|^2_{L^2}+\|\nabla u  (s)\|^2_{L^2})\\
& \le \frac{1}{2}\|u  (s)\|^2_{L^2} + |E(s,u  (s);\omega)| + |G(s,u  (s);\omega)| \\
& < \frac{1}{2} \|\phi\|^2_{L^2} + |E(0,\phi;\omega)| + T^*\overline{C}(\|\phi\|_{L^2}) + \frac{1-\epsilon}{2}\|u  (s)\|^2_{H^1} + C(\|\phi\|_{L^2}) , 
\end{align*}
for all $s\in [0,T^*),$ $\omega\in\overline{\Omega},$
and hence 
\begin{equation*}
\sup_{\omega\in\overline{\Omega}}\sup_{s\in [0,T^*)}\|u  (s)\|_{H^1} <\infty,  
\end{equation*}
for finite $T^*,$ which contradicts the blow-up alternative. The case of $T_*$ is proven similarly. 

\end{proof}

\section{Appendix B: Rate of change of field energy and momenta}\label{app:Ehrenfest}

\begin{proof}[Proof of (\ref{eq:RatePotential}), Section \ref{sec:ProofMain3}] Differentiating $\langle \psi, \lambda V_h \psi\rangle$ with respect to $t$ and using (\ref{eq:NLSE}), Assumption (A1), and the fact that $\psi\in H^1(\bbR^N),$ we have, for $\omega\in\overline{\Omega},$ 
\begin{align*}
\partial_t \langle \psi, \lambda V_h \psi\rangle &= \langle \psi, \lambda\partial_t V_h \psi \rangle + \langle \partial_t \psi, \lambda V_h \psi \rangle  + \langle \psi,\lambda V_h \partial_t\psi\rangle \\ 
&= \langle \psi,\lambda \partial_t V_h \psi \rangle + \langle -\Delta \psi + \lambda V_h \psi -f(\psi), i\lambda V_h\psi\rangle \\ &+ \langle i\psi , \lambda V_h (-\Delta \psi +\lambda V_h\psi -f(\psi))\rangle \\
&= \langle \psi,\lambda \partial_t V_h \psi \rangle + 2\langle i\lambda\nabla V_h\psi, \nabla\psi\rangle ,
\end{align*}
where we have used integration by parts in the last step. 
\end{proof}

\begin{proof}[Proof of (\ref{eq:RateEnergy}), Section \ref{sec:ProofMain3}] The proof of (\ref{eq:RateEnergy}) follows directly from (\ref{eq:RateH1}) in the proof of Proposition \ref{pr:EnergyUpperBd}, Appendix A, with the identification
\begin{equation*}
g(t,u) = \lambda V_h(t)u - f(u).
\end{equation*}
\end{proof}

\begin{proof}[Proof of (\ref{eq:Ehrenfest}), Section \ref{sec:ProofMain3}] We use a regularization scheme similar to the one used in Proposition \ref{pr:EnergyUpperBd}, Appendix A. Let $I_\epsilon := (1-\epsilon\Delta)^{-1}.$ For all $\omega\in\overline{\Omega},$ we have that 
\begin{align*}
&\partial_t \langle i\psi ,\nabla \psi \rangle = \partial_t \lim_{\epsilon\searrow 0} \langle I_\epsilon i \psi, I_\epsilon \nabla \psi \rangle \\
&= \lim_{\epsilon\searrow 0}\{ \langle I_\epsilon i\partial_t \psi , I_\epsilon \nabla\psi \rangle + \langle I_\epsilon i\psi, I_\epsilon \nabla \partial_t \psi\rangle \} \\
&= \lim_{\epsilon \searrow 0} \{ \langle I_\epsilon (-\Delta \psi + \lambda V_h\psi -f(\psi)), I_\epsilon \nabla \psi\rangle  - \langle I_\epsilon \psi, I_\epsilon \nabla (-\Delta \psi + \lambda V_h\psi -f(\psi)) \rangle \}\\
&=\lambda \langle \psi, V_h \nabla \psi\rangle - \lambda\langle \psi, \nabla (V_h\psi)\rangle \\
&= - \lambda \langle \psi, \nabla V_h\psi\rangle .
\end{align*}
\end{proof}


\end{document}